\newtheorem{theorem}{{Theorem}}[section]
\newtheorem{proposition}[theorem]{{Proposition}}
\newtheorem{lemma}[theorem]{{Lemma}}
\newtheorem{corollary}[theorem]{{Corollary}}
\theoremstyle{definition}
\newtheorem{definition}[theorem]{{Definition}}
\theoremstyle{remark}
\newtheorem{remark}[theorem]{{Remark}}
\newtheorem{remarks}[theorem]{{Remarks}}
\newcommand{\NN}{\mathbb{N}}
\newcommand{\QQ}{\mathbb{Q}}
\newcommand{\RR}{\mathbb{R}}
\newcommand{\ZZ}{\mathbb{Z}}
\def\cA{{\mathcal A}}   \def\cM{{\mathcal M}} 
\def\cB{{\mathcal B}}    
   \def\cO{{\mathcal O}} \def\cU{{\mathcal U}}
    \def\cV{{\mathcal V}}
  \def\cK{{\mathcal K}}
\renewcommand{\epsilon}{\varepsilon}
\newcommand{\ba}{\overline{a}}
\newcommand{\1}{\mbox{\small{I}}}
\newcommand{\2}{\mbox{\small{II}}}
\newcommand{\6}{\mbox{\small{VI}}_0}
\newcommand{\7}{\mbox{\small{VII}}_0}
\newcommand{\8}{\mbox{\small{VIII}}}
\newcommand{\9}{\mbox{\small{IX}}}
\newcommand{\wC}{\widehat{C}}
\title[Aperiodic oscillatory asymptotic behavior for some Bianchi spacetimes]{Aperiodic oscillatory asymptotic behavior \\for some Bianchi spacetimes}
\author{Fran\c cois B\'eguin}
\address{Laboratoire de
Math\'ematiques (UMR 8628), Univ. Paris Sud 11, 91405 Orsay, France.}
\email{Francois.Beguin@math.u-psud.fr} 
\thanks{Work supported by ANR project GEODYCOS}
\keywords{}
\subjclass{}
\begin{document}

\sloppy

\begin{abstract}
We study the asymptotic behavior of vacuum Bianchi type A spacetimes close to their singularity. It has been conjectured that this behavior is driven by a certain circle map, called the \emph{Kasner map}. As a step towards this conjecture, we prove that some orbits of the Kasner map do indeed attract some solutions of the system of ODEs which describes the behavior of vacuum Bianchi type A spacetimes. The orbits of the Kasner map for which we can prove such a result are those which are not periodic and do not accumulate on any periodic orbit. This shows the existence of Bianchi spacetimes  with an aperiodic oscillatory asymptotic behavior. 
\end{abstract}

\maketitle

\section{Introduction}

\subsection{Bianchi spacetimes, the Wainwright-Hsu vector field and the Kasner map} A \emph{Bianchi spacetime} is a cosmological spacetime which is spatially homogeneous.  More precisely, it is a spacetime $(M,g)$ such that $M$ is diffeomorphic to the product $G\times I$ of a simply connected three-dimensional Lie group $G$ and an interval $I$ of the real line, and  $g=h_t-dt^2$ where $h_t$ is a left-invariant riemannian metric on $G\times\{t\}\simeq G$ for every $t\in I$.  A \emph{vacuum} Bianchi spacetime is a Bianchi spacetime $(M,g)$ satisfying the vacuum Einstein equation $\mbox{Ric}_g=0$. A \emph{type A} Bianchi spacetime is a Bianchi spacetime for which the corresponding three dimensional Lie group $G$ is unimodular. Note that the Lie group $G$ can be assumed to be simply connected without loss of generality.

A Bianchi spacetime can be described as a one-parameter family $(h_t)_{t\in I}$ of left-invariant riemannian metrics on a three-dimensional Lie group $G$. The space of left-invariant riemannian metrics on a given Lie group is finite-dimensional. Therefore, when restricted to the context of Bianchi spacetimes, the vacuum Einstein equation becomes an ODE on a finite dimensional phase space. A classical and convenient way to write this ODE is to use the so-called \emph{Wainwright-Hsu variables} $(\Sigma_1,\Sigma_2,\Sigma_3,N_1,N_2,N_3)$. Very roughly, $N_1,N_2,N_3$ are the structure constants of the Lie algebra of the Lie group $G\times\{t\}$ in a certain basis, and $\Sigma_1,\Sigma_2,\Sigma_3$ are the components of the normalized traceless second fundamental form of $G\times\{t\}$ in the same basis. The corresponding phase space is the four-dimensional manifold
$$\cB:=\left \{(\Sigma_1,\Sigma_2,\Sigma_3,N_1,N_2,N_3)\in\RR^6 \mid \Sigma_1+\Sigma_2+\Sigma_3=0\,,\, \Omega=0 \right \} $$
 where 
$$\Omega = 6-(\Sigma_1^2+\Sigma_2^2+\Sigma_3^2)+\frac{1}{2}(N_1^2+N_2^2+N_3^2)-(N_1N_2+N_1N_3+N_2N_3).$$
The vacuum Einstein equation is equivalent to the system of ODEs
\begin{equation}
\label{e.Wainwright-Hsu}
\left\{
\begin{array}{rcl}
\Sigma_1' & = & (2-q) \Sigma_1 - R_1\\
\\
\Sigma_2' & = & (2-q) \Sigma_2 - R_2\\
\\
\Sigma_3' & = & (2-q) \Sigma_3 - R_3\\
\\
N_1' & = & -(q+2\Sigma_1)  N_1 \\
\\
N_2' & = & -(q+2\Sigma_2)  N_2\\
\\
N_3' & = & -(q+2\Sigma_3)  N_3
\end{array}
\right.
\end{equation}
where 
$$q  =  \frac{1}{3}\left(\Sigma_1^2+\Sigma_2^2+\Sigma_3^2\right)$$
and
$$\left\{
\begin{array}{rcl}
R_1 & = & \frac{1}{3} \left(2N_1^2-N_2^2-N_3^3+2N_2N_3-N_1N_3-N_1N_2\right)\\
\\
R_2 & = & \frac{1}{3} \left(2N_2^2-N_3^2-N_1^3+2N_3N_1-N_2N_1-N_2N_3\right)\\
\\
R_3 & = & \frac{1}{3} \left(2N_3^2-N_1^2-N_2^3+2N_1N_2-N_3N_2-N_3N_1\right)
\end{array}
\right.$$
(see e.g. \cite{HeinzleUggla2009} for the construction of the Wainwright-Hsu variables and the expression of the Einstein equation in these variables). In other words, vacuum Bianchi type A spacetimes can be seen as the solutions of  the system of ODEs~\eqref{e.Wainwright-Hsu} on the four-dimensional manifold $\cB$.  We will call \emph{Wainwright-Hsu vector field}, and denote by $X_{\cB}$, the vector field on $\cB$ associated to the system of ODEs~\eqref{e.Wainwright-Hsu}. We will denote by $X_{\cB}^t$ the time $t$ map of the flow of $X_{\cB}$. To measure  the distances on the phase space $\cB$, we will use the riemannian metric 
$h=(d\Sigma_1)^2+(d\Sigma_2)^2+(d\Sigma_3)^2+(dN_1)^2+(dN_2)^2+(dN_3)^2$. 

\begin{remark}
We have chosen the ``anti-physical" time orientation. With this convention, Bianchi spacetimes are future incomplete, but not necessarily past incomplete. The main reason for this choice is that we want the so-called \emph{mixmaster attractor} (see below) to be an attractor, rather than a repellor.
\end{remark}

The phase space $\cB$ admits a natural stratification, which is invariant under the flow of the Wainwright-Hsu vector field $X_{\cB}$. There are six strata denoted by $\cB_{\1},\cB_{\2},\cB_{\6},\cB_{\7},\cB_{\8},\cB_{\9}$. These strata correspond to the different possible signs for the variables $N_1$, $N_2$, $N_3$. They also correspond to the different (isomorphism class of) simply connected unimodular three-dimensional Lie groups. The orbits of $X_{\cB}$ contained in $\cB_{\1}$ are called type $\1$ orbits, the orbits contained in $\cB_{\2}$ are called type $\2$ orbits, etc. The behavior of the type $\1$, $\2$, $\6$ and $\7$ orbits of $X_{\cB}$ is very well understood. On the contrary, the behavior of the type $\8$ and $\9$ orbits (which are the generic orbits in $\cB$) is, at best, conjectural. In order to simplify the discussion we will focus our attention on the subset $\cB^+$ of $\cB$ where the coordinates $N_1,N_2,N_3$ are non-negative:
$$\cB^+:=\{(\Sigma_1,\Sigma_2,\Sigma_3,N_1,N_2,N_3)\in\cB \mid N_1\geq 0\;,\;N_2\geq 0\;,\;N_3\geq 0\}.$$
Observe that this subset is invariant under the flow of $X_{\cB}$. 

The stratum $\cB_{\1}$ is more frequently denoted by $\cK$. This is a euclidean circle in $\cB\subset\RR^6$, called the \emph{Kasner circle}. It is made of the points of $\cB$ where $N_1=N_2=N_3=0$. This corresponds to the case where the Lie group $G$ is abelian (\emph{i.e.} $G=\RR^3$).
\begin{eqnarray}
\label{e.Kasner-circle}
\quad \cK & = & \{(\Sigma_1,\Sigma_2,\Sigma_3,N_1,N_2,N_3)\in\cB \mid N_1=N_2=N_3=0\}\\
\nonumber & = & \{(\Sigma_1,\Sigma_2,\Sigma_3,N_1,N_2,N_3)\in\RR^6 \mid N_1=N_2=N_3=0\,, \\
\nonumber & & \quad\quad\quad\quad\quad\quad\quad\quad\quad\quad\quad\quad\quad\quad   \Sigma_1+\Sigma_2+\Sigma_3=0 \;,\; \Sigma_1^2+\Sigma_2^2+\Sigma_3^2=6\}
\end{eqnarray}
The vector field $X_{\cB}$ vanishes on $\cK$. Therefore every type $\1$ orbit is a fixed point. There are three points on $\cK$, called the \emph{special points} or the \emph{Taub points}, which play a very important role in every attempt to understand the behavior of the solution of the Wainwright-Hsu equations; these are the points for which $(\Sigma_1,\Sigma_2,\Sigma_3)$ is equal respectively to $(2,-1,-1)$, $(-1,2,-1)$ and $(-1,-1,2)$; these points are usually denoted by $T_1,T_2,T_3$. For every point $p\in\cK$, the derivative $DX_{\cB}(p)$ has four distinct eigendirections: the direction $T_p\cK$ of the Kanser circle corresponds to a zero eigenvalue (since $X_{\cB}$ vanishes on~$\cK$), and the directions $\RR.\frac{\partial}{\partial N_1}(p)$, $\RR.\frac{\partial}{\partial N_2}(p)$, $\RR.\frac{\partial}{\partial N_3}(p)$ are respectively associated to the eigenvalues $-(2+2\Sigma_1)$, $-(2+2\Sigma_2)$, $-(2+2\Sigma_3)$.  When $p$ is one of the three special points $T_1,T_2,T_3$, two of the three eigenvalues $-(2+2\Sigma_1),-(2+2\Sigma_2),-(2+2\Sigma_3)$ vanish; hence, the derivative of $DX_{\cB}(p)$ has a triple zero eigenvalue. If $p$ is not one of the three special points $T_1,T_2,T_3$, then the eigenvalues $-(2+2\Sigma_1),-(2+2\Sigma_2),-(2+2\Sigma_3)$ are pairwise distinct, two of them are negative, and the other one is positive; in other words, the derivative $DX_{\cB}(p)$ has two distinct negative eigenvalue, a multiplicity one eigenvalue, and a positive eigenvalue. 

The stratum $\cB_{\2}$ is two-dimensional. It is made of the points of $\cB$ for which exactly two of the $N_i$'s vanish. This corresponds to the case where the Lie group $G$ is isomorphic to the Heisenberg group.
$$\begin{array}{rcl}
\cB_{1}\cup\cB_{\2} & = & \{(\Sigma_1,\Sigma_2,\Sigma_3,N_1,N_2,N_3)\in\cB \mid N_1=N_2=0 \mbox{ or }N_1=N_3=0\mbox{ or }N_2=N_3=0\}
\end{array}$$
One can easily check that $\cB_{\1}\cup\cB_{\2}$ is the union of three ellipsoids which intersect along the Kasner circle. The  vector field $X_{\cB}$ commutes with the transformation $(N_1,N_2,N_3)\mapsto (-N_1,-N_2,-N_3)$, so we can work in $\cB_{\2}\cap\cB^+$ to simplify the discussion. The intersection  $\cB_{\2}\cap\cB^+$ is the union of three disjoint hemi-ellipsoids bounded by the Kasner circle. Type $\2$ orbits can be calculated explicitly (see for example~\cite{HeinzleUggla2009}). It appears that every type $\2$ is a heteroclinic orbit connecting a point of $\cK\setminus\{T_1,T_2,T_3\}$ to another point of $\cK\setminus\{T_1,T_2,T_3\}$. Moreover, for every point $p\in \cK\setminus\{T_1,T_2,T_3\}$, there is exactly one type $\2$ orbit in $\cB^+$ which ``takes off" at $p$ (this orbit is tangent to the eigendirection associated to the positive eigenvalue of $DX_{\cB}(p)$) and two type $\2$ orbits in $\cB^+$ which ``land" at $p$ (these orbits are tangent to the eigendirections associated to the negative eigenvalues of $DX_{\cB}(p)$). 

The Kasner map $f:\cK\to\cK$ is defined as follows. Consider a point $p\in\cK$. If $p$ is one of the three special points $T_1,T_2,T_3$, then $f(p)=p$. Otherwise, one considers the unique type $\2$ orbit which ``springs up" at $p$; this orbit connects $p$ to another point $p'\in\cK$; this point $p'$ is by definition the image of $p$ under $f$. For every $p\in\cK\setminus\{T_1,T_2,T_3\}$, we will denote by $\cO_{p,f(p)}$ the unique type $\2$ orbit connecting the point $p$ to the point $f(p)$ in $\cB^+$. Since there are exactly two type $\2$ orbits in $\cB^+$ which land at a given point $p\in\cK$, the Kasner map is two-to-one. The exact computation of the type $\2$ orbits of $X_\cB$ yields a nice geometric description of the Kasner map $f$. Consider the equilateral triangle which is tangent to $\cK$ at the three special points $T_1,T_2,T_3$. Denote by $M_1,M_2,M_3$ the vertices of this triangle. For $p\in\cK\setminus\{T_1,T_2,T_3\}$, let $M_i$ be the vertex of $\cK$ which is the closest to~$p$. The  line $(p\,M_i)$ and intersects the circle $\cK$ at two points: the point $p$ and the point $f(p)$. See figure~\ref{f.Kasner-map}. Using this geometric description, it is easy to see $f$ is $C^1$. Moreover, the Kasner map $f$ is non-uniformly expanding. Indeed, the norm of the derivative of $f$ (calculated with respect to the metric on $\cK$ induced by the riemannian metric $h$) at a point $p\in\cK$ is equal to $1$ if $p$ is one of the three special points $T_1,T_2,T_3$, and is strictly bigger than $1$ if $p$ is not one of these three special points. It follows that, for every compact set $C$ in $\cK\setminus\{T_1,T_2,T_3\}$ there exists a constant $\nu_C>1$ such that $||| Df(p)|||_h\geq \nu_C$ for every $p$ in $C$. 

\begin{figure}
\centerline{\includegraphics[height=6cm]{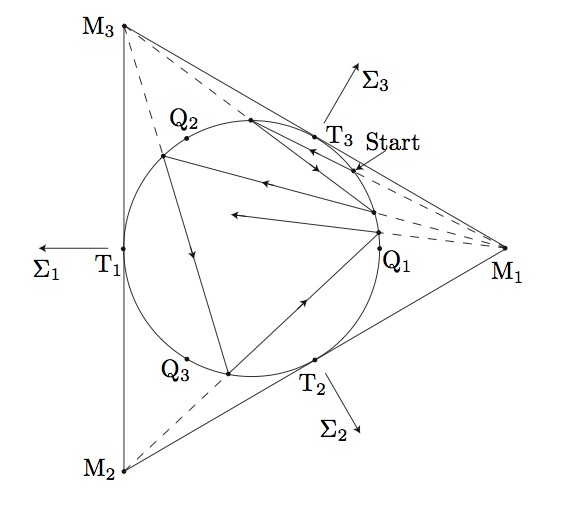}}
\caption{\label{f.Kasner-map}Construction of an orbit of the Kasner map}
\end{figure}

The strata $\cB_{\6}$ and $\cB_{\7}$ are three-dimensional. The stratum $\cB_{\6}$ is made of the points of $\cB$ for which exactly one of the $N_i$'s is equal to zero, the two others being of different signs. This corresponds to the case where the Lie group $G$ is isomorphic to $\mbox{Isom}(\RR^{1,1})=\mbox{O}(1,1)\ltimes\RR^2$. This stratum is disjoint from $\cB^+$. The stratum $\cB_{\7}$ is made of the points of $\cB$ for which exactly one of the $N_i$'s is equal to zero, the two others being of the same sign. This corresponds to the case where the Lie group $G$ is isomorphic to $\mbox{Isom}(\RR^2)=O(2)\ltimes\RR^2$.

Finally the strata $\cB_{\8}$ and $\cB_{\9}$ are four-dimensional (these are open subsets of $\cB$). The stratum $\cB_{\8}$ is made of the points of $\cB$ for which the $N_i$'s are non-zero and do not have the same sign. It corresponds  to the case where the Lie group $G$ is isomorphic to the universal cover of $\mbox{SL}(2,\RR)$. It is disjoint from $\cB^+$. The stratum $\cB_{\9}$ is made of the points of $\cB$ for which the $N_i$'s are either all positive, or all negative. This corresponds  to the case where the Lie group $G$ is isomorphic to $\mbox{SO}(3,\RR)$.

\subsection{Statement of the main results of the paper}
Misner has conjectured that the dynamics of type $\9$ orbits of the Wainwright-Hsu vector field $X_{\cB}$ ``is driven" by the Kasner map (\cite{Misner1969}). The idea is that every type $\9$ orbit should eventually approach the so-called \emph{mixmaster attractor} $\cA:=\cK\cup\cB_{\2}$, and then ``follow" a heteroclinic chain \hbox{$q\to \cO_{q,f(q)}\to f(q)\to \cO_{f(q),f^2(q)}\to\dots$}. Conversely, every heteroclinic chain as above should attract some type $\9$ orbits of $X_{\cB}$. Moreover,   ``generic" type $\9$ orbits should be attracted by ``generic" heteroclinic chains. This would imply that the behavior of generic type $\9$ orbits of $X_{\cB}$ is determined by the behavior of generic orbits of the Kasner map. See~\cite{HeinzleUggla2009} for a detailed discussion of various possible precise  statements for this conjecture.

In 2001, Ringstr\"om proved that $\cA:=\cK\cup\cB_{\2}$ is indeed an global attractor: the distance from almost every type~$\9$ orbit of the Wainwright-Hsu vector field $X_{\cB}$ to $\cA$ tends to $0$ as the time goes to $\infty$ (\cite{Ringstrom2001}, see also~\cite{HeinzleUggla2009bis}). Ringstr\"om's theorem has important consequences, such as the divergence of the curvature in Bianchi spacetimes as one approaches their past (or future) singularity. Nevertheless, this theorem does not tell anything on the relation between the dynamics of Bianchi orbits and the dynamics the Kasner map. The purpose of the present paper is to prove that ``there are many points $q\in\cK$ such that the heteroclinic chain $q\to \cO_{q,f(q)}\to f(q)\to \cO_{f(q),f^2(q)}\to\dots$ does attract some type $\9$ orbits of $X_{\cB}$".

\begin{definition}
\label{d.stable-manifold}
For $q\in\cK$, we call \emph{stable manifold of $q$}, and we denote by $W^s(q)$, the set of all points $r\in\cB$ for which there exists an increasing sequence of real numbers $(t_n)_{n\geq 0}$ such that 
$$\mbox{dist}(X_{\cB}^{t_n}(r),f^n(q))\mathop{\longrightarrow}_{n\to \infty} 0$$
and such that the Hausdorff distance between the piece of orbit $\{X_{\cB}^t(r) \; ;\; t_n\leq t\leq t_{n+1}\}$ and the type$\2$ heteroclinic orbit $\cO_{f^n(q),f^{n+1}(q)}$  tends to $0$ as $n$ goes to $+\infty$.
\end{definition}

A subset $C$ of $\cK$ is said to be \emph{ forward-invariant} if $f(C)\subset C$. It is said to be \emph{aperiodic} if it does not contain any periodic orbit for $f$.  Our main result can be stated as follows: 
 
\begin{theorem}
\label{t.main}
If $q\in\cK$ is contained in a closed  forward-invariant aperiodic subset of $\cK$, then the intersection of the stable manifold $W^s(q)$ with $\cB_{\9}$ is non-empty. More precisely, $W^s(q)\cap\cB_{\9}$ contains a $C^1$ embedded three-dimensional closed disc. This disc can be chosen to depend continuously on $q$ (for the $C^1$ topology on the space of $C^1$ embeddings of the closed unit disc of $\RR^3$ in $\cB$) when $q$ ranges in a closed  forward-invariant aperiodic subset of $\cK$. 
\end{theorem}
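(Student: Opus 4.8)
The plan is to reduce the statement to the study of the flow of $X_\cB$ in a neighbourhood of the heteroclinic chain $q\to\cO_{q,f(q)}\to f(q)\to\cO_{f(q),f^2(q)}\to\cdots$, and to produce the stable disc by a Hadamard--Perron (graph transform) argument adapted to this non-autonomous, non-uniformly hyperbolic setting.

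\textbf{Reduction to transition maps.} First I would record that, since $q$ lies in a closed forward-invariant \emph{aperiodic} subset $C$ of $\cK$ and the Taub points $T_1,T_2,T_3$ are the only fixed points of $f$, the set $C$ --- hence the closure of the forward orbit $(f^n(q))_{n\ge0}$ --- is contained in a fixed compact subset of $\cK\setminus\{T_1,T_2,T_3\}$; by continuity of the type $\2$ orbits in their endpoints, the heteroclinic orbits $\cO_{f^n(q),f^{n+1}(q)}$ likewise stay in a fixed compact subset of $(\cK\cup\cB_{\2})\setminus\{T_1,T_2,T_3\}$. On this compact set the three eigenvalues $-(2+2\Sigma_i)$ of $DX_\cB$ along $\cK$ are bounded away from $0$ (two negative, one positive at each point) and the Kasner map is uniformly expanding, $|||Df|||_h\ge\nu_C>1$. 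I would then choose, for every $n$, a small $3$-dimensional disc $\Sigma_n$ transverse to $X_\cB$ at a point $r_n$ in the interior of $\cO_{f^n(q),f^{n+1}(q)}$, and study the transition map $\phi_n$ from (a neighbourhood of $r_n$ in) $\Sigma_n$ to $\Sigma_{n+1}$, obtained by following the flow past the Kasner point $f^{n+1}(q)$. The submanifold $\cA$ cuts $\Sigma_n$ along a curve; $\phi_n$ sends this curve into $\cA\cap\Sigma_{n+1}$ and is conjugate there to a branch of the Kasner map $f$, hence expands by at least $\nu_C$. The two remaining directions of $\Sigma_n$, transverse to $\cA$, are the two ``switched-off'' $N_i$-directions; near the Kasner point they carry the two negative eigenvalues, so a deep passage contracts them.

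\textbf{Local estimates and an invariant cone field.} The core of the proof is a precise description of $\phi_n$. Near each Kasner point I would put the flow into a convenient (normal) form --- a centre-manifold reduction is forced by the zero eigenvalue along $\cK$: the centre manifold is $\cK$ itself, the centre-unstable manifold is the relevant sheet of $\cA$, the centre-stable manifold is $3$-dimensional --- and then integrate the equations $N_i'=-(q+2\Sigma_i)N_i$ to control the (long) passage time and the distortion of $\phi_n$ in terms of the ``depth'' $\delta_n$ at which the orbit crosses a fixed neighbourhood of $f^{n+1}(q)$. Composing with the bounded-time, bounded-distortion transition along $\cO_{f^{n+1}(q),f^{n+2}(q)}$, I would establish that $D\phi_n$ admits a dominated splitting $T\Sigma_n=e^u_n\oplus E^s_n$, with $e^u_n$ the one-dimensional direction tangent to $\cA\cap\Sigma_n$ (expanded by at least $\nu_C$) and $E^s_n$ the two-dimensional direction transverse to $\cA$ (contracted, with the product of the contraction rates tending to $0$ along the chain). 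The delicate point, and the one I expect to be the real obstacle, is that the direction inside $E^s_n$ which is ``about to switch on'' at $f^{n+1}(q)$ gets strongly amplified there, but \emph{into the flow direction}, which is killed by passing to $\Sigma_{n+1}$; one must show that the residual coupling of $E^s_n$ into $e^u_{n+1}$ is quantitatively dominated by $\nu_C$, uniformly in $n$. The estimates involved depend logarithmically on the depths $\delta_n$ and must be iterated along the orbit, which is why the argument needs $q$ to avoid not only the Taub points (where $\nu_C$ degenerates --- automatic since $C$ is aperiodic) but, more subtly, any recurrence towards a periodic cycle.

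\textbf{Graph transform, the disc, and continuity.} With the cone field available, for each $N$ the set of points of $\Sigma_0$ whose $\phi$-orbit stays in the chosen neighbourhoods up to time $N$ is a ``slab'' $D^s\times I_N$, where $D^s$ is a fixed $2$-disc transverse to $\cA$ and $I_N$ is an interval in the $e^u_0$-direction shrinking at rate $\le\nu_C^{-1}$. Passing to the limit yields a $C^1$ graph over $D^s$, that is, a $C^1$ embedded $2$-disc $W^s_{\mathrm{loc}}(q)\cap\Sigma_0$; its saturation by the flow of $X_\cB$ is the required $C^1$ embedded $3$-disc. Choosing $D^s$ in the region where all three $N_i$ are positive places this disc in $\cB_{\9}$, which is flow-invariant because $N_i'=-(q+2\Sigma_i)N_i$ preserves the sign of $N_i$; the contraction of $E^s$ forces the corresponding orbits to return to the points $f^n(q)$ and to shadow the orbits $\cO_{f^n(q),f^{n+1}(q)}$, so they indeed belong to $W^s(q)$ in the sense of Definition~\ref{d.stable-manifold}. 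Finally, all the estimates above are uniform over $q$ in a closed forward-invariant aperiodic set, and the graph transform is a contraction of the space of $C^1$ sections; continuity of its fixed point in the parameters gives the asserted continuous dependence of the disc on $q$.
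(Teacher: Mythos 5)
Your overall plan is the right one, and it parallels the paper's architecture: cut the heteroclinic chain with transversal sections, study the return maps, exhibit a dominated splitting, invoke a stable manifold theorem, and saturate by the flow. You also correctly locate the real difficulty, namely controlling the passage maps $\phi_n$ through the neighborhoods of the Kasner equilibria and showing that the coupling of the contracted directions into the expanded one is dominated uniformly. But you stop there: you announce an ``invariant cone field'' and ``distortion estimates depending logarithmically on the depths $\delta_n$'' without giving any mechanism that actually produces them, and this is precisely the step the paper's proof is organized around.

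The mechanism the paper uses, and which your proposal omits, is Takens' $C^1$ linearization theorem (Theorem~\ref{t.Takens}). Near a partially hyperbolic zero of a vector field one cannot in general linearize, and in particular the Dulac map from an entry face to an exit face of a flow box need not be $C^1$ at the corner and its off-diagonal derivative terms can blow up as the passage deepens. The paper sidesteps this entirely: it proves (Proposition~\ref{p.characterisation-Sternberg}) that Sternberg's non-resonance condition on the eigenvalues $-(2+2\Sigma_i)$ holds at $p\in\cK$ exactly when $p$ is not preperiodic under the Kasner map, applies Takens to get $C^1$ coordinates in which $X_\cB$ is literally linear in $(x_1,x_2,z)$ with $y$-dependent coefficients (Proposition~\ref{p.linearisation}), and reads off the Dulac map in closed form \eqref{e.Dulac-1}. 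The exponents $-\lambda^s_i(y)/\lambda^u(y)>1$ (Lemma~\ref{l.domination}) then force $D\Delta^p$ to vanish identically on the stable directions and to be an isometry on the Kasner direction (Propositions~\ref{p.hyperbolicity-Phi}, \ref{p.isometries}). Consequently the reduced system $\Phi^C=\Theta^C\circ\Delta^C$ on the Poincar\'e section is \emph{uniformly} hyperbolic over $\widehat C\cap M^C$, and the standard stable manifold theorem for hyperbolic sets applies --- there is no need for the non-uniform, non-autonomous graph transform you propose. This is also why your reading of the hypothesis is slightly off: you interpret ``aperiodic'' as a dynamical condition controlling recurrence toward periodic cycles, whereas in the paper it is an arithmetic non-resonance condition (rational/quadratic irrational Kasner parameter $\Leftrightarrow$ preperiodic) needed to invoke Takens at every point of the closed invariant set $C$. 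Without supplying a substitute for this linearization --- a genuine estimate on the nonlinear Dulac map, or some other normal form with comparable rigidity --- your argument does not close.
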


\begin{remark}
\label{r.global-stable-manifold}
Actually, one can prove that $W^s(q)\cap\cB_{\9}$ is an injectively immersed open disc, and that this depends continuously on $q$ (for the compact-open $C^1$ topology on  the space of $C^1$ immersions of the open unit disc of $\RR^3$ in $\cB_{\9}$) when $q$ ranges in a closed  forward-invariant aperiodic subset of $\cK$. Note that $W^s(q)\cap\cB_{\9}$ is never a properly embedded open disc.
\end{remark}

\begin{remark}
We decided to focus on type $\9$ orbits for sake of simplicity. Nevertheless, the analog of theorem~\ref{t.main} for $\cB_{\8}$ instead of $\cB_{\9}$ is also true. The proof is exactly the same: one just needs to replace the set $\cB^+$ by the set 
$\cB^{(-,+,+)}:=\cB\cap \{N_1\leq 0\}\cap \{N_2\geq 0\}\cap \{N_3\geq 0\}.$
\end{remark}

Observe that the hypothesis of theorem~\ref{t.main} is satisfied by a dense set of points in $\cK$~:

\begin{proposition}
\label{p.existence-aperiodic}
The union of all the closed  forward-invariant aperiodic subsets of $\cK$ is in dense in~$\cK$. 
\end{proposition}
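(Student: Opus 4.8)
The plan is to construct, near any prescribed point of $\cK$, a closed forward-invariant aperiodic subset. The natural candidate is a Cantor-like set built as a nested intersection of finite unions of arcs, using the expansion of the Kasner map $f$ established above. First I would fix a point $p_0\in\cK\setminus\{T_1,T_2,T_3\}$ and a small arc $I_0\ni p_0$, chosen so that $\overline{I_0}$ is a compact subset of $\cK\setminus\{T_1,T_2,T_3\}$; by the last assertion of the excerpt there is then $\nu:=\nu_{\overline{I_0}}>1$ with $|||Df|||_h\ge\nu$ on $I_0$. Since $f$ is a $C^1$ local diffeomorphism onto its image away from the Taub points and is two-to-one, locally $f$ has two $C^1$ inverse branches; I would pick one such branch $g$ defined on a neighborhood of $\overline{I_0}$ with $g(I_0)\subset I_0$ — this is possible because the expansion forces every sufficiently short arc to be stretched across itself by an appropriate branch of $f$ (equivalently, some branch of $g$ is a contraction of $I_0$ into itself). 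Iterating, $K:=\bigcap_{n\ge0} \overline{g^n(I_0)}$ is a nonempty compact $f$-invariant set.

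The main work is to arrange that $K$ contains no periodic orbit of $f$. The clean way is to use a symbolic/IFS construction with at least two branches rather than a single branch $g$. Concretely, I would choose $I_0$ short enough that $f(I_0)$ covers $I_0$ with room to spare, so that there are two disjoint subarcs $I_1,I_2\subset I_0$ with $f(I_i)\supset I_0$; then the maximal invariant set $\Lambda:=\bigcap_{n\ge0} f^{-n}(I_1\cup I_2)\cap I_0$ carries a full two-sided shift, and $f|_\Lambda$ is conjugate to the one-sided shift on $\{1,2\}^{\NN}$ via the coding $r\mapsto (i_n)$ with $f^n(r)\in I_{i_n}$. Inside the shift, I would then fix a single non-eventually-periodic sequence $\omega\in\{1,2\}^{\NN}$ (for instance, one whose $n$-th symbol depends on the base-$2$ digits of $n$, or any Sturmian sequence) and take $C:=\overline{\{f^n(r_\omega):n\ge0\}}$, the orbit closure of the corresponding point $r_\omega\in\Lambda$. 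This $C$ is automatically closed and forward-invariant; it is aperiodic because under the conjugacy an $f$-periodic point corresponds to a periodic sequence, and the closure of a non-eventually-periodic forward orbit in $\{1,2\}^{\NN}$ (equivalently, a minimal-ish subshift or orbit closure) contains no periodic point — this last fact is where I would be careful, choosing $\omega$ so that its orbit closure is a minimal subshift (e.g. a Sturmian shift), which contains no periodic points at all. Since the only constraint on $p_0$ was that it be a non-Taub point and $I_0$ can be taken arbitrarily small, the union of all such $C$ meets every neighborhood of every non-Taub point; as $\cK\setminus\{T_1,T_2,T_3\}$ is dense in $\cK$, this union is dense, proving the proposition.

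The step I expect to be the genuine obstacle is verifying the ``covering'' property $f(I_i)\supset I_0$ for two disjoint subarcs — i.e. that some short arc is expanded by $f$ by a definite factor and can be made to cover a controlled neighborhood of itself under two distinct inverse branches. The non-uniform expansion stated in the excerpt gives $|||Df|||_h\ge\nu>1$ on compact sets away from the Taub points, but turning a pointwise derivative bound into a geometric covering statement requires a distortion/bounded-length argument: one must check that iterating does not push the arcs toward the Taub points where expansion degenerates, or instead choose $I_0$ small enough and a branch $g$ fixing an interior point (a fixed point of $f$ in $\cK\setminus\{T_1,T_2,T_3\}$, of which there are several, e.g. the points fixed by the geometric construction) so that on a neighborhood of that fixed point the two relevant branches of $f$ are genuine expansions. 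Once a neighborhood of such a fixed point with the two-branch covering property is isolated, the rest is the standard symbolic-dynamics bookkeeping sketched above.
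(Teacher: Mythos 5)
The central step of your construction fails as stated. You want a small arc $I_0$ away from the Taub points containing two disjoint subarcs $I_1,I_2\subset I_0$ with $f(I_i)\supset I_0$. This is impossible. Away from (indeed even at) the Taub points the Kasner map has $|Df|\geq 1>0$, so $f$ is a local diffeomorphism everywhere on $\cK$; in particular $f$ is injective on any sufficiently short arc $I_0$. Hence $f(I_0)$ covers $I_0$ at most once and there can be at most one subarc of $I_0$ mapped onto $I_0$. Your suggested fallback of working near a non-Taub fixed point $p^*$ of $f$ does not help: the two branches of $f^{-1}$ at a point near $p^*$ consist of one preimage near $p^*$ and one preimage elsewhere on $\cK$, so again only one branch maps $I_0$ into itself. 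To make a local horseshoe you would have to pass to a high iterate $f^n$, at which point the obstacle you yourself flag (controlling distortion as the iterated arc sweeps past the Taub points, where $|Df|=1$) becomes a genuine and nontrivial piece of the proof rather than a side remark.

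The paper avoids all of this by using the Markov partition that is already built into the problem. The three Taub points cut $\cK$ into three closed arcs $I_1,I_2,I_3$, and the explicit geometry of $f$ gives the global covering relation $f(I_i)=I_j\cup I_k$ for $\{i,j,k\}=\{1,2,3\}$. This yields a continuous surjection $h\colon\Sigma_0\to\cK$, where $\Sigma_0\subset\{1,2,3\}^{\NN}$ is the subshift forbidding repeated consecutive symbols, semiconjugating the shift $\sigma$ to $f$; the failure of injectivity of $h$ occurs only over orbits that hit a Taub point, which are eventually fixed, so the image of a closed $\sigma$-invariant aperiodic set is a closed $f$-invariant aperiodic set. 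Aperiodic invariant sets in $\Sigma_0$ are then supplied by square-free sequences (Prouhet--Thue--Morse), and density follows because sequences sharing a tail with a fixed square-free sequence are dense in $\Sigma_0$. Your idea of using a Sturmian sequence to guarantee aperiodicity of the orbit closure is perfectly sound and would serve equally well once a correct symbolic model is in place; the gap is entirely in how you tried to manufacture that model locally, where no two-branch covering of a short arc by itself exists.
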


\begin{remark}
Let $E$ be  the union of all the closed  forward-invariant aperiodic subsets of $\cK$. According to the above proposition, $E$ is dense in $\cK$. Observe nevertheless that $E$ is a ``small" subset of $\cK$ both from the topological viewpoint (it is a meager set) and from the measurable viewpoint (it has zero Lebesgue measure). Also observe that theorem~\ref{t.main} does not tell that $W^s(q)$ depends continuously on $q$ when $q$ ranges in $E$. So, we do not know whether the union of all the stable manifolds $W^s(q)$, where $q$ ranges $E$, is dense in $\cB$ (or in an open subset of $\cB$) or not. 
\end{remark}

The origin of the hypothesis of theorem~\ref{t.main} is purely technical: if $p$ belongs to a  closed forward-invariant aperiodic subset of $\cK$, we can find a  coordinate system in which the vector field $X_{\cB}$ depends linearly of all the coordinates but one; this makes the estimates on the flow of $X_{\cB}$ near such a point much easier. We do not know if such linearizing coordinate systems exists near a point of $\cK$ which is periodic or preperiodic under the Kasner map. Nevertheless, it should be noticed that the estimates on the flow of $X_{\cB}$ that we need to construct stable manifolds are much weaker than the existence of a linearizing coordinate system. So, we do think that theorem~\ref{t.main} can be extended to any point $q\in\cK$ such that the orbit of $q$ under the Kasner map does not accumulate one any of the three special points $T_1,T_2,T_3$. Understanding the behavior of the orbits which pass arbitrary close to the three special points seems to be a much harder problem.

\begin{remark}
A consequence of theorem~\ref{t.main} (and proposition~\ref{p.existence-aperiodic}) is that the Wainwright-Hsu vector field is, at least partially, sensitive to initial conditions: there is a dense subset $E$ of the Kasner circle $\cK$ such that, for every point $q\in E$, arbitrarily closed to $q$, one can find two points $r_1,r_2\in\cB_{\9}$ such that the orbits of $r_1$ and $r_2$ will not have the same ``asymptotic behavior" (for example, $\displaystyle\mathop{\limsup}_{t\to +\infty} \mbox{dist}_h\left(X_{\cB}^t(r_1)\,,\,X_{\cB}^t(r_2)\right)>1/10$.
\end{remark}

\bigskip

 As we were finishing to write the present paper, M. Georgi, J. H\"arterich, S. Liebscher and K. Webster put on arXiv a preprint in which they prove that the period three orbits of $f$ admit non-trivial stable manifolds (\cite{GHLW2010}). At the end of this preprint, they claim that their techniques can be used to extend their result to any periodic orbit of $f$, and even to any orbit of $f$ which does not accumulate one any of the three special points $T_1,T_2,T_3$. Such a extension would imply our result.
 
\subsection{Idea of the proof and organization of the paper}

Let us try to sketch the key idea of the proof of our main theorem~\ref{t.main}. Let $C$ be a closed  forward-invariant aperiodic subset of the Kasner circle, and denote by $\wC$ the union of $C$ and all the type $\2$ orbits connecting two points of $C$ in $\cB^+$. We will construct a kind of Poincar\'e section adapted to $\wC$: a hypersurface with boundary $M_C$ which intersects transversally every type $\2$ orbit connecting  two points of $C$ in $\cB^+$. Theorem~\ref{t.main} will follow from the existence of non-trivial local stable manifolds for the Poincar\'e map $\Phi_C:M_C\to M_C$. We will use a classical stable manifold theorem for hyperbolic compact set; so, we will be left to prove that the compact set $\wC\cap M_C$ is a hyperbolic set for the Poincar\'e map $\Phi_C$. The key point is to understand the behavior of the orbits of the Wainwright-Hsu vector field $X_{\cB}$ close to a point $p\in C$. Roughly speaking, we will prove the following: when an orbit of $X_{\cB}$ passes close to a point $p\in C$, the distance of this orbit to the mixmaster attractor $\cA=\cK\cup\cB_{\2}$ decreases super-linearly, while the drift of this orbit tangentially to $\cA$ is as small as wanted. We do not have any precise control of what happens to the orbits of $X_{\cB}$ far from the Kasner circle, but we do not need to. Indeed, the duration of the excursion of the orbits of $X_{\cB}$ outside any given neighborhood of $\cK$ is universally bounded. A consequence is that everything that happens far from $\cK$ is dominated by the super-linear contraction of the distance to $\cA$ that occurs when an orbit passes close to $C$. This will be enough to obtain the desired hyperbolicity result. As already explained, the reason why we need to consider an \emph{aperiodic} subset of $\cK$ is purely technical: close to a point of $\cK$ which is not preperiodic under the Kasner map, we have a ``nice" coordinate system which is convenient to study precisely the behavior of the orbits of $X_{\cB}$. 

Let us now explain the organization of the paper. Consider a vector field $X$ on some manifold and a point $p$ such that $X(p)=0$. We say that $X$ satisfies \emph{Sternberg's condition} at $p$ if the non purely imaginary eigenvalues of $DX(p)$, counted with multiplicities, are linearly independent over $\QQ$.  Takens has proved a generalization of Sternberg's theorem which states that, if $X$  satisfies \emph{Sternberg's condition} at $p$, there exists a $C^1$ local ``linearizing" coordinate system for $X$ on a neighborhood of $p$. A precise statement of this theorem will be given in section~\ref{s.Takens}. The purpose of section~\ref{s.linearisation} is to apply Takens' theorem to the Wainwright-Hsu vector field. It will be proved that the Wainwright-Hsu vector field $X_{\cB}$ satisfies Sternberg's condition at some point $p$ of the Kasner circle if and only if $p$ is not preperiodic under the Kasner map. In section~\ref{s.Dulac}, the ``linearizing" coordinate system provided by Takens' theorem will be used to study the behavior of the orbits of the Wainwright-Hsu vector field close to a point of the Kasner circle which is not periodic under the Kasner map. Now consider a closed  forward-invariant aperiodic subset $C$ of the Kasner circle $\cK$ is considered, and denote by $\wC$ the union of $C$ and all the type $\2$ orbits connecting two points of $C$ in $\cB^+$. A Poincar\'e section $M_C$ adapted to $\wC$ will be constructed in section~\ref{s.construction-Poincare-section}; we will denote by $\Phi_C$ the corresponding Poincar\'e map. The existence of non-trivial stable manifolds for the corresponding Poincar\'e map $\Phi_C:M_C\to M_C$ will be proved in section~\ref{s.stable-manifolds-Poincare-map}, using the results of section~\ref{s.Dulac}. The proof of theorem~\ref{t.main} will be completed in section~\ref{s.end-proof-main}. Finally, the last section of the paper is devoted to the proof of proposition~\ref{p.existence-aperiodic}.

\subsection*{Acknowledgements}
I would like to thank Lars Andersson for some stimulating discussion, and for his encouragements to write the present paper.


\section{Takens' linearization theorem}
\label{s.Takens}

Let $X$ be a $C^\infty$ vector field on some manifold $\cM$, and $p$ be a point in $\cM$ such that $X(p)=0$. The linear space $T_p\cM$ admits a unique decomposition 
$$T_p\cM = E^s\oplus E^c\oplus E^u$$
where $E^s$, $E^c$, $E^u$ are $DX(p)$-invariant linear subspaces, the eigenvalues of $DX(p)_{|E^s}$ have negative real parts, the eigenvalues of $DX(p)_{|E^c}$ are purely imaginary, and the eigenvalues of $DX(p)_{|E^u}$ have positive positive. We denote by $s$, $c$ and $u$ the dimensions of the linear subspaces $E^s$, $E^c$ and $E^u$.

\begin{definition}
The vector field $X$ satisfies \emph{Sternberg's condition} at $p$ if the eigenvalues of $DX(p)_{|E^s\oplus E^u}$ (counted with multiplicities) are linearly independent over $\QQ$.
\end{definition}

F. Takens as proved the following generalization of the classical Sternberg linearization theorem:

\begin{theorem}[see~\cite{Takens1971}, page 144]
\label{t.Takens}
Assume that $X$ satisfies Sternberg's condition at $p$. Then, for every $r\geq 0$, one can find a neighborhood $U$ and a $C^r$ coordinate system $(x_1,\dots,x_s,y_1,\dots,y_c,z_1,\dots,z_u)$ on $U$ centered at $p$, such that, in this coordinate system, $X$ reads:
\begin{equation}
\label{e.Takens}
X =  \sum_{i,j=1}^{s} a_{i,j}(y_1,\dots,y_c) x_j\frac{\partial}{\partial x_i} + \sum_{i=1}^{c}\phi_i(y_1,\dots,y_c)\frac{\partial}{\partial y_i} +  \sum_{i,j=1}^{u} b_{i,j}(y_1,\dots,y_c) z_j\frac{\partial}{\partial z_i} 
\end{equation}
where the eigenvalues of  the matrix $(a_{i,j}(0,\dots,0))_{1\leq i,j\leq s}$ have negative real parts, the eigenvalues of the matrix $\big(\frac{\partial \phi_i}{\partial y_j}(0,\dots,0)\big)_{1\leq i,j\leq c}$ are purely imaginary,
and  the eigenvalues of  the matrix $(b_{i,j}(0,\dots,0))_{1\leq i,j\leq u}$ have positive real parts.
\end{theorem}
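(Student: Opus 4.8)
The statement is Takens' theorem, so what follows is a sketch of his argument (see~\cite{Takens1971}). The overall strategy is: (i) produce invariant manifolds and foliations, straighten them, and read off the resulting block structure of $X$; (ii) decouple the center direction; (iii) linearize the hyperbolic blocks by a Sternberg-type normal form carried out fiberwise over the center flow, Sternberg's condition being exactly what is needed to kill the relevant resonances; (iv) promote the formal normalization to an honest $C^r$ change of coordinates. First I would invoke the center-stable, center-unstable and center manifold theorems to obtain $C^r$ locally invariant manifolds $W^{cs}$, $W^{cu}$, $W^c=W^{cs}\cap W^{cu}$ through $p$, tangent at $p$ to $E^s\oplus E^c$, $E^u\oplus E^c$, $E^c$ respectively, together with the invariant $C^r$ strong-stable foliation of $W^{cs}$ with leaves projecting onto $W^c$, and symmetrically for $W^{cu}$. (All of these exist only on a neighborhood of $p$ that shrinks as $r\to\infty$; this is precisely why the theorem is stated for each fixed $r$ and not in the $C^\infty$ category.) Choosing $C^r$ coordinates $(x,y,z)\in\RR^s\times\RR^c\times\RR^u$ near $p$ in which $W^{cs}=\{z=0\}$, $W^{cu}=\{x=0\}$, $W^c=\{x=z=0\}$, and the strong-stable (resp. strong-unstable) leaves are $\{y=\mathrm{const},\,z=0\}$ (resp. $\{y=\mathrm{const},\,x=0\}$), invariance of $W^{cu}$ forces the $\partial_{x_i}$-component of $X$ to vanish on $\{x=0\}$, hence to be $\sum_j A_{ij}(x,y,z)\,x_j$ by Hadamard's lemma; symmetrically the $\partial_{z_i}$-component is $\sum_j B_{ij}(x,y,z)\,z_j$; and invariance of $W^c$ and of the two foliations makes the $\partial_{y_i}$-component depend on $y$ alone along $W^{cs}\cup W^{cu}$.

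Next I would extend the decoupling of the center direction from $W^{cs}\cup W^{cu}$ to a full neighborhood of $p$, pushing forward the structure already obtained by means of the ambient strong-stable and strong-unstable foliations (each point lies on a unique strong-stable leaf based at a point of $W^{cu}$, which in turn lies on a unique strong-unstable leaf based at a point of $W^c$); this yields $\partial_{y_i}$-component $=\phi_i(y)$. It then remains to make $A_{ij}$ and $B_{ij}$ functions of $y$ alone. This is the heart of the matter, and it is where Sternberg's condition enters. Viewing the $x$- and $z$-equations as a skew product over the center flow $\dot y=\phi(y)$, one removes, by successive $C^r$ changes of coordinates, all monomials in $(x,z)$ from the $x$-equation except the linear terms $A_{ij}(y)x_j\partial_{x_i}$ (and symmetrically for $z$), the obstruction at each step being a homological equation whose solvability is governed by the quantities $\langle\lambda^s,\alpha\rangle+\langle\lambda^u,\beta\rangle-\lambda^s_i$, where $\lambda^s,\lambda^u,\lambda^c$ denote the eigenvalues in $E^s,E^u,E^c$. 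Since the eigenvalues in $E^c$ are purely imaginary they never affect the real parts of these quantities, and the $\QQ$-linear independence of the eigenvalues in $E^s\oplus E^u$ guarantees that they are nonzero whenever the corresponding monomial is not one of the kept linear terms; hence every ``bad'' monomial can be eliminated. (Conversely, linear terms $x_j\partial_{x_i}$ with $\mathrm{Re}\,\lambda^s_i=\mathrm{Re}\,\lambda^s_j$ are in general irremovable, which is exactly why the surviving coefficients are allowed to depend on $y$.) One checks at the end that $(a_{ij}(0))=DX(p)_{|E^s}$, $(b_{ij}(0))=DX(p)_{|E^u}$ and $(\partial\phi_i/\partial y_j(0))=DX(p)_{|E^c}$, so the spectral conditions in~\eqref{e.Takens} hold.

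The main obstacle, as in Sternberg's original theorem, is to turn the formal normalization of the previous step into a genuine $C^r$ coordinate change on an actual neighborhood of $p$: truncate the normalizing transformation at an order large enough relative to $r$ and to the real parts of the eigenvalues, and absorb the remaining tail by a contraction-mapping (invariant-section) argument applied to the conjugacy equation. The feature specific to the partially hyperbolic case — and the genuinely new difficulty compared with the classical hyperbolic statement — is that the linearization of the hyperbolic blocks and all the accompanying estimates must be performed uniformly in the center variable $y$ and compatibly with the flow $\dot y=\phi(y)$, i.e. as a ``fibered'' Sternberg theorem over the center dynamics; combined with the fact that center manifolds and foliations are only finitely smooth on domains shrinking with $r$, this is what produces a $C^r$ statement for each $r$ rather than a $C^\infty$ one. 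I expect this uniform-over-the-center smoothness step to be where the real work lies; the invariant manifold theory and the formal algebra of the earlier steps are, once set up, comparatively routine.
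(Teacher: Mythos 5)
The paper does not prove this statement: it is quoted verbatim as a theorem of Takens, with a pointer to page 144 of \cite{Takens1971}. Your sketch is therefore not being compared against an argument in the paper but against Takens' own proof, and on that score it is a faithful reconstruction: the invariant manifold and foliation step, the Hadamard-lemma reduction to block form, the interpretation of the $(x,z)$-equations as a skew product over the center flow, the resonance analysis with Sternberg's condition killing all non-linear monomials in $(x,z)$, and the final observation that everything is fibered over the (only finitely smooth) center manifold so that one only gets $C^r$ for each fixed $r$ --- all of these are the actual ingredients of Takens' argument. Two small remarks. First, the matrix coefficients $a_{ij}(y)$, $b_{ij}(y)$ are kept as full matrices rather than diagonal ones not merely because equal real parts can occur, but because the eigenvalues may be complex even when they are $\QQ$-independent; under Sternberg's condition equal real parts among the $\lambda^s_i$ (or among the $\lambda^u_i$) are in fact forbidden, so your parenthetical about irremovable linear terms with $\mathrm{Re}\,\lambda^s_i=\mathrm{Re}\,\lambda^s_j$ is a non-issue here, though harmless. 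Second, Sternberg's condition is sufficient but not necessary to kill the resonances; it is the convenient clean hypothesis, not the sharp one. Neither remark affects the correctness of the sketch.
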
 

Note that, in general, the size of the neighborhood $U$ does depend on the integer $r$, and it is not possible to find any $C^\infty$ local coordinate system centered at $p$ such that~\eqref{e.Takens} holds. In this sense, Takens' theorem is not a true generalization of Sternberg's theorem.

Also observe that the name "Takens' linearization theorem" is slightly incorrect: indeed, the vector field $X$ is not linear in the $(x_1,\dots,x_s,y_1,\dots,y_c,z_1,\dots,z_u)$ coordinate system. Nevertheless, $X$ depends linearly on the coordinates $x_1,\dots,x_s$ and $z_1,\dots,z_u$. Also note that the submanifold defined by the equation $(y_1,\dots,y_c)=(0,\dots,0)$ is invariant under the flow of $X$, and that the restriction of $X$ to this submanifold is linear. For $(\zeta_1,\dots,\zeta_c)\neq (0,\dots,0)$,  the submanifold defined by the equation $(y_1,\dots,y_c)=(\zeta_1,\dots,\zeta_c)$ is not invariant under the flow of $X$, but the projection of $X$ on the tangent space of this submanifold is linear.

Of course, equality~\eqref{e.Takens} together with the signs of the real parts of the eigenvalues of the matrices $(a_{i,j}(0,\dots,0))_{1\leq i,j\leq s}$, $\big(\frac{\partial \phi_i}{\partial y_j}(0,\dots,0)\big)_{1\leq i,j\leq c}$ and $(b_{i,j}(0,\dots,0))_{1\leq i,j\leq u}$ implies that:
\begin{itemize}
\item[--] the vectors $\frac{\partial}{\partial x_1}(p),\dots,\frac{\partial}{\partial x_s}(p)$ span the linear subspace $E^s$, 
\item[--] the vectors $\frac{\partial}{\partial y_1}(p),\dots,\frac{\partial}{\partial y_c}(p)$ span the linear subspace $E^c$,
 \item[--] the vectors $\frac{\partial}{\partial z_1}(p),\dots,\frac{\partial}{\partial z_u}(p)$ span the linear subspace $E^u$.
 \end{itemize}


\section[Linearization of Wainwright-Hsu vector field]{Linearization of Wainwright-Hsu vector field near a point of the Kasner circle which is not preperiodic under the Kasner map}
\label{s.linearisation}

The purpose of this section is to apply Takens' theorem to the Wainwright-Hsu vector field at a point of the Kasner circle. For this purpose, we will need to characterize the points $p$ on the Kasner circle such that the Wainwirght-Hsu vector field $X_{\cB}$ satisfies Sternberg's condition at $p$. We will see that these are exactly the points $p\in\cK$ which are not preperiodic under the Kasner map $f$. In order to relate the arithmetic properties of the eigenvalues of the derivative of $DX_{\cB}(p)$ and the behavior of the orbit of $p$ under $f$, we will use the so-called \emph{Kasner parameter}.

\subsection{Kasner parameter}
Let $q=(\Sigma_1,\Sigma_2,\Sigma_3,0,0,0)$ be a point of the Kasner circle. The \emph{Kasner parameter} of $q$ is the unique real number $u=u(q)\in [1,\infty]$ which satisfies the following equality:
\begin{equation}
\label{e.Kasner-parameter}
\big(2+\Sigma_1\big)\big(2+\Sigma_2\big)\big(2+\Sigma_3\big)=\frac{-216\;u^2\big(1+u\big)^2}{\big(1+u+u^2\big)^3}.
\end{equation}
The map $q\mapsto u(q)$ is not one-to-one. Nevertheless, the point $q$ is characterized by its Kasner parameter up to permutations of the coordinates $\Sigma_1,\Sigma_2,\Sigma_3$. More precisely, equality~\eqref{e.Kasner-parameter} together with the equation of the Kasner circle~\eqref{e.Kasner-circle} imply that:
\begin{equation}
\label{e.eigenvalues-Kasner-parameter}
\Big(2+\Sigma_1\,,\, 2+2\Sigma_2\,,\, 2+2\Sigma_3\Big)\; \mathop{\mbox{=}}^{\mbox{up to}}_{\mbox{permutation}}\;
 \left(\frac{-6u}{1+u+u^2}\,,\, \frac{6(1+u)}{1+u+u^2} \,,\, \frac{6u(1+u)}{1+u+u^2}\right).
\end{equation}
Note that $u=\infty$ if and only if $q$ is one of the three special points $T_1,T_2,T_3$. The main advantage of the Kasner parameter is the fact that the Kasner map $f$ admits a nice expression in terms of this parameter: for every $q\in\cK$, one has $u(f(q))=\bar f(u(q))$ where $\bar f:[1,+\infty]\to [1,\infty]$ is defined by
\begin{equation}
\label{e.Kasner-map-Kasner-parameter}
\bar f(u)=\left\{
\begin{array}{ll}
\infty & \mbox{ if } u=1 \mbox{ or }\infty \\\\
u-1 & \mbox{ if } u\geq 2\\\\
\frac{1}{u-1} & \mbox{ if } 1<u\leq 2
\end{array}\right.
\end{equation}
(see, for example,~\cite{HeinzleUggla2009}).

\subsection{Characterization of the points of the Kasner circle where Sternberg's condition is satisfied}
The proposition below gives a necessary and sufficient condition for the Wainwright-Hsu vector field~$X_{\cB}$ to satisfy Sternberg's condition at a point $p\in\cK$, in terms of the behavior of the orbit of $p$ under the Kasner map. The hypothesis of our main theorem~\ref{t.main} comes directly from this condition.

\begin{proposition}
\label{p.characterisation-Sternberg}
Let $p$ be a point of the Kasner circle $\cK$ which is not one of the three special points~$T_1,T_2,T_3$. The three following conditions are equivalent~:
\begin{enumerate}
\item the vector field $X_{\cB}$ satisfies Sternberg's condition at $p$~; 
\item the Kasner parameter $u(p)$ is neither a rational number, nor a quadratic irrational number~;
\item the orbit of $p$ under the Kasner map $f$ is not preperiodic. 
\end{enumerate}
\end{proposition}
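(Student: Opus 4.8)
The plan is to prove the chain of equivalences (3) $\Rightarrow$ (2) $\Rightarrow$ (1) and (1) $\Rightarrow$ (3), working throughout with the Kasner parameter $u = u(p) \in (1,\infty)$ and the one-dimensional map $\bar f$ of \eqref{e.Kasner-map-Kasner-parameter}. The first step is to unwind what Sternberg's condition means here concretely. Since $p$ is not one of $T_1,T_2,T_3$, the derivative $DX_\cB(p)$ has eigenvalues $0$ (along $T_p\cK$, the center direction) and the three nonzero, pairwise distinct numbers $-(2+2\Sigma_i)$ along the $\partial/\partial N_i$; two are negative and one positive. So $E^c$ is one-dimensional, $E^s\oplus E^u$ is three-dimensional, and Sternberg's condition at $p$ is precisely: the three numbers $2+2\Sigma_1,\,2+2\Sigma_2,\,2+2\Sigma_3$ are linearly independent over $\QQ$. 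Using \eqref{e.eigenvalues-Kasner-parameter}, up to a common factor $6/(1+u+u^2)$ these are $-u,\ (1+u),\ u(1+u)$; since they already satisfy the relation $u\cdot(1+u) + (-u)\cdot 1 \ldots$ — wait, one must be careful: they are $\QQ$-linearly independent iff $-u$, $1+u$, $u(1+u)$ are. A $\QQ$-linear relation $\alpha(-u) + \beta(1+u) + \gamma u(1+u) = 0$ with $(\alpha,\beta,\gamma)\in\QQ^3\setminus\{0\}$ is a rational polynomial identity $\gamma u^2 + (\gamma - \alpha + \beta) u + \beta = 0$; such a relation holds for some nonzero rational triple if and only if $1,u,u^2$ are linearly dependent over $\QQ$, i.e. if and only if $u$ is rational or a quadratic irrational. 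Hence (1) $\Leftrightarrow$ (2) is essentially a bookkeeping computation, which I would present carefully since it is the conceptual core translating the dynamical hypothesis into arithmetic.

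The equivalence (2) $\Leftrightarrow$ (3) is then a statement purely about the interval map $\bar f$, which is (up to the reflection $u \mapsto 1/u$ issues near $u=2$) the Gauss-type map. The cleanest route: note that $\bar f$ on $(1,\infty)$ is conjugate, via $u \mapsto u - 1$ on $[1,2]$ appropriately extended, to the Gauss continued-fraction machinery — more directly, observe that applying $\bar f$ repeatedly to $u$ performs the steps of the continued fraction algorithm on $u$ (subtract $1$ while $u\ge 2$, then invert). Thus the forward orbit $u, \bar f(u), \bar f^2(u),\dots$ is (eventually) determined by, and determines, the tail of the continued fraction expansion of $u$; in particular the orbit is preperiodic under $\bar f$ if and only if the continued fraction expansion of $u$ is eventually periodic. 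By Lagrange's theorem on continued fractions, the expansion of a real number $>1$ is eventually periodic if and only if the number is a quadratic irrational (the rational case being the finite-expansion case, which under $\bar f$ lands on the fixed point $\infty$ and so is preperiodic too — matching the statement, since $u$ rational is excluded in (2)). This gives (3) fails $\Leftrightarrow$ $u$ is rational or quadratic irrational $\Leftrightarrow$ (2) fails.

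I expect the main obstacle to be purely organizational rather than deep: making the identification of $\bar f$-iteration with the continued fraction algorithm precise across the case split at $u=2$ in \eqref{e.Kasner-map-Kasner-parameter}, and tracking exactly which of ``rational,'' ``finite CF,'' ``preperiodic,'' ``hits $\infty$'' coincide — the endpoint behavior near $u=1,2,\infty$ needs care so that the three-way equivalence comes out exactly as stated (recall $p\ne T_i$ means $u\ne\infty$, but the orbit may still reach $\infty$). A secondary point needing a clean argument is the passage from ``$\{2+2\Sigma_i\}$ $\QQ$-independent'' to ``$1,u,u^2$ $\QQ$-independent'': one must rule out spurious relations coming from the common denominator and check that the quadratic $\gamma u^2 + (\gamma-\alpha+\beta)u + \beta$ genuinely has rational coefficients not all zero exactly when $u$ satisfies a rational quadratic or linear equation. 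Both of these are elementary, so the proof should be short, with the substance lying in \eqref{e.eigenvalues-Kasner-parameter} and Lagrange's theorem, both of which may be quoted.
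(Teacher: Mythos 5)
Your proposal is correct and follows essentially the same route as the paper: translate Sternberg's condition into $\QQ$-linear independence of the nonzero eigenvalues, reduce via \eqref{e.eigenvalues-Kasner-parameter} to $\QQ$-independence of $1,u,u^2$, and then relate preperiodicity under $\bar f$ to eventual periodicity of the continued fraction of $u$ via (a conjugate of) the Gauss map and Lagrange's theorem. The paper is merely more explicit about the first-return map of $\bar f$ to $(1,2]$, but the substance is identical.
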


\begin{proof}
Denote by $(\Sigma_1,\Sigma_2,\Sigma_3,0,0,0)$ the coordinates of $p$. Since $p$ is not one of the three special points, the derivative $DX_{\cB}(p)$ has two distinct negative eigenvalues, one zero eigenvalue, and one positive eigenvalue. The three non-zero eigenvalues of $DX_{\cB}(p)$ are equal to $-(2+\Sigma_1)$, $-(2+\Sigma_2)$ and $-(2+\Sigma_3)$.

Let us prove the equivalence between $(1)$ and $(2)$. The vector field $X_{\cB}$ satisfies Sternberg's condition at $p$ if and only if the real numbers $-(2+\Sigma_1)$, $-(2+\Sigma_2)$ and $-(2+\Sigma_3)$ are linearly independent over~$\QQ$. Using formula~\eqref{e.eigenvalues-Kasner-parameter}, one sees that this is equivalent to the fact that the real numbers $u(p)$, $1+u(p)$ and $u(p)(1+u(p))$ are independent over $\QQ$. Clearly, this is equivalent to the fact the real number $u(p)$ is neither a rational number, nor a quadratic irrational number.

Now, let us prove the equivalence between $(2)$ and $(3)$. Recall that, for every $q$ on the Kasner circle, one has $u(f(q))=\bar f(u(q))$ where $\bar f:[1,+\infty]\to [1,\infty]$ is given by~\eqref{e.Kasner-map-Kasner-parameter}. Observe that both the set of rational numbers and the set of irrational numbers are invariant under $\bar f$. So we can treat the case where $u(p)$ is rational and the case where $u(p)$ is irrational separately. 

First consider the case where $u(p)$ is rational. Then it is very easy to prove that the orbit of $u(p)$ under $\bar f$ ``ends up" at $\infty$. Now recall that $u(q)=\infty$ if and only if $q$ is one of the three special points. Hence the orbit of $p$ under $f$ ``ends up" at  one of the three special points. In particular, the orbit of $p$  is preperiodic. 

Now consider the case where $u(p)$ is irrational. 
Looking again at~\eqref{e.Kasner-map-Kasner-parameter}, one sees that the orbit of $u(p)$ under $\bar f$ returns an infinite number of times in the interval $(1,2]$. Let $k:(1,\infty)\setminus\QQ\to\NN\setminus\{0\}$ be the return time function of $\bar f$ in the interval $(1,2]$, and $\overline F:(1,2]\setminus\QQ\toÊ(1,2]\setminus\QQ$ be the first return map of $\bar f$ in the interval $(1,2]$, \emph{that is} 
$$k(u)=\inf \{n>0\mbox{ such that }\bar f^n(u)\in (1,2]\}\quad \mbox{and}\quad \overline F(u)=\bar f^{k(u)}(u).$$ 
Then, 
$$k(u)=\left\lfloor\frac{1}{u-1}\right\rfloor -1\quad \mbox{and}\quad \overline F(u)=\left\{\frac{1}{u-1}\right\}+1,$$
where $\lfloor x\rfloor$ is the integer part of $x$ and $\{x\}:=x-\lfloor x\rfloor$ is the fractional part of $x$.
The point $p$ is preperiodic under the Kasner map $f$ if and only if either $u(p)$ is a preperiodic under $\bar f$, \emph{that is} if and only if $\bar f^{k(u(p))}(u(p))$ is preperiodic under $\overline F$. Now observe that $\overline F$ is just the Gauss map $u\mapsto \left\{\frac{1}{u}\right\}$ conjugated by the translation $u\mapsto u+1$, and that $k(u-1)+1$ is the first term of the continuous development fraction of $u$. On the one hand, the preperiodic points of the Gauss map $u\mapsto \left\{\frac{1}{u}\right\}$ are exactly the real numbers $u\in (0,1]$ such that the sequence of integers which appear in the continuous fraction development of $u$ is preperiodic. On the other hand, it is well-known that the continuous fraction development of $u\in\RR$ is preperiodic if and only if $u$ is a quadratic irrational number. This shows that the orbit of $p$ is a preperiodic under the Kasner map $f$ if and only if $u(p)$ is a quadratic irrational number.
\end{proof}

\subsection{Linearization of the Wainwright-Hsu vector field}
According to proposition~\ref{p.characterisation-Sternberg}, if $p$ is not a preperiodic point for the Kasner map, the hypotheses of Takens linearization theorem~\ref{t.Takens} are satisfied by $X_{\cB}$ at $p$. This theorem provides us with a local coordinate system on a neighborhood of $p$ in which $X_{\cB}$ is ``almost linear":

\begin{proposition}
\label{p.linearisation}
Let $p$ be a point of the Kasner circle $\cK$ which is not preperiodic under the Kasner map (in particular, $p$ is not a special point). Then there exists a neighborhood $U$ of $p$ in $\cB$ and a $C^1$ coordinate system $(x_1,x_2,y,z)$ on $U$, centered at $p$, and such that, in this coordinate system, $X_{\cB}$ reads: 
\begin{equation}
\label{e.Takens-Bianchi}
X_{\cB}\left(x_1,x_2,y,z\right)=\lambda^s_1(y)x_1\frac{\partial}{\partial x_1}+\lambda^s_2(y)x_2\frac{\partial}{\partial x_2}+\lambda^u(y)z\frac{\partial}{\partial z}
\end{equation}
where $\lambda^s_1(y)<\lambda^s_2(y)<0$ and $\lambda^u(y)>0$ for every $y$.
\end{proposition}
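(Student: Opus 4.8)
The plan is to apply Takens' linearization theorem~\ref{t.Takens} to $X_{\cB}$ at $p$ and then clean up the resulting normal form. Since $p$ is not preperiodic under the Kasner map, Proposition~\ref{p.characterisation-Sternberg} guarantees that $X_{\cB}$ satisfies Sternberg's condition at $p$; and since $p$ is not a special point, the zero eigenvalue of $DX_{\cB}(p)$ is simple (its three other eigenvalues being real, nonzero, two negative and one positive), so that in the decomposition $T_p\cB=E^s\oplus E^c\oplus E^u$ of Section~\ref{s.Takens} we have $s=2$, $c=1$ and $u=1$, with $E^c=T_p\cK$. Applying Theorem~\ref{t.Takens} with $r=2$ therefore yields a neighborhood $U$ of $p$ and a $C^2$ coordinate system $(x_1,x_2,y,z)$ centered at $p$ in which
\[
X_{\cB}=\sum_{i,j=1}^{2}a_{i,j}(y)\,x_j\frac{\partial}{\partial x_i}+\phi(y)\frac{\partial}{\partial y}+b(y)\,z\frac{\partial}{\partial z},
\]
where the matrix $A(0):=\big(a_{i,j}(0)\big)$ has its two eigenvalues with negative real part, $\phi(0)=\phi'(0)=0$, and $b(0)>0$. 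Since $X_{\cB}$ is $C^\infty$ while the chart is $C^2$, the coefficient functions $a_{i,j}$, $\phi$ and $b$ are $C^1$.

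The next step is to show that $\phi$ vanishes identically near $0$. In these coordinates a point close to $p$ is a zero of $X_{\cB}$ if and only if $\sum_{j}a_{i,j}(y)x_j=0$ for $i=1,2$, $\phi(y)=0$ and $b(y)z=0$; since $A(0)$ is invertible and $b(0)\neq 0$, after shrinking $U$ the first and third conditions force $(x_1,x_2,z)=(0,0,0)$, so the zero set of $X_{\cB}$ in $U$ is contained in the $y$-axis $\{x_1=x_2=z=0\}$. On the other hand $\cK$ is a $1$-dimensional submanifold made entirely of zeros of $X_{\cB}$ and containing $p$ in its interior; being contained in the $y$-axis and passing through the origin, it must, after one further shrinking of $U$, coincide with the whole segment $\{x_1=x_2=z=0\}\cap U$. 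Hence every point of that segment is a zero of $X_{\cB}$, which forces $\phi\equiv 0$ on the corresponding interval of $y$-values. So, on a possibly smaller neighborhood still denoted by $U$,
\[
X_{\cB}=\sum_{i,j=1}^{2}a_{i,j}(y)\,x_j\frac{\partial}{\partial x_i}+b(y)\,z\frac{\partial}{\partial z}.
\]

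It remains to diagonalize the stable block; the unstable block is already one-dimensional, so we set $\lambda^u(y):=b(y)$, which is positive for $y$ near $0$ since $b(0)>0$. At a non-special point of $\cK$ the two stable eigenvalues of $DX_{\cB}(p)$ are real, distinct and negative, and they are precisely the eigenvalues of $A(0)$; since these properties persist under small continuous perturbations of the matrix, after shrinking $U$ the matrix $A(y)$ has, for every $y$ in the relevant interval, two distinct real negative eigenvalues $\lambda^s_1(y)<\lambda^s_2(y)<0$, depending $C^1$ on $y$ (being simple eigenvalues of a $C^1$ family of matrices). Choosing a $C^1$ family $P(y)$ of invertible matrices with $P(y)^{-1}A(y)P(y)=\mathrm{diag}\big(\lambda^s_1(y),\lambda^s_2(y)\big)$ and performing the $y$-preserving change of coordinates $(x_1,x_2)=P(y)(x_1',x_2')$, one finds — using that $X_{\cB}$ has no $\partial/\partial y$-component, so that the term involving $dP/dy$ drops out — that $\dot x_i'=\lambda^s_i(y)\,x_i'$. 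After renaming $x_i'$ back to $x_i$, the composed coordinate system (a composition of a $C^2$ and a $C^1$ diffeomorphism, hence $C^1$) puts $X_{\cB}$ in the form~\eqref{e.Takens-Bianchi}.

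I expect the only substantive step to be the proof that $\phi\equiv 0$, i.e.\ that the (a priori non-unique) center manifold produced by Takens' theorem coincides near $p$ with the Kasner circle; this rests on the fact that the zero set of $X_{\cB}$ near a non-special point of $\cK$ is exactly the curve $\cK$, which in turn comes from the simplicity of the zero eigenvalue of $DX_{\cB}(p)$. Everything else is bookkeeping: checking the dimensions, recording that the stable eigenvalues are real and simple, and tracking regularity — which is the reason for invoking Takens' theorem with $r=2$ even though only a $C^1$ conclusion is wanted.
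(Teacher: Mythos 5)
Your proof is correct and follows essentially the same route as the paper's: apply Takens' theorem via Proposition~\ref{p.characterisation-Sternberg}, identify the $y$-axis with $\cK\cap U$ to kill the center component~$\phi$, and then diagonalize the stable block fiberwise. Two small points of difference are worth noting. First, the paper's proof invokes Takens with a ``crude'' $C^1$ chart, whereas you explicitly take $r=2$ so that the pushed-forward vector field --- and hence the matrices $A(y)$ and the diagonalizing family $P(y)$ --- are $C^1$ in $y$; this is a genuine refinement, since with a merely $C^1$ chart the coefficient functions are a priori only continuous, and the fiberwise linear change of coordinates $x=P(y)x'$ would not obviously be $C^1$. (The paper's claim that ``eigenvalues and eigendirections of $q_\zeta$ depend in a smooth way of $\zeta$'' implicitly presupposes this extra regularity.) Second, your argument for $\phi\equiv 0$ goes through the zero set of $X_{\cB}$ (invertibility of $A(y)$ and $b(y)\neq 0$ force the zero set onto the $y$-axis, which then must coincide with $\cK\cap U$), whereas the paper argues via the uniqueness of the $X_{\cB}$-invariant curve through $p$ with degenerate tangency; both are fine, and yours is arguably a bit more elementary. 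One minor wording caveat: you state that at a ``non-special point of $\cK$'' the two stable eigenvalues of $DX_{\cB}$ are distinct; this actually fails at the three points with Kasner parameter $u=1$, which are non-special but preperiodic (they map to Taub points under $f$). It is the hypothesis that $p$ is not preperiodic --- hence $u(p)\neq 1$ --- that makes the eigenvalues distinct at $p$, so the claim you actually use is correct, but phrasing it as a property of arbitrary non-special points is slightly off (and, for what it's worth, the paper makes the same slip in its introductory discussion).
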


\begin{remarks}
\label{r.linearisation-properties}
Let $U$ be a neighborhood of the point $p$ and $\left(x_1,x_2,y,z\right)$ be a $C^1$ coordinate system on $U$ centered at $p$, such that $X_\cB$ satisfies~\eqref{e.Takens-Bianchi} with $\lambda^s_1(y)<\lambda^s_2(y)<0$ and $\lambda^u(y)>0$. Then:
\begin{enumerate}
\item[(i)]  The vector field $X_{\cB}$ vanishes on the one-dimensional submanifold $\{x_1=x_2=z=0\}$ and nowhere else. It follows that this submanifold is the intersection of the Kasner circle with $U$~:
$$\cK\cap U =  \{x_1=x_2=z=0\}.$$
\item[(ii)] For every $y$, the real numbers $\lambda^s_1(y),\lambda^s_2(y),\lambda^u(y)$ are the three non-zero eigenvalues of the derivative of $X_{\cB}$ at the point of coordinates $(0,0,y,0)\in\cK$. Recall that this derivative also has one zero eigenvalue (corresponding to the direction of the Kasner circle). 
\item[(iii)] For every $\zeta$, the three-dimensional sub-manifold $\{y=\zeta\}$ is invariant under the flow of $X_{\cB}$ and the restriction of $X_{\cB}$ to this submanifold is linear. 
\item[(iv)] The three-dimensional submanifolds $\{x_1=0\}$, $\{x_2=0\}$ and $\{z=0\}$ are invariant under the flow of $X_{\cB}$, and contain $\cK\cap U$. It follows that these submanifolds coincide up to permutation with the submanifolds $\{N_1=0\}$, $\{N_2=0\}$ and $\{N_3=0\}$. 
As a consequence, the two-dimensional submanifolds $\{x_1=x_2=0\}$, $\{x_1=z=0\}$ and \hbox{$\{x_2=z=0\}$} coincide up to permutation with the submanifolds $\{N_1=N_2=0\}$, \hbox{$\{N_1=N_3=0\}$} and $\{N_2=N_3=0\}$. In particular, 
$$(\cB_{\2}\cap\cK)\cap U= \{x_1=x_2=0\}\cup \{x_1=z=0\}\cup \{x_2=z=0\}.$$
\item[(v)] \label{i.centralizer} The right-hand side of~\eqref{e.Takens-Bianchi} is unchanged if one replaces $x_1$ (resp. $x_2$ and $z$) by $-x_1$ (resp. by $-x_2$ and $-z$). Therefore, we may assume that 
$$\cB^+\cap U = \{ x_1\geq 0 \,,\, x_2\geq 0 \,,\, z\geq 0 \}.$$
\item[(vi)]  On the one hand, according to item~(i), the metric induced on the one-dimensional submanifold $\cK\cap U$ by the riemannian metric $(dx_1)^2+(dx_2)^2+(dy)^2+(dz)^2$ is simply $(dy)^2$. On the other hand, the right-hand side of~\eqref{e.Takens-Bianchi} is unchanged if one replaces $y$ by $\varphi(y)$ where $\varphi:\RR\to\RR$ is any  diffeomorphism such that $\varphi(0)=0$. Therefore, up to replacing the coordinate $y$ by $\varphi(y)$ for some appropriate diffeomorphism $\varphi$, we may assume that the metrics induced on the piece of circle $\cK\cap U$ by the riemannian metrics $(dx_1)^2+(dx_2)^2+(dy)^2+(dz)^2$ andby the riemannian metric  $h=(d\Sigma_1)^2+(d\Sigma_2)^2+(d\Sigma_3)^2+(dN_1)^2+(dN_2)^2+(dN_3)2$ coincide. 
\end{enumerate}
\end{remarks}

\begin{proof}[Proof of proposition~\ref{p.linearisation}]
The derivative $DX_{\cB}(p)$ has two negative, one zero, and one positive eigenvalue. According to proposition~\ref{p.characterisation-Sternberg}, the vector field $X_{\cB}$ satisfies Sternberg's condition at $p$. Therefore, a  crude application of Takens' theorem~\ref{t.Takens} implies that there exists a $C^1$ local coordinate system $(x_1,x_2,y,z)$ on a neighborhood $U$ of $p$ in $\cB$, centered at $p$, such that~:
\begin{equation}
\label{e.crude-Takens}
X_{\cB}(x_1,x_2,y,z)=\sum_{i=1}^2\sum_{j=1}2a_{i,j}(y)x^j\frac{\partial}{\partial x^i}+\phi(y)\frac{\partial}{\partial y}+b(y)z\frac{\partial}{\partial z}
\end{equation}
for some real valued functions $\phi,a_{1,1},a_{1,2},a_{2,1},a_{2,2},b$ defined on a neighborhood of $0$ in $\RR$. Moreover, the eigenvalues of the matrix $(a_{i,j}(0))$ are negative, and $b(0)$ is positive. 
Replacing $U$ by a smaller neighborhood of $p$ if necessary, we can assume that the three Taub points $T_1,T_2,T_3$ are not in $U$.

Now, \eqref{e.crude-Takens} implies that the curve $\{x_1=x_2=z=0\}$ is the only curve in $U$ containing the point $p$, invariant under the flow of $X_{\cB}$, and such that $DX_{\cB}(p)$ vanishes on the tangent space at $p$ of this curve. Hence, the curve $\{x_1=x_2=z=0\}$ has to be the intersection of the Kasner circle $\cK$ with $U$. Since $X_{\cB}$ vanishes on $\cK$, it follows that $\phi=0$. 

For $\zeta$ small enough, let $q_\zeta$ be the point of coordinates $(x_1,x_2,y,z)=(0,0,\zeta,0)$. This is a point of the Kasner circle $\cK$, which is not a Taub point. Hence, the derivative $DX(q_\zeta)$ has four distinct eigenvalues~: two distinctive negative eigenvalues $\lambda^s_1(\zeta)<\lambda^s_2(\zeta)<0$, one zero eigenvalue associated  to the direction of the Kasner circle, and one positive eigenvalue $\lambda^u(\zeta)>0$. The set $(y=\zeta)$ is a three-dimensional manifold, transversal to the Kasner circle. Looking at~\eqref{e.crude-Takens}, we see that this three-dimensional submanifold is invariant under the flow of $X_{\cB}$, and that the restriction of $X_{\cB}$ to this invariant manifold is linear.This shows that $b(\zeta)=\lambda^u(\zeta)$, and that $\lambda^s_1(\zeta)$ and $\lambda^s_2(\zeta)$ are the eigenvalues of the matrix $(a_{i,j}(\zeta))_{i,j=1,2}$. Since $\lambda^s_1(\zeta)$ and $\lambda^s_2(\zeta)$ are distinct, there exists a linear change of coordinates $(x_1,x_2,z)\rightarrow (\hat x_1,\hat x_2,z)$ on the submanifold $(y=\zeta)$, so that 
$$X_{\cB}(\hat x_1,\hat x_2,\zeta,z)=\lambda^s_1(\zeta)\hat x_1\frac{\partial}{\partial \hat x_1}+\lambda^s_2(\zeta)\hat x_2\frac{\partial}{\partial \hat x_2}+\lambda^u(\zeta)z\frac{\partial}{\partial z}.$$ 
Since eigenvalues and eigendirections of the point $q_{\zeta}$ depend in a smooth way of $\zeta$, one may perform the above change of coordinates simultaneously for every $\zeta$, and get a $C^1$ coordinate system $(\hat x_1,\hat x_2,y,z)$ defined on $U$, such that
$$X_{\cB}(\hat x_1,\hat x_2,y,z)=\lambda^s_1(y)\hat x_1\frac{\partial}{\partial \hat x_1}+\lambda^s_2(y)\hat x_2\frac{\partial}{\partial \hat x_2}+\lambda^u(y)z\frac{\partial}{\partial z}.$$ 
The proposition is proven.
\end{proof}

\subsection{Choice of a linearizing coordinate system}
\label{ss.choice-neighborhood}
From now on, for every point $p$ of the Kasner circle which is not preperiodic under the Kasner map, we fix a neighborhood $U^p$ of $p$ in $\cB$, and a $C^1$ local coordinate system $(x^p_1,x^p_2,y^p,z^p)$ on $U^p$, centered at $p$, such that, in this coordinate system, the Wainwright-Hsu vector field $X_{\cB}$ reads: 
\begin{equation}
\label{e.def-local-coordinate-system}
X_{\cB}\left(x_1^p,x_2^p,y^p,z^p\right)=\lambda^s_1(y^p)x_1^p\frac{\partial}{\partial x_1^p}+\lambda^s_2(y^p)x_2^p\frac{\partial}{\partial x_2^p}+\lambda^u(y^p)z^p\frac{\partial}{\partial z^p}
\end{equation} 
with $\lambda^s_1(y^p)<\lambda^s_2(y^p)<0$ and $\lambda^u(y^p)>0$. According to item~(ii) of remarks~\ref{r.linearisation-properties}, the real numbers $\lambda^s_1(y^p)$, $\lambda^s_2(y^p)$, $\lambda^u(y^p)$ are the non-zero eigenvalues of the point $(0,0,y^p,0)$. According to items (i), (iv), (v) of remarks~\ref{r.linearisation-properties}, up to changing the sign of the coordinates $x^p_1$, $x^p_2$ and $z^p$, we may (and we will) assume that
\begin{eqnarray}
\label{e.K}
\cK\cap U^p & = & \{x^p_1=x^p_2=z=0\}\\
\label{e.B_II}
(\cB_{\2}\cup\cK)\cap U^p & = & \{x^p_1=x^p_2=0\}\cup \{x_1^p=z^p=0\}\cup \{x_2^p=z^p=0\},\\
\label{e.B+}
\cB^+\cap U^p &= &\{ x^p_1\geq 0 \,,\, x^p_2\geq 0 \,,\, z^p\geq 0 \}
\end{eqnarray}
Note that the derivative $DX_{\cB}(q)$ has three non-zero eigenvalues for every point $q\in \cK\cap U^p$; this shows that the neighborhood $U^p$ is disjoint from the three special points. We will consider the riemannian metric $g^p$ on $U^p$ defined by
\begin{equation}
\label{e.g^p}
g^p:=\left(dx^p_1\right)^2+\left(dx^p_2\right)^2+\left(dy^p\right)^2+\left(dz^p\right)2.
\end{equation}
According to item (vi) of remarks~\ref{r.linearisation-properties}, up to replacing the coordinate $y^p$ by $\varphi(y^p)$ for some appropriate diffeomorphism $\varphi$, we may (and we will) assume that $g^p$ induces the same metric on the piece of Kasner circle $\cK\cap U^p$ as the riemannian metric $h=(d\Sigma_1)^2+(d\Sigma_2)^2+(d\Sigma_3)^2+(dN_1)^2+(dN_2)^2+(dN_3)^2$.


\section[Dulac map for Wainwright-Hsu vector field]{Dulac map for Wainwright-Hsu vector field near a point of the Kasner circle which is not preperiodic under the Kasner map}
\label{s.Dulac}

Let $p$ be a point of the Kasner circle which is not preperiodic for the Kasner map. The purpose of the present section is to analyze the behavior of the orbits of the Wainwright-Hsu vector field $X_\cB$ close to $p$. More precisely, we want to consider an orbit of $X_{\cB}$ which passes close to $p$, and to study the evolution of the distance from this orbit to the mixmaster attractor $\cA=\cK\cup\cB_{\2}$, as well as the drift of this orbit in the direction tangent to the mixmaster attractor.  

\subsection{The flow of $X_{\cB}$ inside $U^p$}
We consider the neighborhood $U^p$, and the coordinate system $(x^p_1,x^p_2,y^p,z^p)$ defined in subsection~\ref{ss.choice-neighborhood}. Using the expression~\eqref{e.def-local-coordinate-system}, one can calculate explicitly the time~$t$map $X_{\cB}^t$ of the flow of the Wainwright-Hsu vector field $X_{\cB}$ in the $(x^p_1,x^p_2,y^p,z^p)$ coordinate system. It reads
\begin{equation}
\label{e.linear-flow}
X_{\cB}^t\left(x^p_1,x^p_2,y^p,z^p\right)= \left(x^p_1 e^{\lambda^s_1(y^p)t} \,,\, x^p_2 e^{\lambda^s_2(y^p)t} \,,\, y^p \,,\, z^p e^{\lambda^u(y^p)t} \right).
\end{equation}
Of course, this expression is only valid as long as the orbit remains in the neighborhood $U^p$.

\subsection{The box $V^p$}
Now, we fix some constants $\alpha<\beta$ and $\gamma>0$, and we consider the subset $V^p$ of~$U^p\cap\cB^+$ defined by:
\begin{equation}
\label{e.V^p}
V^p=V^p(\alpha,\beta,\gamma)=\{ 0\leq x^p_1\leq \gamma \;,\; 0\leq x^p_2\leq \gamma \;,\; 0\leq z^p\leq \gamma \;\mbox{ and }\;\alpha\leq y^p\leq\beta\}.
\end{equation}
We assume that $\alpha,\beta,\gamma$ are small enough, so that $V^p$ is contained in the interior of $U^p$. 

\begin{remark}
\label{r.CNS-neighborhood}
The set $V^p$ is a neighborhood of the point $p$ in $\cB^+$ if and only if $\alpha<0<\beta$. It is important to note that the results of the present section are valid even if $p$ is not in $V^p$. 
\end{remark}

In the $(x^p_1,x^p_2,y^p,z^p)$ coordinate system, the set $V^p$ is a 4-dimensional box, \emph{i.e.} the cartesian product of four closed intervals. The boundary of $V^p$ is made of eight faces. Three of these eight faces will play an important role in the remainder of the paper: 
\begin{equation}
\label{e.MM'N}
M^p_1:=V^p\cap \{x^p_1=\gamma\}\quad\quad\quad M^p_2  :=  V^p\cap \{x^p_2=\gamma\}\quad\quad\quad N^p :=  V^p\cap \{z^p=\gamma\}.
\end{equation}
Looking at~\eqref{e.def-local-coordinate-system}, we notice that $X_{\cB}$ is transversal to $M^p_1$, $M^p_2$ and $N^p$, and is tangent to the five other faces of~$V^p$. Moreover, we notice that $X_{\cB}$ is pointing inward $V^p$ along~$M^p_1$ and~$M^p_2$; it is pointing outward~$V^p$ along~$N^p$. It follows that:
\begin{itemize}
\item an orbit of $X_{\cB}$ can enter in $V^p$ by crossing either the face $M^p_1$ or by crossing the face $M^p_2$;
\item an orbit of $X_{\cB}$ can only exit $V^p$ by crossing the face $N^p$.
\end{itemize}

\subsection{Behavior of type $\2$ orbits}
We will study the behavior of the orbits of $X_{\cB}$ in $V^p$. First, we focus our attention on type $\2$ orbits. We want to understand which type $\2$ orbit intersect the hypersurfaces (with boundary and corners) $M^p_1$, $M^p_2$ and $N^p$. Recall that every type $\2$ orbit $\cO$ of $X_\cB$ is a heteroclinic orbit connecting a point $\alpha(\cO)\in\cK\setminus\{T_2,T_2,T_3\}$ to the point $\omega(\cO)=f(\alpha(\cO))\in\cK\setminus\{T_2,T_2,T_3\}$.

\begin{proposition}
\label{p.behavior-II}
Let $q$ be a point $\cB_{\2}^+$, and denote by $\cO$ the orbit of $q$. Let $\alpha(q)=\alpha(\cO)$ denote the unique $\alpha$-limit point of $\cO$, and $\omega(q)=\omega(\cO)=f(\alpha(\cO))$ denote the unique $\omega$-limit point of $\cO$. 
\begin{enumerate}
\item The orbit $\cO$ intersects the hypersurface $N^p$ if and only if the point $\alpha(q)$ is in $V^p$. 
\item  The orbit of $\cO$ intersects the hypersurface $M^p_1\cup M^p_2$ if and only if the point $\omega(q)$ is in $V^p$.
\end{enumerate}
\end{proposition}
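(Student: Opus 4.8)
The plan is to use the explicit description of the linearized flow inside $U^p$ together with the fact that type $\2$ orbits are heteroclinic orbits whose endpoints lie on the Kasner circle. I would first reduce to the local picture: since the box $V^p$ is contained in $U^p$, any intersection of $\cO$ with $M^p_1$, $M^p_2$ or $N^p$ is witnessed by a segment of $\cO$ lying inside $U^p$, on which the flow is given by the linear formula~\eqref{e.linear-flow}, namely $X_{\cB}^t(x^p_1,x^p_2,y^p,z^p)=(x^p_1 e^{\lambda^s_1(y^p)t},x^p_2 e^{\lambda^s_2(y^p)t},y^p,z^p e^{\lambda^u(y^p)t})$. In particular the coordinate $y^p$ is constant along such a segment, and by~\eqref{e.B_II} the type $\2$ orbit $\cO$, having both endpoints on $\cK$, must lie in one of the three two-dimensional submanifolds $\{x^p_1=x^p_2=0\}$, $\{x^p_1=z^p=0\}$, $\{x^p_2=z^p=0\}$ whenever it enters $U^p$ — so really only two of the four coordinates are moving, one contracting and one expanding (or, in the $\{x^p_1=x^p_2=0\}$ case, only $z^p$ is moving and $\cO$ is forced to degenerate, which I'd handle as a boundary subcase).

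For part~(1): if $\alpha(q)\in V^p$, then near its $\alpha$-limit point $\cO$ is tangent to the unstable eigendirection, i.e. (after the permutation identifying $\cK\cap U^p$ with $\{x^p_1=x^p_2=z^p=0\}$) it emanates from the Kasner point $(0,0,y_0,0)$ with $y_0=y^p(\alpha(q))\in[\alpha,\beta]$ along the $z^p$-axis, staying in $\{x^p_1=0\}\cap\{x^p_2=0\}$... more precisely in a submanifold of the form $\{x^p_i=0,\ x^p_j\ \text{small}\}$ where the expanding coordinate is $z^p$. Since $z^p(t)=z^p_0 e^{\lambda^u(y_0)t}$ grows monotonically from $0$ while the other moving coordinate contracts to $0$ and $y^p\equiv y_0$, the orbit stays inside $V^p$ (all coordinates remain in $[0,\gamma]$ and $y^p\in[\alpha,\beta]$) until $z^p$ reaches $\gamma$, at which moment $\cO$ crosses $N^p=V^p\cap\{z^p=\gamma\}$. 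Conversely, if $\cO$ meets $N^p$ at a point with $y^p=y_0\in[\alpha,\beta]$, then running the linear flow backward, $z^p e^{\lambda^u(y_0)t}\to 0$, $y^p\equiv y_0$, and the contracting coordinate also stays bounded, so the backward orbit never leaves $V^p$; hence it converges to the Kasner point $(0,0,y_0,0)$, which therefore equals $\alpha(q)$ and lies in $V^p$. Part~(2) is the mirror image, switching the roles of $z^p$ and of the relevant contracting coordinate $x^p_i$: if $\omega(q)=f(\alpha(q))\in V^p$, then near $\omega(q)$ the orbit $\cO$ is tangent to a stable eigendirection, i.e. it lands along one of the $x^p_i$-axes, and running the flow backward from $\omega(q)$ the coordinate $x^p_i(t)=x^p_{i,0}e^{\lambda^s_i(y_0)t}$ grows monotonically (backward in $t$) from $0$ up to $\gamma$ while the expanding coordinate stays small and $y^p$ is constant, so $\cO$ crosses $M^p_i=V^p\cap\{x^p_i=\gamma\}$; conversely a crossing of $M^p_1\cup M^p_2$ propagates forward to a convergence to $\omega(q)\in V^p$ by the same monotonicity argument.

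The only genuinely delicate point is matching the abstract geometric facts about type $\2$ orbits — recalled in the introduction (every such orbit springs up at $\alpha(q)$ tangent to the positive eigendirection of $DX_{\cB}(\alpha(q))$ and lands at $\omega(q)$ tangent to the negative eigendirections of $DX_{\cB}(\omega(q))$, and the two-dimensional strata $\cB_{\2}\cap\cB^+$ are the three hemi-ellipsoids) with the coordinate description~\eqref{e.K}--\eqref{e.B+}: one must check that, in the linearizing chart, the piece of $\cO$ near $\alpha(q)$ really does lie in the submanifold $\{x^p_1=x^p_2=0\}$-complement pattern where $z^p$ is the expanding variable, and symmetrically near $\omega(q)$, rather than in some other stratum. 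This follows from item~(iv) of Remarks~\ref{r.linearisation-properties}, which identifies $\{x^p_1=0\},\{x^p_2=0\},\{z^p=0\}$ with $\{N_1=0\},\{N_2=0\},\{N_3=0\}$ up to permutation, together with the fact that a type $\2$ orbit lies in exactly one such wall (exactly two of the $N_i$ vanish) and is non-degenerate in the remaining two coordinates near each endpoint. I expect this bookkeeping — keeping track of which permutation occurs and ruling out the degenerate case $\{x^p_1=x^p_2=0\}$ by a dimension/transversality remark — to be the main obstacle, but it is essentially routine once the linear flow formula~\eqref{e.linear-flow} is in hand; the monotonicity of $e^{\lambda t}$ does all the real work.
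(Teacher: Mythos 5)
Your proposal is essentially correct and takes the same route as the paper: everything is read off the explicit linear-flow formula~\eqref{e.linear-flow} together with the local description~\eqref{e.B_II} of $\cK\cup\cB_{\2}$ in the linearizing chart. A few of your intermediate remarks, however, need tightening, and one of them would be false if taken at face value.

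On a type $\2$ orbit inside $U^p$, exactly \emph{one} of the coordinates $x^p_1,x^p_2,z^p$ varies (and $y^p$ is constant), because $\cO$ lies in one of the three codimension-two strata $\{x^p_1=x^p_2=0\}$, $\{x^p_1=z^p=0\}$, $\{x^p_2=z^p=0\}$ where the other two vanish identically. Your phrase ``only two of the four coordinates are moving, one contracting and one expanding'' describes a type $\7$ orbit, not a type $\2$ one, and the case $\{x^p_1=x^p_2=0\}$ you flag as ``degenerate'' and want to handle ``as a boundary subcase'' is in fact the \emph{only} case relevant to part~(1): if $\cO$ meets $N^p$, then since $z^p=\gamma\neq 0$ there, \eqref{e.B_II} forces $\cO\cap U^p\subset\{x^p_1=x^p_2=0\}$. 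This observation is also what makes your converse argument go through, because as literally written it does not: you assert that running the flow backward from $\bar q\in\cO\cap N^p$ ``the contracting coordinate also stays bounded, so the backward orbit never leaves $V^p$''. For an arbitrary point of $N^p$ this is false — the coordinates $x^p_1,x^p_2$ blow up exponentially in backward time since $\lambda^s_i<0$. What saves you is that $\bar q\in\cB_{\2}\cap N^p=\{x^p_1=x^p_2=0,\ \alpha\leq y^p\leq\beta,\ z^p=\gamma\}$, so those coordinates are identically zero, not merely bounded. The paper's proof records this equality up front and the rest is immediate; if you insert that step, your argument coincides with the paper's.
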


\begin{proof}
We prove the first statement; the second one follows from similar arguments. Of course, we will work in the $(x^p_1,x^p_2,y^p,z^p)$ coordinate system. According to~\eqref{e.K},~\eqref{e.B_II},~\eqref{e.V^p} and~\eqref{e.MM'N},
\begin{eqnarray*}
\cK\cap V^p & = & \{ x^p_1=x^p_2=0 \,,\, \alpha\leq y^p\leq\beta \,,\,z^p=0\},\\
\cB_{\2}\cap N^p& = & \{ x^p_1=x^p_2=0 \,,\, \alpha\leq y^p\leq\beta \,,\,z^p=\gamma\}.
\end{eqnarray*}
Suppose that the orbit $\cO$ intersects the hypersurface $N^p$ at some point  $\bar q=(0,0,\zeta,\gamma)$, with $\alpha\leq \zeta\leq\beta$. Then, according to~\eqref{e.linear-flow}, the past orbit of $\bar q$ is contained in $V^p$, and converges to the point  $\alpha(q)=(0,0,\zeta,0)\in\cK\cap V^p$. In particular, the point $\alpha(q)$ is in~$V^p$. Conversely, suppose that the point $\alpha(q)$ is in $V^p$. Then  $\alpha(q)=(0,0,\zeta,0)$ for some $\zeta\in [\alpha,\beta]$. Using again~\eqref{e.linear-flow}, we see that the only orbit of $X_{\cB}$ in $V^p$ converging towards the point $(0,0,\zeta,0)$ as $t\to-\infty$ is the curve $t\mapsto (0,0,\zeta,e^{\lambda^u(\zeta)t})$. Hence, the orbit $\cO$ intersects the hypersurface $N^p$ at the point $\bar q=(0,0,\zeta,\gamma)$. 
\end{proof}

This proposition allows to define two maps 
$$\begin{array}[t]{crclccrclc}
\alpha^p : &  \cB_{\2}^+\cap N^p & \longrightarrow  & \cK\cap V^p & \quad\quad & \omega^p :  & \cB_{\2}^+\cap (M^p_1\cup M^p_2) & \longrightarrow & \cK\cap V^p\\
& q & \longmapsto & \alpha(q) & & & q & \longmapsto & \omega(q)
\end{array}$$
The map $\alpha^p$ is one-to-one (there is only one type $\2$ orbit in $\cB^+$ which ``starts" at a given point of $\cK\setminus\{T_2,T_2,T_3\}$), whereas the map $\omega^p$ is two-to-one (there are two type $\2$ orbits in $\cB^+$ which ``arrive" at a given point of~$\cK\setminus\{T_2,T_2,T_3\}$). The restriction of $\omega^p$ to $\cB_{\2}^+\cap M^p_1$ (resp. to $\cB_{\2}^+\cap M^p_2$) is one-to-one.

\begin{proposition}
\label{p.isometries}~
The maps $\alpha^p$ and $\omega^p$ are local isometries with respect to the metrics induced on $\cB_{\2}^+\cap N^p$, $ \cB_{\2}^+\cap (M^p_1\cup M^p_2)$ and $\cK\cap V^p$ by the riemannian metric $g^p=(dx^p_1)^2+(dx^p_2)^2+(dy^p)^2+(dz^p)2$. 
\end{proposition}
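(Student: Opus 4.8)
The plan is to compute both maps explicitly in the linearizing coordinate system $(x^p_1,x^p_2,y^p,z^p)$, using the flow formula~\eqref{e.linear-flow}, and to read off directly that they preserve the metric $g^p$. First I would treat $\alpha^p$. By the proof of proposition~\ref{p.behavior-II}, a point $\bar q\in\cB_{\2}^+\cap N^p$ has coordinates $(0,0,\zeta,\gamma)$ with $\zeta\in[\alpha,\beta]$, and its $\alpha$-limit point is $\alpha^p(\bar q)=(0,0,\zeta,0)$. So in coordinates $\alpha^p$ is nothing but the map $(0,0,\zeta,\gamma)\mapsto(0,0,\zeta,0)$; both $\cB_{\2}^+\cap N^p$ and $\cK\cap V^p$ are parametrized by $\zeta\in[\alpha,\beta]$ (via $y^p=\zeta$), and the induced metric $g^p$ restricts to $(dy^p)^2=(d\zeta)^2$ on each of them. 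Hence $\alpha^p$ is literally the identity in the parameter $\zeta$, so it is an isometry — in fact a global one.

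Next I would do the same for $\omega^p$. By part~(2) of proposition~\ref{p.behavior-II} together with~\eqref{e.B_II},~\eqref{e.V^p},~\eqref{e.MM'N}, the set $\cB_{\2}^+\cap M^p_1$ consists of the points lying on type~$\2$ orbits whose $\omega$-limit point is in $\cK\cap V^p$; in the coordinates, using~\eqref{e.B_II}, the type~$\2$ orbits that meet $M^p_1=\{x^p_1=\gamma\}$ are those in the component $\{x^p_2=z^p=0\}$ (the other component $\{x^p_1=x^p_2=0\}$ having $x^p_1\equiv 0$, hence never reaching $x^p_1=\gamma$, and $\{x^p_1=z^p=0\}$ being reached only through $M^p_2$ after the permutation of roles). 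A point of $\cB_{\2}^+\cap M^p_1$ therefore has coordinates $(\gamma,0,\zeta,0)$ for $\zeta\in[\alpha,\beta]$; by~\eqref{e.linear-flow} its forward orbit is $t\mapsto(\gamma e^{\lambda^s_1(\zeta)t},0,\zeta,0)$, which converges as $t\to+\infty$ to $(0,0,\zeta,0)\in\cK\cap V^p$. So again $\omega^p$ restricted to $\cB_{\2}^+\cap M^p_1$ is the identity in the parameter $\zeta$, hence an isometry; likewise for its restriction to $\cB_{\2}^+\cap M^p_2$ (with $x^p_1$ and $x^p_2$ exchanged). Since $\cB_{\2}^+\cap(M^p_1\cup M^p_2)$ is the disjoint union of these two pieces (they meet only along the corner $\{x^p_1=x^p_2=\gamma\}$, which carries no type~$\2$ orbit), $\omega^p$ is a local isometry on the whole source.

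The only point requiring a little care — and the main (minor) obstacle — is the bookkeeping of which face of $V^p$ each of the three type~$\2$ stable/unstable strands $\{x^p_1=x^p_2=0\}$, $\{x^p_1=z^p=0\}$, $\{x^p_2=z^p=0\}$ actually crosses, i.e. matching the abstract statement of proposition~\ref{p.behavior-II} with the explicit coordinate pictures. Once one observes from~\eqref{e.linear-flow} that along a type~$\2$ orbit exactly one of $x^p_1,x^p_2$ grows (the one with the stable eigenvalue governing that strand) while $z^p\equiv 0$, the identification is forced: the strand with $x^p_1$ growing exits through $M^p_1$, the one with $x^p_2$ growing exits through $M^p_2$, and the strand $\{x^p_1=x^p_2=0\}$ only ever enters through $N^p$ (it is the one with $z^p$ growing, tracked by $\alpha^p$). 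After that, nothing but the trivial remark that $g^p$ induces $(d\zeta)^2$ on every curve of the form $\{$two coordinates fixed, $y^p=\zeta$, one coordinate fixed$\}$ is needed, and the proof is complete. I would also note in passing that, combined with item~(vi) of remarks~\ref{r.linearisation-properties}, this shows $\alpha^p$ and $\omega^p$ are isometries for the restriction of the ambient metric $h$ as well, which is the form actually used later.
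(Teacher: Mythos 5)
Your proof is correct and follows essentially the same approach as the paper's: write $\alpha^p$ and $\omega^p$ explicitly in the linearizing coordinates and observe they act as the identity on the $y^p$-parameter, hence preserve the induced metric $(dy^p)^2$. The paper's proof is a two-line version of your argument; the extra bookkeeping you do about which type $\2$ strand meets which face is a correct elaboration of what the paper cites from proposition~\ref{p.behavior-II}.
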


\begin{proof}
The proof of proposition~\ref{p.behavior-II} shows that, in the $(x^p_1,x^p_2,y^p,z^p)$ coordinate system, the map $\alpha^p$ reads $\alpha^p(0,0,y^p,\gamma)=(0,0,y^p,0)$. Similarly, the map $\omega^p$ reads $\omega^p(\gamma,0,y^p,0)= (0,0,y^p,0)$. 
\end{proof}

\subsection{The Dulac map $\Delta^p_1:M^p_1\to N^p$}
Now, we want to study the behavior of arbitrary orbits of $X_{\cB}$ which enter in $V^p$. Let $q$ be a point on the face $M^p_1$. Denote by $(\gamma,x_2^p,y^p,z^p)$ the coordinates of $q$. If the $z^p>0$ (which is typically the case if $q\in\cB_{\9}$), then~\eqref{e.linear-flow} shows that the forward orbit of $q$ will eventually exit $V^p$ by crossing the face $N^p$. If $z^p=0$ (which is typically the case if $q\in\cB_{\2}$), then~\eqref{e.linear-flow} shows that the forward orbit of $q$ will remain in $V^p$ forever. It will converge towards the point $\omega(q)\in\cK\cap V^p$. According to proposition~\ref{p.behavior-II}, the heteroclinic orbit $\cO_{\omega(q),f(\omega(q))}$ will eventually exit~$V^p$, by crossing the face $N^p$. So, we may define a map $\Delta^p_1\;:\;M^p_1\longrightarrow N^p$ as follows~:
\begin{itemize}
\item if $z^p>0$ then $\Delta^p_1(q)$ is the first intersection point of the orbit of $q$ with the hypersurface$N^p$~;
\item if $z^p=0$ then $\Delta^p_1(q)$ is the first (and unique) intersection point of the type $\2$ heteroclinic orbit $\cO_{\omega(q), f(\omega(q))}$ with  the hypersurface $N^p$.
\end{itemize}
We call $\Delta^p_1$  a \emph{Dulac map} since it is the exact analog, in our situation, of the classical Dulac maps used to study planar vector fields. 
Formula~\eqref{e.linear-flow} show that, in the $(x^p_1,x^p_2,y^p,z^p)$ coordinate system, the map $\Delta^p_1$ reads~:
\begin{eqnarray}
\label{e.Dulac-1}
\Delta^p_1\left (\gamma \,,\, x^p_2 \,,\, y^p \,,\, z^p \right) & = & \left( \gamma.\left(\frac{z^p}{\gamma}\right)^{-\frac{\lambda^s_1(y^p)}{\lambda^u(y^p)}} \;,\; x^p_2.\left(\frac{z^p}{\gamma}\right)^{-\frac{\lambda^s_2(y^p)}{\lambda^u(y^p)}} \;,\; y^p \;,\; \gamma\right) \mbox{ if } z^p>0 \\
\label{e.Dulac-2}
\Delta^p\left (\gamma \,,\, x^p_2 \,,\, y^p \,,\, 0 \right) & = & \left( 0 \,,\, 0 \,,\, y^p \,,\, \gamma\right)
\end{eqnarray}

\begin{remark}
Given a point $q$ in $M^p_1$ such that $z^p(q)>0$, one can consider the \emph{exit time} of $q$, \emph{that is} the real number $t(q)$ such that $\Delta^p_1(q)=X^{t(q)}(q)$. Using~\eqref{e.linear-flow}, it is easy to see 
$$t\left (\gamma \,,\, x^p_2 \,,\, y^p \,,\, z^p \right) = -\frac{1}{\lambda^u(y^p)}\log\left(\frac{z^p}{\gamma}\right)$$
\end{remark}

For every $q\in M^p_1$, we decompose $T_q M^p$ as a direct sum of two linear subspaces $F^s_q \oplus F^u_q$ where 
\begin{eqnarray}
F^s_q & := & \RR.\frac{\partial}{\partial x^p_2}(q)\oplus\RR.\frac{\partial}{\partial z^p}(q)\\
F^u_q & := & \RR. \frac{\partial}{\partial y^p}(q).
\end{eqnarray}
Similarly, for every $r\in N^p$, we decompose $T_r N^p$ as a direct sum of two linear subspaces $G^s_r \oplus F^u_r$ where 
\begin{eqnarray}
G^s_r & := & \RR.\frac{\partial}{\partial x^p_1}(r)\oplus\RR.\frac{\partial}{\partial x^p_2}(r)\\
G^u_r & := & \RR. \frac{\partial}{\partial y^p}(r).
\end{eqnarray}
We can now state the properties of the Dulac map $\Delta^p_1$ which will be the core of our proof of theorem~\ref{t.main}:

\begin{proposition}
\label{p.hyperbolicity-Phi}
The Dulac map $\Delta^p_1:M^p_1\rightarrow N^p$ is $C^1$. Moreover, for every point $q\in M^p_1\cap \{z^p=0\}$, the derivative of the map $\Delta^p_1$ at $q$ satisfies: 
\begin{itemize}
\item $D\Delta^p_1(q).v=0$ for every vector $v\in F^s_q$;\smallskip
\item $D\Delta^p_1(q)$ maps $F^u_q$ on $G^u_{\Delta^p_1(q)}$, and $\|D\Delta^p_1(q).v\|_{g^p}=\|v\|_{g^p}$ for every vector $v\in F^u_q$.
\end{itemize}
\end{proposition}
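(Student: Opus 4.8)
The plan is to read off everything from the explicit formulas \eqref{e.Dulac-1} and \eqref{e.Dulac-2} for $\Delta^p_1$, plus the fact that $\lambda^s_1, \lambda^s_2, \lambda^u$ are $C^1$ functions of $y^p$ with $\lambda^s_1(y^p) < \lambda^s_2(y^p) < 0 < \lambda^u(y^p)$. The point $q \in M^p_1 \cap \{z^p = 0\}$ has coordinates $(\gamma, x^p_2, y^p, 0)$, and $\Delta^p_1(q) = (0,0,y^p,\gamma)$. The statement has three parts: (a) $\Delta^p_1$ is $C^1$ everywhere on $M^p_1$; (b) $D\Delta^p_1(q)$ kills $F^s_q$; (c) $D\Delta^p_1(q)$ is an isometry from $F^u_q$ onto $G^u_{\Delta^p_1(q)}$.

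First I would establish the $C^1$ regularity. Away from $\{z^p = 0\}$ this is clear since $\Delta^p_1$ is given by the smooth formula \eqref{e.Dulac-1} (the exponents $-\lambda^s_i(y^p)/\lambda^u(y^p)$ are $C^1$ in $y^p$ and strictly positive, so $(z^p/\gamma)^{-\lambda^s_i(y^p)/\lambda^u(y^p)}$ is $C^1$ in $(y^p,z^p)$ on a neighborhood of $z^p=0$, extending by $0$ at $z^p=0$ together with its first derivatives). The real work is to check that the one-sided partial derivatives at points of $\{z^p = 0\}$ agree with the derivative of the formula \eqref{e.Dulac-2}, namely $(\gamma, x^p_2, y^p, 0) \mapsto (0, 0, y^p, \gamma)$, and that they vary continuously across the boundary. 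Set $\mu_i(y^p) := -\lambda^s_i(y^p)/\lambda^u(y^p) > 0$. The first two components of $\Delta^p_1$ are $\gamma (z^p/\gamma)^{\mu_1(y^p)}$ and $x^p_2 (z^p/\gamma)^{\mu_2(y^p)}$; as $z^p \to 0^+$ these and all their first-order partials (in $x^p_2, y^p, z^p$) tend to $0$, because $t^\mu$ and $t^\mu \log t$ both tend to $0$ as $t \to 0^+$ for $\mu > 0$. So the derivative at a boundary point is the linear map $\frac{\partial}{\partial x^p_2} \mapsto 0$, $\frac{\partial}{\partial z^p} \mapsto 0$, $\frac{\partial}{\partial y^p} \mapsto \frac{\partial}{\partial y^p}$, which matches differentiating \eqref{e.Dulac-2}, and depends continuously on $q$. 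This gives (a), and simultaneously reads off (b) and the linear-algebra content of (c).

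For (c) it remains to observe that $D\Delta^p_1(q)$ sends $\frac{\partial}{\partial y^p}(q)$ to $\frac{\partial}{\partial y^p}(\Delta^p_1(q))$, and since $g^p = (dx^p_1)^2 + (dx^p_2)^2 + (dy^p)^2 + (dz^p)^2$ has $\|\frac{\partial}{\partial y^p}\|_{g^p} = 1$ at every point, the map is an isometry on $F^u_q$. This is immediate once the derivative computation above is in hand. I would also remark (or it is visible from \eqref{e.Dulac-2}) that $\Delta^p_1$ maps $M^p_1 \cap \{z^p = 0\}$, which is a piece of $\cB_{\2}$, onto $\cB_{\2} \cap N^p = \{x^p_1 = x^p_2 = 0, \alpha \le y^p \le \beta, z^p = \gamma\}$ via $y^p \mapsto y^p$, consistent with Proposition~\ref{p.isometries}.

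The main obstacle — though it is more of a careful-bookkeeping issue than a conceptual one — is the $C^1$-up-to-the-boundary claim: one must verify that every first partial derivative of the expressions $t^{\mu(y^p)}$ and $x^p_2 t^{\mu(y^p)}$ (with $t = z^p/\gamma$), including the $y^p$-derivative which produces a factor $\log t$, extends continuously by $0$ across $t = 0$, uniformly as $q$ ranges over the relevant compact face. This uses $\mu_i > 0$ bounded away from $0$ on the compact $y^p$-interval $[\alpha,\beta]$ and the elementary limits $\lim_{t\to 0^+} t^\mu = 0$, $\lim_{t\to 0^+} t^\mu \log t = 0$ for $\mu > 0$; once these are noted the estimate is routine and I would not belabor it.
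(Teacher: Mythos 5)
Your overall strategy is the same as the paper's — read the $C^1$ regularity at $\{z^p=0\}$ and the structure of the Jacobian directly off the explicit formulas~\eqref{e.Dulac-1} and~\eqref{e.Dulac-2}. However, there is a concrete error in the bookkeeping that is precisely the point where the paper invokes lemma~\ref{l.domination}, and your version does not survive it.

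You set $\mu_i(y^p) := -\lambda^s_i(y^p)/\lambda^u(y^p)$ and assert that all first-order partials of $\gamma\,(z^p/\gamma)^{\mu_1(y^p)}$ and $x^p_2\,(z^p/\gamma)^{\mu_2(y^p)}$ tend to $0$ as $z^p \to 0^+$, invoking only $\mu_i > 0$ and the limits $t^\mu \to 0$, $t^\mu\log t \to 0$. But the $\partial/\partial z^p$ derivative of the first component is
\[
\frac{\partial}{\partial z^p}\Bigl(\gamma\,(z^p/\gamma)^{\mu_1(y^p)}\Bigr)
 = \mu_1(y^p)\,\bigl(z^p/\gamma\bigr)^{\mu_1(y^p)-1},
\]
and similarly the $z^p$-derivative of the second component carries a factor $(z^p/\gamma)^{\mu_2(y^p)-1}$. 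These tend to $0$ as $z^p\to 0^+$ if and only if $\mu_i(y^p) > 1$; for $\mu_i = 1$ the limit is $\mu_i \neq 0$, and for $\mu_i \in (0,1)$ it diverges. So ``$\mu_i > 0$ bounded away from $0$'' is not enough: you need $\mu_i$ bounded away from $1$ from above, which is exactly what lemma~\ref{l.domination} supplies via the Kasner-parameter computation $|\lambda^s_i(y^p)/\lambda^u(y^p)| > 1$. This is not a routine estimate you can wave off; it is the essential domination property (contraction beats expansion near the Kasner circle) and the reason the statement about $D\Delta^p_1(q)$ annihilating $\RR.\tfrac{\partial}{\partial z^p}(q)$ is true at all. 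Your list of elementary limits $t^\mu$, $t^\mu\log t$ covers the $x^p_2$- and $y^p$-partials, but the $z^p$-partial produces $t^{\mu-1}$, which you omitted. Once you replace ``$\mu_i > 0$'' by ``$\mu_i > 1$, by lemma~\ref{l.domination}'' and add $t^{\mu-1}\to 0$ (for $\mu>1$) to your list of limits, the rest of your argument goes through and coincides with the paper's.
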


This proposition roughly says the following: when an orbit of $X_{\cB}$ passes close to the point $p\in\cK$, the distance from this orbit to the mixmaster attractor $\cA=\cK\cup\cB_{\2}$ is contracted super-linearly (this distance is measured by the  coordinates $x^p_1$, $x^p_2$ and $z^p$), whereas there is no drift in the direction tangent to the attractor (this drift is measured by the coordinate $y^p$). The key point of the proof of proposition~\ref{p.hyperbolicity-Phi} is the following elementary observation:

\begin{lemma}
\label{l.domination}
For every $y^p$, we have $\displaystyle \left|\frac{\lambda^s_1(y^p)}{\lambda^u(y^p)}\right|>1$ and $\displaystyle \left|\frac{\lambda^s_2(y^p)}{\lambda^u(y^p)}\right|>1$. 
\end{lemma}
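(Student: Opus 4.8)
The plan is to reduce the claim to a direct computation using the Kasner parameter. Fix $y^p$ and let $q_\zeta = (0,0,y^p,0)$ be the corresponding point of $\cK \cap U^p$. By item~(ii) of remarks~\ref{r.linearisation-properties}, the three numbers $\lambda^s_1(y^p) < \lambda^s_2(y^p) < 0 < \lambda^u(y^p)$ are exactly the three non-zero eigenvalues of $DX_{\cB}(q_\zeta)$, and we showed in the proof of proposition~\ref{p.characterisation-Sternberg} that these are $-(2+\Sigma_1), -(2+\Sigma_2), -(2+\Sigma_3)$ where $(\Sigma_1,\Sigma_2,\Sigma_3)$ are the Wainwright-Hsu coordinates of $q_\zeta$. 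So it suffices to show that among the three real numbers $2+\Sigma_1, 2+\Sigma_2, 2+\Sigma_3$, the one which is negative has absolute value strictly larger than each of the two which are positive.

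Let $u = u(q_\zeta) \in (1,\infty)$ be the Kasner parameter; note $u \neq \infty$ since $q_\zeta$ is not a special point. By formula~\eqref{e.eigenvalues-Kasner-parameter}, up to permutation the triple $(2+\Sigma_1, 2+\Sigma_2, 2+\Sigma_3)$ equals
$$\left(\frac{-6u}{1+u+u^2},\ \frac{6(1+u)}{1+u+u^2},\ \frac{6u(1+u)}{1+u+u^2}\right).$$
Since $u > 1$, the first entry is negative and the other two are positive, so $\lambda^u(y^p)$ corresponds (up to a common positive factor $\frac{6}{1+u+u^2}$) to $u$, while $|\lambda^s_1(y^p)|$ and $|\lambda^s_2(y^p)|$ correspond to $u(1+u)$ and $1+u$ in some order. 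The two ratios to bound are therefore $\frac{u(1+u)}{u} = 1+u$ and $\frac{1+u}{u}$. Both are strictly greater than $1$ because $u > 1 > 0$: indeed $1 + u > 1$ trivially, and $\frac{1+u}{u} = 1 + \frac{1}{u} > 1$. This proves $\left|\frac{\lambda^s_1(y^p)}{\lambda^u(y^p)}\right| > 1$ and $\left|\frac{\lambda^s_2(y^p)}{\lambda^u(y^p)}\right| > 1$, as desired.

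There is essentially no obstacle here: the statement is the quantitative form of the familiar fact that the Kasner map is expanding, repackaged through the eigenvalue formula~\eqref{e.eigenvalues-Kasner-parameter}. The only mild care needed is to make sure one is dividing the unstable eigenvalue (the positive one, which corresponds to the entry $\frac{-6u}{1+u+u^2}$ having the smallest absolute value among the three, since $u < u(1+u)$ and $u < 1+u$ for $u>1$) and not one of the stable ones; this is why the hypothesis that $p$, hence $q_\zeta$, is not a special point — equivalently $u < \infty$ — is used, together with $u > 1$.
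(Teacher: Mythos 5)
Your proof is correct and follows essentially the same route as the paper's: identify the three numbers $\lambda^s_1(y^p),\lambda^s_2(y^p),\lambda^u(y^p)$ with the non-zero eigenvalues at the corresponding Kasner point, rewrite them via formula~\eqref{e.eigenvalues-Kasner-parameter} in terms of the Kasner parameter $u$, and read off that the two ratios are $1+u$ and $(1+u)/u$, both $>1$. The paper goes one step further and pins down the exact assignment of $\lambda^s_1,\lambda^s_2,\lambda^u$ to the three expressions using the ordering $\lambda^s_1<\lambda^s_2<0<\lambda^u$, but as you correctly observe this is not needed for the lemma, which only concerns the two ratios.
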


This lemma says that, at every point $\bar q$ of $\cK\setminus\{T_2,T_2,T_3\}$, the positive eigenvalue of the derivative $DX(\bar q)$ is dominated by the contracting eigenvalues.

\begin{proof}[Proof of lemma~\ref{l.domination}]
Fix $y^p$, and denote by $q$ be the point of coordinates $(0,0,y^p,0)$ in the $(x^p_1,x^p_2,y^p,z^p)$ coordinate system. Denote by $(\Sigma_1,\Sigma_2,\Sigma_3,0,0,0)$ the  coordinates of $q$ in the Wainwright-Hsu coordinate system, and by $u$ the Kasner parameter of $q$. The real numbers $\lambda^s_1(y^p)$, $\lambda^s_2(y^p)$, $\lambda^u(y^p)$ are the three non-zero eigenvalues the derivative $DX_{\cB}(q)$. Hence, these numbers are equal up to permutation to $-(2+\Sigma_1)$, $-(2+\Sigma_2)$, $-(2+\Sigma_3)$. Using~\eqref{e.eigenvalues-Kasner-parameter} and the inequalities $\lambda^s_1(y^p)<\lambda^s_2(y^p)<0<\lambda^u(y^p)$, we deduce that
$$(\lambda^s_1(y^p)\,,\,\lambda^s_2(y^p)\,,\,\lambda^u(y^p))=\left(\frac{-6u(1+u)}{1+u+u^2}\,,\,\frac{-6(1+u)}{1+u+u^2}\,,\,\frac{6u}{1+u+u^2}\right).$$
The lemma follows since $\frac{1+u}{u}>1$ and $(1+u)>1$ for every $u\geq 1$.
\end{proof}

\begin{proof}[Proof of proposition~\ref{p.hyperbolicity-Phi}]
The fact that $\Delta^p_1$ is $C^1$ (and even analytical) in restriction to $M^p_1\cap \{z^p>0\}$ is an immediate consequence of formula~\eqref{e.Dulac-1}. The fact that $\Delta^p_1$ is $C^1$ on $M^p_1\cap \{z^p=0\}$ follows from~\eqref{e.Dulac-1},~\eqref{e.Dulac-2} and lemma~\ref{l.domination}. The same ingredients actually show that, for every point $q\in M^p\cap \{z^p=0\}$,
$$ D\Delta^p_1(q).\frac{\partial}{\partial x^p_2}(q) = D\Delta^p_1(q).\frac{\partial}{\partial z^p}(q) = 0\quad\mbox{ and } D\Delta^p_1(q).\frac{\partial}{\partial y^p}(q) = \frac{\partial}{\partial y^p}(\Delta^p_1(q)).$$
The proposition follows.
\end{proof}

\begin{remark}
We do not know if $\Delta^p_1$ is $C^{1+\epsilon}$ for any given $\epsilon>0$, unless we have some \emph{a priori}  lower bounds for the distance between the ratios $\frac{\lambda^s_1(y^p)}{\lambda^u(y^p)}$ and $\frac{\lambda^s_2(y^p)}{\lambda^u(y^p)}$ and $1$. This is the reason why, in the statement of theorem~\ref{t.main}, we cannot guarantee that the stable manifold $W^s(q)$ contains a $C^{1+\epsilon}$-embedded disc for any given $\epsilon$. Actually such an $\epsilon$ exists for every $q$ but it does depend on $q$, and tends to $0$ if $q$ approaches one of the three special points tends to $0$.
\end{remark}

\subsection{The Dulac map $\Delta^p_2:M^p_2\to N^p$}
The coordinates $x^p_1$ and $x^p_2$ play similar roles in the expression of the vector field $X_\cB$ and in the definition of $V^p$.  So, we have to consider a second Dulac map~$\Delta^p_2:M^p_2\longrightarrow N^p$ defined as follows. Let $q$ be a point on the face $M^p_1$. Denote by $(\gamma,x_2^p,y^p,z^p)$ the coordinates of $q$. 
\begin{itemize}
\item if $z^p(q)>0$ then $\Delta^p_2(q)$ is the first intersection point of the orbit of $q$ with the hypersurface$N^p$~;
\item if $z^p(q)=0$ then $\Delta^p_2(q)$ is the first (and unique) intersection point of the type $\2$ heteroclinic orbit $\cO_{\omega(q), f(\omega(q))}$ with  the hypersurface $N^p$.
\end{itemize}
In the $(x^p_1,x^p_2,y^p,z^p)$ coordinate system, the Dulac map $\Delta^p_2$ reads:
\begin{eqnarray}
\label{e.Dulac-1'}
\Delta^p_2\left ( x^p_1 \,,\, \gamma \,,\, y^p \,,\, z^p \right) & = & \left( x^p_1.\left(\frac{z^p}{\gamma}\right)^{-\frac{\lambda^s_1(y^p)}{\lambda^u(y^p)}} \;,\; \gamma.\left(\frac{z^p}{\gamma}\right)^{-\frac{\lambda^s_2(y^p)}{\lambda^u(y^p)}}\;,\; y^p \;,\; \gamma\right) \mbox{ if } z^p>0 \\
\label{e.Dulac-2'}
\Delta^p_2\left ( x^p_1 \,,\, \gamma \,,\, y^p \,,\, 0 \right) & = & \left( 0 \,,\, 0 \,,\, y^p \,,\, \gamma\right)
\end{eqnarray}
For every $q\in M^p_2$, we will write $T_q M^p_2$ as a direct sum of two linear subspaces $F^s_q \oplus F^u_q$ where 
\begin{eqnarray}
F^s_q & := & \RR.\frac{\partial}{\partial x^p_1}(q)\oplus\RR.\frac{\partial}{\partial z^p}(q)\\
F^u_q & := & \RR. \frac{\partial}{\partial y^p}(q).
\end{eqnarray}
Then, we can summarize the key properties of the map $\Delta^p_2$ as follow~: 

\begin{proposition}
\label{p.hyperbolicity-Phi'}
The map $\Delta^p_2:M^p_2\to N^p$ is $C^1$. Moreover, for every $q\in M^p_2\cap \{z^p=0\}$,  the derivative of of the map $\Delta^p_2$ at $q$ satisfies~:
\begin{itemize}
\item $D\Delta^p_2(q).v=0$ for every vector $v\in F^s_q$~;\smallskip
\item $D\Delta^p_2(q)$ maps $F^u_q$ on $G^u_{\Delta^p_2(q)}$, and $\|D\Delta^p_2(q).v\|_{g^p}=\|v\|_{g^p}$ for every vector $v$ in $F^u_q$.
\end{itemize}
\end{proposition}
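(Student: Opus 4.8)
The plan is to observe that Proposition~\ref{p.hyperbolicity-Phi'} is literally Proposition~\ref{p.hyperbolicity-Phi} with the roles of the coordinates $x^p_1$ and $x^p_2$ interchanged, so the proof should be obtained by transporting the earlier argument through this symmetry. First I would note that the defining formulas~\eqref{e.Dulac-1'} and~\eqref{e.Dulac-2'} for $\Delta^p_2$ differ from~\eqref{e.Dulac-1} and~\eqref{e.Dulac-2} only by swapping $x^p_1\leftrightarrow x^p_2$ on the source face and correspondingly placing the fixed value $\gamma$ in the second slot; the exponents $-\lambda^s_1(y^p)/\lambda^u(y^p)$ and $-\lambda^s_2(y^p)/\lambda^u(y^p)$ and the target coordinates on $N^p$ are unchanged. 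In particular, the quantitative input—Lemma~\ref{l.domination}, which asserts $|\lambda^s_i(y^p)/\lambda^u(y^p)|>1$ for $i=1,2$—applies verbatim, since it is symmetric in the two stable exponents.

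The key steps, in order, are: (i) from~\eqref{e.Dulac-1'} read off that $\Delta^p_2$ is analytic, hence $C^1$, on $M^p_2\cap\{z^p>0\}$; (ii) using Lemma~\ref{l.domination}, check that as $z^p\to 0^+$ both the first component $x^p_1\cdot(z^p/\gamma)^{-\lambda^s_1(y^p)/\lambda^u(y^p)}$ and the second component $\gamma\cdot(z^p/\gamma)^{-\lambda^s_2(y^p)/\lambda^u(y^p)}$ tend to $0$ together with their partial derivatives with respect to $x^p_1$, $z^p$ and $y^p$, so that $\Delta^p_2$ extends $C^1$ across $\{z^p=0\}$ and agrees there with~\eqref{e.Dulac-2'}; (iii) compute the derivative at a point $q\in M^p_2\cap\{z^p=0\}$ directly from these limiting expressions: the partials in the $\frac{\partial}{\partial x^p_1}$ and $\frac{\partial}{\partial z^p}$ directions vanish (the exponents exceed $1$, so the relevant monomials are $o(1)$ and their derivatives are $o(1)$ as well), giving $D\Delta^p_2(q)\cdot v=0$ for $v\in F^s_q$, while differentiating in the $\frac{\partial}{\partial y^p}$ direction leaves the $y^p$-coordinate fixed and kills every other term, so $D\Delta^p_2(q)\cdot\frac{\partial}{\partial y^p}(q)=\frac{\partial}{\partial y^p}(\Delta^p_2(q))$, which lies in $G^u_{\Delta^p_2(q)}$ and has the same $g^p$-norm as $\frac{\partial}{\partial y^p}(q)$ since $g^p$ is the flat metric $(dx^p_1)^2+(dx^p_2)^2+(dy^p)^2+(dz^p)^2$.

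Concretely, the cleanest write-up is simply: the map $\Delta^p_2$ is obtained from $\Delta^p_1$ by the coordinate permutation exchanging $x^p_1$ and $x^p_2$ (which is an isometry of $g^p$ carrying $M^p_1$ to $M^p_2$, $N^p$ to $N^p$, $F^s_q$ to $F^s_q$ and $F^u_q$ to $F^u_q$), and Proposition~\ref{p.hyperbolicity-Phi'} is the image of Proposition~\ref{p.hyperbolicity-Phi} under this permutation. Since neither the statement of Lemma~\ref{l.domination} nor any estimate in the proof of Proposition~\ref{p.hyperbolicity-Phi} distinguishes between the two stable exponents, the argument goes through with no change. There is essentially no obstacle here; the only thing to be mildly careful about is bookkeeping—making sure that the decomposition $T_qM^p_2=F^s_q\oplus F^u_q$ used in the statement of Proposition~\ref{p.hyperbolicity-Phi'} is the one matching the permuted coordinates (namely $F^s_q=\RR\frac{\partial}{\partial x^p_1}(q)\oplus\RR\frac{\partial}{\partial z^p}(q)$, $F^u_q=\RR\frac{\partial}{\partial y^p}(q)$, as indeed declared just above the proposition), so that the statement one deduces by symmetry is exactly the one written.
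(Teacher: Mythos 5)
Your direct computation in steps (i)--(iii) is correct and is exactly what the paper intends: the paper states Proposition~\ref{p.hyperbolicity-Phi'} without proof because it follows from formulas~\eqref{e.Dulac-1'}--\eqref{e.Dulac-2'} and Lemma~\ref{l.domination} by the identical argument used for Proposition~\ref{p.hyperbolicity-Phi}. That part is fine.

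However, your ``cleanest write-up'' contains a genuine error, and it actually contradicts your own first paragraph. The coordinate permutation $\pi:(x^p_1,x^p_2,y^p,z^p)\mapsto(x^p_2,x^p_1,y^p,z^p)$ does \emph{not} conjugate $\Delta^p_1$ to $\Delta^p_2$, because it is not a symmetry of the flow: the vector field~\eqref{e.def-local-coordinate-system} has $\lambda^s_1\neq\lambda^s_2$, so swapping $x^p_1$ and $x^p_2$ changes it. Concretely, if you compute $\pi\circ\Delta^p_1\circ\pi^{-1}$ on $M^p_2$ you get
$(x^p_1,\gamma,y^p,z^p)\mapsto\bigl(x^p_1(z^p/\gamma)^{-\lambda^s_2/\lambda^u},\ \gamma(z^p/\gamma)^{-\lambda^s_1/\lambda^u},\ y^p,\ \gamma\bigr)$,
whereas formula~\eqref{e.Dulac-1'} gives
$\Delta^p_2(x^p_1,\gamma,y^p,z^p)=\bigl(x^p_1(z^p/\gamma)^{-\lambda^s_1/\lambda^u},\ \gamma(z^p/\gamma)^{-\lambda^s_2/\lambda^u},\ y^p,\ \gamma\bigr)$.
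The two maps differ by which exponent multiplies which slot --- precisely because, as you yourself note in the first paragraph, the exponents are ``unchanged'' when passing from $\Delta^p_1$ to $\Delta^p_2$, whereas conjugation by $\pi$ would swap them. So Proposition~\ref{p.hyperbolicity-Phi'} is not the image of Proposition~\ref{p.hyperbolicity-Phi} under $\pi$; rather, the \emph{conclusion} of the earlier proposition is symmetric in $\lambda^s_1$ and $\lambda^s_2$ (since Lemma~\ref{l.domination} gives $|\lambda^s_i/\lambda^u|>1$ for both $i$), and therefore the same direct computation applies to~\eqref{e.Dulac-1'}. Your steps (i)--(iii) are the actual proof; drop the symmetry reduction or state it correctly as a symmetry of the estimate rather than of the map.
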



\section[Construction of a ``Poincar\'e map"]{Construction of a ``Poincar\'e map" associated to a closed forward-invariant aperiodic set of the Kasner circle} 
\label{s.construction-Poincare-section}

\emph{From now on until the end of section~\ref{s.end-proof-main}, we consider a closed  forward-invariant aperiodic subset $C$ of the Kasner circle $\cK$.} 

Observe that $C$ is necessarily totally discontinuous. Indeed the points of $\cK$ that are preperiodic under the Kasner map $f$ are dense in~$\cK$ (this can be proved in several different ways; this follows for example the equivalence $2\Leftrightarrow 3$ of proposition~\ref{p.characterisation-Sternberg}). We shall denote by $\wC$ the union of $C$ and all the type $\2$ orbits connecting two points of $C$:
$$\wC:=C\cup \bigcup_{q\in C} \cO_{q,f(q)}.$$
We want to prove that, for every point~$q\in C$, the stable manifold $W^s(q)$ contains a 3-dimensional disc (see definition~\ref{d.stable-manifold} and theorem~\ref{t.main}). To this end, we will consider a kind of ``Poincar\'e section" for $\wC$, and study the Poincar\'e map associated with this section\footnote{The set $\wC$ cannot admit a true Poincar\'e section, since it contains some singularities of $X_{\cB}$ (namely, the points of $C$). Nevertheless, we will consider a hypersurface $N$ such that every type $\2$ orbit in $\wC$ intersects $N$ transversally. The hypersurface $N$ will play the role of a Poincar\'e section.}.

For every $p\in C$, we consider a neighborhood $U^p$ of $p$, and a local coordinate system $(x^p_1,x^p_2,y^p,z^p)$ on $U^p$, as in the previous section. For each $p\in C$, we choose $\alpha^p_0<0<\beta^p_0$ and $\gamma^p_0>0$ small enough, so that 
$$V^p_0:=V^p(\alpha^p_0,\beta^p_0,\gamma^p_0)=\{ 0\leq x^p_1\leq \gamma^p_0 \;,\; 0\leq x^p_2\leq \gamma^p_0 \;,\; 0\leq z^p\leq \gamma^p_0 \;\mbox{ and }\;\alpha^p_0\leq y^p\leq\beta^p_0\}.$$
is contained in the interior of $U^p$. Up to slightly modifying  $\alpha^p_0,\beta^p_0,\gamma^p_0$ we can assume that 
 the boundary of $V^p_0$ is disjoint from $C$ (\emph{i.e} that the points of coordinates $(0,0,\alpha,0)$ and $(0,0,\beta,0)$ in the $(x^p_1,x^p_2,y^p,z^p)$ coordinate system are not  in $C$): this is possible since $C$ is totally discontinuous. Observe that $V^p_0$ is a neighborhood of $p$, since $\alpha^p_0<0<\beta^p_0$ (see remark~\ref{r.CNS-neighborhood}). 

Since $C$ is compact, one can find a finite number of points $p_1,\dots,p_n\in C$ such that the neighborhoods $V^{p_1}_0,\dots,V^{p_n}_0$ cover $C$. Now, we modify these neighborhoods in order to make them pairwise disjoint:
\begin{itemize}
\item we set $(\alpha^{p_1},\beta^{p_1},\gamma^{p_1}):=(\alpha^{p_1}_0,\beta^{p_1}_0,\gamma^{p_1}_0)$, and $V^{p_1}:=V^{p_1}(\alpha^{p_1},\beta^{p_1},\gamma^{p_1})= V^{p_1}_0$;
\item then, we can find some constants $\alpha_2,\beta_2,\gamma_2$ such that $V^{p_2}:=V^{p_2}(\alpha_2,\beta_2,\gamma_2)$ is contained in $V^{p_2}_0\setminus V^{p_1}$, and such that $C\cap (V^{p_2}_0\setminus V^{p_1})$ is contained in the interior of $V^{p_2}$;
\item then, we can find some constants $\alpha_3,\beta_3,\gamma_3$ such that $V^{p_3}:=V^{p_3}(\alpha_3,\beta_3,\gamma_3)$ is contained in $V^{p_3}_0\setminus (V^{p_1}\cup V^{p_2})$, and such that $C\cap (V^{p_3}_0\setminus (V^{p_1}\cup V^{p_2}))$ is contained in the interior of $V^{p_3}$;
\item etc.
\end{itemize}
At the end of this process, we get $n$ pairwise disjoint domains $V^{p_1},\dots,V^{p_n}$, such that $C$ is contained in the interior of $V^{p_1}\cup\dots\cup V^{p_n}$. For each $i$, $V^{p_i}$ is contained in the interior of $U^{p_i}$, and there are some constants $\alpha_i,\beta_i,\gamma_i$ such that $V^{p_i}=V^{p_i}(\alpha_i,\beta_i,\gamma_i)$. Hence, the result of section~\ref{s.Dulac} apply to $V^{p_i}$. It may happen that, for some $i$, the point $p_i$ is not in $V^{p_i}$ (\emph{i.e.} that $\alpha_i$ or $\beta_i$ is non-positive), but we do not care.  

Now, we denote
\begin{eqnarray*}
V^C & := & V^{p_1}\sqcup\dots\sqcup V^{p_n},\\
M_1^C & :=  & M^{p_1}_1\sqcup\dots\sqcup M^{p_n}_1,\\
M_2^C & := & M^{p_1}_2\sqcup\dots\sqcup M^{p_n}_2,\\
M^C & := & M_1^C\cup M_2^C\\
N^C & := & N^{p_1}\sqcup\dots\sqcup N^{p_n}.
\end{eqnarray*}
 Then $V^C$ is a neighborhood of $C$ in $\cB^+$. The hypersurfaces $M_1^C$, $M_2^C$ and $N^C$ are transverse to $X_{\cB}$. An orbit of $X_{\cB}$ can only enter in $V^C$ by crossing $M^C=M_1^C\cup M_2^C$, and can only exit $V^C$ by crossing $N^C$. Moreover, according to proposition~\ref{p.behavior-II}, we have the following important properties~: 

\begin{proposition} 
\label{p.orbits-of-C}
Every type $\2$ orbit whose $\omega$-limit point is in $C$ intersects $M^C$. Every type $\2$ orbit whose $\alpha$-limit point is in $C$ intersects $N^C$.
\end{proposition}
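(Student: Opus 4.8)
\textbf{Proof plan for Proposition~\ref{p.orbits-of-C}.}

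The statement is essentially a repackaging of Proposition~\ref{p.behavior-II}, so the plan is to reduce it to that proposition by locating, for each relevant type $\2$ orbit, the box $V^{p_i}$ that captures it. Consider a type $\2$ orbit $\cO$ whose $\omega$-limit point $\omega(\cO)$ lies in $C$. Since $C$ is contained in the interior of $V^{p_1}\cup\dots\cup V^{p_n}$, there is an index $i$ such that $\omega(\cO)$ is in the interior of $V^{p_i}$. Proposition~\ref{p.behavior-II}, applied to the box $V^{p_i}=V^{p_i}(\alpha_i,\beta_i,\gamma_i)$ (which is legitimate because, as noted just before the statement, the results of section~\ref{s.Dulac} apply to each $V^{p_i}$), then gives that $\cO$ intersects $M^{p_i}_1\cup M^{p_i}_2$, hence intersects $M^C=M_1^C\cup M_2^C$. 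Symmetrically, if $\cO$ is a type $\2$ orbit whose $\alpha$-limit point $\alpha(\cO)$ lies in $C$, pick $i$ with $\alpha(\cO)$ in the interior of $V^{p_i}$; the first statement of Proposition~\ref{p.behavior-II} applied to $V^{p_i}$ yields that $\cO$ intersects $N^{p_i}$, hence intersects $N^C$.

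The only point requiring a word of care is the passage from ``$\omega(\cO)\in C$'' to ``$\omega(\cO)$ lies in the \emph{interior} of some $V^{p_i}$'' (and likewise for $\alpha$-limit points), since Proposition~\ref{p.behavior-II} as stated asks for the limit point to be ``in $V^p$'', and one wants to be sure the intersection with $M^{p_i}_1\cup M^{p_i}_2$ (resp.\ $N^{p_i}$) actually occurs. Here one uses the construction of the $V^{p_i}$'s: by construction $C\subset \mathrm{int}(V^{p_1}\cup\dots\cup V^{p_n})$, and moreover each $V^{p_i}$ was chosen so that its boundary is disjoint from $C$ — the $V^{p_1}_0,\dots,V^{p_n}_0$ had this property (the $y$-endpoints were perturbed off $C$ using total disconnectedness of $C$), and the shrinking procedure that produces $V^{p_2},\dots,V^{p_n}$ preserves the fact that each $V^{p_i}$ contains the relevant part of $C$ in its interior. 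Hence any point of $C$ lying in $V^{p_i}$ actually lies in its interior, which is exactly what is needed to invoke Proposition~\ref{p.behavior-II} and conclude that the orbit genuinely crosses the appropriate face rather than merely being tangent to it at the boundary.

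I do not expect any real obstacle here: the statement is a direct corollary of Proposition~\ref{p.behavior-II} combined with the covering property of the $V^{p_i}$'s established during their construction. The ``hard part'', such as it is, is purely bookkeeping — making sure that the limit point lands in the interior of one of the boxes so that Proposition~\ref{p.behavior-II} can be applied without boundary issues — and this is handled by the care taken in section~\ref{s.construction-Poincare-section} to make the boundaries of the $V^{p_i}$ avoid $C$. No new estimate on the flow of $X_{\cB}$ is required.
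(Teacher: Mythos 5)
Your proof is correct and matches the paper's approach, which is simply to invoke Proposition~\ref{p.behavior-II} after locating the limit point in one of the boxes $V^{p_i}$ (the paper states the proposition with a bare ``according to proposition~\ref{p.behavior-II}''). Your second-paragraph concern about the limit point landing in the \emph{interior} of a box is actually unnecessary: Proposition~\ref{p.behavior-II} is an ``if and only if'' stated for limit points anywhere in the closed box $V^p$, so even a boundary point would do; the care taken in section~\ref{s.construction-Poincare-section} to keep $C$ away from $\partial V^{p_i}$ is needed later (for the transition map in proposition~\ref{p.hyperbolicity-Psi}), not here.
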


We will see $M^C$ as a kind of ``Poincar\'e section" for $\wC$. Let us define the ``Poincar\'e map" $\Phi$ associated to this section. First, we consider the ``Dulac map" 
$$\Delta^C:M^C\to N^C$$
defined by $\Delta^C_{|M^{p_i}_1}=\Delta^{p_i}_1$ and $\Delta^C_{|M^{p_i}_2}=\Delta^{p_i}_2$. For $q\in M^C$:
\begin{itemize}
\item if the forward orbit of $q$ exits $V^C$ by crossing $N^C$ (which is typically  the case if $q\in\cB_{\9}$), then $\Delta^C(q)$ is by definition the first intersection point of the orbit of $q$ with the hypersurface $N^C$~;
\item if the forward orbit of $q$ remains in $V^C$ forever (which is typically the case for every $q\in\cB_{\2}$), then $\Delta^C(q)$ is the first  intersection point of the type $\2$ heteroclinic orbit $\cO_{\omega(q),f(\omega(q))}$ with the hypersurface $N^C$.
\end{itemize}
Now, we consider the ``transition map" 
$$\Theta^C:N^C\to M^C$$
partially defined as follows. Given a point $q$ in $N^C$, if the forward orbit of $q$ re-enters in $V^C$, then $\Theta^C(q)$ is the first point of this forward orbit of $q$ which is in $V^C$ (this point is automatically on the hypersurface $M^C$); otherwise $\Theta^C(q)$ is not defined. The ``Poincar\'e map" $\Phi^C$ associated with the section $M^C$ is by definition the product of the ``Dulac map" $\Delta^C$ and the ``transition map" $\Theta^C$~:
$$\Phi^C:=\Theta^C\circ\Delta^C :M^C \to M^C.$$

In the next section, we will study the dynamics of the Poincar\'e map $\Theta^C$. For this purpose, we will use a riemannian metric on $g^C$ on $\cB$ such that, for $i=1,\dots,n$, 
$$g^C_{|V_i}=g^{p_i}=(dx^{p_i}_1)^2+(dx^{p_i}_2)^2+(dy^{p_i})^2+(dz^{p_i})^2.$$


\section[Stable manifolds for the Poincar\'e map]{Stable manifolds for the Poincar\'e map associated to a closed forward-invariant aperiodic set of the Kasner circle} 
\label{s.stable-manifolds-Poincare-map}

The purpose of this section is to prove that, for every point $q\in\widehat C\cap M^C$, the stable manifold of~$q$ for the ``Poincar\'e map" $\Phi^C$ contains a two-dimensional disc. To this end, we will prove that $\widehat C\cap M^C$ is a hyperbolic set for the map $\Phi^C$, and we will use a classical result on stable manifolds for hyperbolic sets. 

\begin{definition}
Let $(M,g)$ be a riemannian manifold and $\Phi:M\to M$ be a $C^1$ map. A \emph{hyperbolic set} for the map $\Phi$ is a compact $\Phi$-invariant subset $C$ of $M$ such that, for every $q\in C$, there is splitting $T_qM=F^s_q\oplus F^u_q$ which depends continuously on $q$ and such that, for some constant $\mu\in (0,1)$ and $nu>1$~:
\begin{eqnarray}
\label{e.contraction}
&& \mbox{$D\Phi(q).F^s_q\subset F^s_{\Phi(q)}$ and $\| D\Phi(x).v\|\leq \mu\| v\|$ for every $q\in C$ and $v\in F^s_q$}\\
\label{e.dilatation}
&& \mbox{ $D\Phi(q).F^u_q=F^u_{\Phi(q)}$ and $\|D\Phi(q).v\|\geq \nu\| v\|$ for every $q\in C$ and $v\in F^u_q$}.
\end{eqnarray}
The dimension of the vector space $F^s_q$ is called the \emph{index} of $C$. The constant $\mu$ is called a \emph{contraction rate} of $\Phi$ on $C$.
\end{definition}

\begin{theorem}(see e.g.~\cite[page 167]{PalisTakens1993})
\label{t.stable-manifold}
Let $\Phi:M\to M$ be a $C^1$ map of a manifold $M$, and $C$ be a compact subset of $M$ which is a hyperbolic of index $s$ for the map $\Phi$. Then, for every $\epsilon$ small enough, for every $q\in C$, the set 
$$W^s_\epsilon(\Phi,q):=\{r\in M \mid \mbox{dist}(\Phi^n(r),\Phi^n(r))\leq \epsilon \mbox{ for every }n\geq 0\}$$
is a $C^1$ embedded $s$-dimensional  disc, tangent to $F^s_q$ at $q$, depending continuously on $q$ (for the $C^1$ topology on the space of embeddings). Moreover, if $\mu$ is a contraction constant for $\Phi$ on $C$, then there exists a constant $\kappa$ such that, for every $\epsilon$ small enough, for every $q\in C$ and every $r\in W^s_\epsilon(\Phi,q)$, 
$$\mbox{dist}_g\left(\Phi^n(r),\Phi^n(q)\right)\leq \kappa \mu^n.$$
\end{theorem}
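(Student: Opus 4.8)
The plan is to prove Theorem~\ref{t.stable-manifold} by the graph transform (Hadamard--Perron) method, carried out simultaneously over all points of the hyperbolic set $C$. First I would set up adapted charts: using compactness of $C$ and continuity of the splitting $T_qM=F^s_q\oplus F^u_q$, choose a continuous family of $C^1$ embeddings $\psi_q:B^s_\rho\times B^u_\rho\to M$, on a fixed small radius $\rho$, with $\psi_q(0)=q$ and $D\psi_q(0)$ carrying $\RR^s$ onto $F^s_q$ and $\RR^u$ onto $F^u_q$. In these charts the connecting map $\widetilde\Phi_q:=\psi_{\Phi(q)}^{-1}\circ\Phi\circ\psi_q$ fixes $0$ and has derivative at $0$ in block form $\mathrm{diag}(A_q,B_q)$ with $\|A_q\|\le\mu<1$ and $B_q$ invertible, $\|B_q^{-1}\|\le\nu^{-1}<1$, by \eqref{e.contraction} and \eqref{e.dilatation}. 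Since $\Phi$ is $C^1$, $D\Phi$ is uniformly continuous near $C$, so after shrinking $\rho$ we may assume $\|D\widetilde\Phi_q(w)-D\widetilde\Phi_q(0)\|\le\eta$ on $B^s_\rho\times B^u_\rho$ for all $q\in C$, with $\eta$ as small as we like; fix $\eta$ with $\mu+2\eta<1<\nu-2\eta$ and $(\mu+2\eta)(\nu-2\eta)^{-1}<1$.

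Next I would run the graph transform on the space $\mathcal G$ of continuous families $g=(g_q)_{q\in C}$ of maps $g_q:B^s_\rho\to B^u_\rho$ with $g_q(0)=0$ and $\mathrm{Lip}(g_q)\le1$, with the sup metric. The cone estimates coming from the $\eta$-bound guarantee that $\widetilde\Phi_q^{-1}(\mathrm{graph}\,g_{\Phi(q)})$ contains, through $0$, a unique Lipschitz graph over $B^s_{\rho'}$ for a fixed $\rho'<\rho$; renormalizing, this defines $T:\mathcal G\to\mathcal G$, $(Tg)_q=$ that graph. A standard computation shows $T$ is a contraction for the sup metric with factor $\mu+2\eta<1$; its unique fixed point $g^\ast$ yields the Lipschitz local stable discs $W^s_{\mathrm{loc}}(q):=\psi_q(\mathrm{graph}\,g^\ast_q)$, and since $\mathcal G$ consists of families continuous in $q$, the family $(W^s_{\mathrm{loc}}(q))_q$ is continuous in $q$.

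For the $C^1$ regularity I would invoke the fiber contraction theorem. Enlarge the base $\mathcal G$ to the bundle whose fiber over $g$ consists of continuous ``candidate derivative'' fields $q\mapsto(L_{q,x}\in\mathrm{Hom}(\RR^s,\RR^u))_{x\in B^s_{\rho'}}$ with $\|L_{q,x}\|\le1$, and let $\widehat T$ be the lift of $T$ obtained by formally differentiating the relation defining $Tg$. Then $\widehat T$ is continuous, covers $T$, and is a fiber contraction with factor $(\mu+2\eta)(\nu-2\eta)^{-1}<1$; since $T$ already has the attracting fixed point $g^\ast$, the fiber contraction theorem provides a continuous invariant section $L^\ast$ over $g^\ast$. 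Starting the iteration from a $C^1$ initial graph, the derivatives of $T^ng$ converge uniformly to $L^\ast$, so $g^\ast$ is $C^1$ with $Dg^\ast=L^\ast$, and continuity in $q$ for the $C^1$ topology follows. Tangency of $W^s_{\mathrm{loc}}(q)$ to $F^s_q$ at $q$ is read off at $x=0$: there the linearized graph transform is the linear graph transform of $\mathrm{diag}(A_q,B_q)$, whose fixed section is the zero section, i.e. the subspace $F^s_q$.

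Finally I would identify $W^s_{\mathrm{loc}}(q)$ with the set $W^s_\epsilon(\Phi,q)$ of the statement for $\epsilon$ small. For $r\in W^s_{\mathrm{loc}}(q)$, invariance of the graphs gives $\Phi^n(r)\in W^s_{\mathrm{loc}}(\Phi^n(q))$; on the graph one has $\|w^u\|\le\|w^s\|$ and $\|w^s_{n+1}\|\le(\mu+2\eta)\|w^s_n\|$, hence $\mathrm{dist}(\Phi^n(r),\Phi^n(q))\le\kappa(\mu+2\eta)^n\mathrm{dist}(r,q)$; shrinking $\epsilon$ keeps the whole forward orbit inside the charts, so $W^s_{\mathrm{loc}}(q)\subset W^s_\epsilon(\Phi,q)$, and the asserted exponential bound holds (after having chosen $\eta$ so small that $\mu+2\eta$ is at most the prescribed contraction rate). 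Conversely, if $\mathrm{dist}(\Phi^n(r),\Phi^n(q))\le\epsilon$ for all $n$ then every iterate stays in the charts; were $r$ not on $\mathrm{graph}\,g^\ast_q$, its unstable displacement relative to the graphs would be expanded by a factor $\ge\nu-2\eta>1$ at each step and would leave $B^u_\rho$, a contradiction — so $r\in W^s_{\mathrm{loc}}(q)$. The main obstacle is precisely the $C^1$ (as opposed to $C^{1+\alpha}$) smoothness step: for a merely $C^1$ map one cannot differentiate under the limit by hand, and the role of the fiber contraction theorem is to certify that the limiting candidate derivative field is the genuine derivative; the cone estimates, the $C^0$ contraction of $T$, and all uniformity and continuity in $q\in C$ are routine given compactness and continuity of the splitting.
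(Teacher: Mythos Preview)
The paper does not prove Theorem~\ref{t.stable-manifold}; it is quoted as a classical result with a reference to \cite[page 167]{PalisTakens1993}, and is used as a black box to deduce Theorem~\ref{t.stable-manifold-Theta}. There is therefore nothing in the paper to compare your argument against.

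That said, your outline is the standard Hadamard--Perron graph-transform proof, with the fiber contraction theorem handling the $C^1$ step in the absence of H\"older regularity, and it is essentially correct for the version of the theorem stated here. A couple of minor points worth tightening if you ever write this out in full. First, $\Phi$ is only assumed to be a $C^1$ map, not a diffeomorphism, so when you speak of ``$\widetilde\Phi_q^{-1}(\mathrm{graph}\,g_{\Phi(q)})$'' you should make explicit that you are solving the graph-transform equation $f_2(x,y)=g_{\Phi(q)}(f_1(x,y))$ for $y$ as a function of $x$, which is legitimate precisely because the definition of hyperbolic set in the paper requires $D\Phi(q).F^u_q=F^u_{\Phi(q)}$ and $\|D\Phi(q).v\|\ge\nu\|v\|$ on $F^u_q$, so that $B_q$ is invertible with $\|B_q^{-1}\|\le\nu^{-1}$; you are not inverting $\Phi$ itself. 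Second, the last clause of the theorem asks for the bound $\kappa\mu^n$ with the \emph{given} contraction rate $\mu$, whereas your argument produces $(\mu+2\eta)^n$; your parenthetical remark that $\eta$ may be taken arbitrarily small does cover this, but note that the adapted charts (and hence $\rho$, $\epsilon$, and $\kappa$) then depend on $\eta$, so the order of quantifiers is: fix $\mu$, choose $\eta$ small, then build the charts and obtain $\kappa$ and the admissible range of $\epsilon$.
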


We want to apply this theorem to the Poincar\'e map $\Phi^C:M^C\to M^C$. So we need to prove that $\wC\cap M^C$
is a hyperbolic set for $\Phi^C$. Recall that $M^C= M_1^C\cup M_2^C$ where $M_1^C= M^{p_1}_1\sqcup\dots \sqcup M^{p_n}_2$ and $M_2^C=M^{p_1}_2\sqcup\dots\sqcup M^{p_n}_2$. For every~$q\in M^C$, we have already defined a splitting 
$T_{q} M^C=F^s_q\oplus F^u _q$ in section~\ref{s.Dulac} (recall that $M^C=(M^{p_1}_1\cup\dots\cup M^{p_n}_1)\cup (M^{p_1}_2\cup\dots\cup M^{p_n}_2)$ and observe that $q$ is not in $(M^{p_1}_1\cup\dots\cup M^{p_n}_1)\cap (M^{p_1}_2\cup\dots\cup M^{p_n}_2)$). It remains to prove that $\Phi^C$ satisfies~\eqref{e.contraction} and~\eqref{e.dilatation} with respect to these splitting. For this purpose, we will use the decomposition of $\Phi^C$ as a product~:
$$\Phi^C=\Theta^C\circ\Delta^C.$$
The behavior of the derivative of ``Dulac map" $\Delta^C$ was already studied in section~\ref{s.Dulac}~; more precisely, we can rephrase propositions~\ref{p.hyperbolicity-Phi} and~\ref{p.hyperbolicity-Phi'} as follows~:

\begin{proposition}
\label{p.hyperbolicity-Phi-total}
The map $\Delta^C:M^C\to N^C$ is $C^1$. Moreover, for every $q\in \wC\cap M^C$,  the derivative of $D\Delta^C(q):T_q M\to T_{\Delta^C(q)}N$ of the map $\Phi$ at $q$ satisfies~:
\begin{itemize}
\item $D\Delta^C(q).v=0$ for every vector $v\in F^s_q$~;
\item $D\Delta^C(q)$ maps $F^u_q$ on $G^u_{\Delta^C(q)}$, and $\|D\Delta^C(q).v\|_{g_p}=\|v\|_{g_p}$ for every vector $v$ in $F^u_q$.
\end{itemize}
\end{proposition}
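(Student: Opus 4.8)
The plan is to deduce Proposition~\ref{p.hyperbolicity-Phi-total} directly from Propositions~\ref{p.hyperbolicity-Phi} and~\ref{p.hyperbolicity-Phi'} by a case analysis according to which block $M^{p_i}_1$ or $M^{p_i}_2$ contains the point $q$. Since $M^C$ is the disjoint union $(M^{p_1}_1\sqcup\dots\sqcup M^{p_n}_1)\sqcup(M^{p_1}_2\sqcup\dots\sqcup M^{p_n}_2)$, every point $q\in M^C$ lies in exactly one of these $2n$ faces, so the derivative $D\Delta^C(q)$ is simply the derivative of the corresponding local Dulac map $\Delta^{p_i}_1$ or $\Delta^{p_i}_2$. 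The splitting $T_qM^C=F^s_q\oplus F^u_q$ used here is by definition the one attached to that face in Section~\ref{s.Dulac}, and the metric $g^C$ restricts to $g^{p_i}$ on $V^{p_i}$, so there is nothing to reconcile between the local and global pictures.

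First I would recall that $C^1$-ness of $\Delta^C$ is a local property: on each face $M^{p_i}_1$ (resp. $M^{p_i}_2$) the map $\Delta^C$ coincides with $\Delta^{p_i}_1$ (resp. $\Delta^{p_i}_2$), which is $C^1$ by Proposition~\ref{p.hyperbolicity-Phi} (resp. Proposition~\ref{p.hyperbolicity-Phi'}); since the $2n$ faces are pairwise disjoint closed pieces of a common manifold, no gluing issue arises and $\Delta^C$ is $C^1$ on all of $M^C$. Next, I would observe that a point $q\in\wC\cap M^C$ with $q$ lying in the face $M^{p_i}_1$ automatically satisfies $z^{p_i}(q)=0$: indeed $\wC=C\cup\bigcup_{q'\in C}\cO_{q',f(q')}$ is contained in $\cK\cup\cB_{\2}$, and by~\eqref{e.K} and~\eqref{e.B_II} every point of $(\cK\cup\cB_{\2})\cap U^{p_i}$ has $z^{p_i}=0$ on the branch $\{x_2^{p_i}=0\}$ or $x_1^{p_i}=0$ on the branch we are on; in any case the relevant ``attractor coordinate'' vanishes, which is exactly the hypothesis $q\in M^{p_i}_1\cap\{z^{p_i}=0\}$ (resp. $M^{p_i}_2\cap\{z^{p_i}=0\}$) needed to invoke the derivative statements in Propositions~\ref{p.hyperbolicity-Phi} and~\ref{p.hyperbolicity-Phi'}. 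Applying the appropriate proposition then gives $D\Delta^C(q).v=0$ for $v\in F^s_q$ and $D\Delta^C(q)$ sending $F^u_q$ isometrically (for $g^{p_i}=g^C$) onto $G^u_{\Delta^{p_i}_j(q)}=G^u_{\Delta^C(q)}$, which is precisely the asserted conclusion.

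The only point requiring a small argument is the claim that $q\in\wC\cap M^{p_i}_1$ forces $z^{p_i}(q)=0$; I would make this precise using Proposition~\ref{p.behavior-II} and the description~\eqref{e.B_II} of $(\cB_{\2}\cup\cK)\cap U^{p_i}$, noting that $M^{p_i}_1=V^{p_i}\cap\{x^{p_i}_1=\gamma_i\}$ meets the two-dimensional pieces $\{x^{p_i}_1=x^{p_i}_2=0\}$ and $\{x^{p_i}_1=z^{p_i}=0\}$ only in the empty set (since $x^{p_i}_1=\gamma_i>0$ there), while it meets $\{x^{p_i}_2=z^{p_i}=0\}$ exactly in $\{x^{p_i}_1=\gamma_i,\ x^{p_i}_2=0,\ z^{p_i}=0,\ \alpha_i\le y^{p_i}\le\beta_i\}$; hence any point of $\cB_{\2}\cap M^{p_i}_1$ indeed has $z^{p_i}=0$, and any point of $\cK\cap M^{p_i}_1$ is empty since $\cK$ has $x^{p_i}_1=0$. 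This is the ``main obstacle'', but it is essentially bookkeeping; once it is in place the proposition is immediate. I would therefore write the proof as: ``$C^1$-ness is local and was proved face by face; for $q\in\wC\cap M^C$ we have $z^{p_i}(q)=0$ on whichever face contains $q$, so Propositions~\ref{p.hyperbolicity-Phi} and~\ref{p.hyperbolicity-Phi'} apply verbatim and yield the two displayed properties.''
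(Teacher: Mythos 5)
Your proposal is correct and takes the same route as the paper: the paper presents Proposition~\ref{p.hyperbolicity-Phi-total} as a ``rephrasing'' of Propositions~\ref{p.hyperbolicity-Phi} and~\ref{p.hyperbolicity-Phi'} with no further proof, and you supply precisely the bookkeeping that the paper leaves implicit (locality of $C^1$-ness, the verification that $z^{p_i}(q)=0$ on the relevant face for $q\in\wC\cap M^C$ via~\eqref{e.K},~\eqref{e.B_II}, and the agreement of $g^C$ with $g^{p_i}$ on $V^{p_i}$). One small inaccuracy worth correcting: $M^C$ is not a \emph{disjoint} union of the $2n$ faces -- the paper defines $M^C:=M^C_1\cup M^C_2$ (with $\cup$, not $\sqcup$), and $M^{p_i}_1\cap M^{p_i}_2=V^{p_i}\cap\{x^{p_i}_1=x^{p_i}_2=\gamma_i\}$ is a nonempty two-dimensional face. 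This does not damage your argument, for two reasons: the two local maps $\Delta^{p_i}_1$ and $\Delta^{p_i}_2$ agree on that overlap (both are the exit map through $N^{p_i}$, as one checks directly from formulas~\eqref{e.Dulac-1} and~\eqref{e.Dulac-1'}), so $\Delta^C$ is still well-defined and $C^1$; and, as your case analysis shows, any $q\in\wC\cap M^C$ has two of the three coordinates $x^{p_i}_1,x^{p_i}_2,z^{p_i}$ equal to zero and hence lies in exactly one of $M^{p_i}_1$, $M^{p_i}_2$, so the splitting $F^s_q\oplus F^u_q$ and the choice of local proposition to apply are unambiguous. You should simply replace ``disjoint union'' by ``union'' and add the one-line remark that the two Dulac maps agree on the overlap.
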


It remains to study the behavior of the derivative of the ``transition map" $\Theta^C:N^C\to M^C$. We recall that $\Theta^C(q)$ is well-defined only if the forward orbit of $q$ intersects $M^C$. So our first task is to show that $\Theta^C$ is well-defined at least on a neighborhood of $\wC\cap N^C$ in $N^C$.

\begin{proposition}
\label{p.hyperbolicity-Psi}
There exists a neighborhood $\cV$ of $\wC\cap N^C$ in $N^C$, such that, for every $q\in\cV$, the orbit of $q$ intersects $M^C$ after some time $t(q)$ which depends in a $C^1$ way on $q$. The map $\Theta^C$ is well-defined and $C^1$ on $\cV$. Moreover,  there exists $\nu>1$ such that, for every $q\in \wC\cap N^C$,  the derivative $D\Theta^C(q):T_q N^C\to T_{\Theta^C(q)} M^C$ satisfies
\begin{itemize}
\item $D\Theta^C(q).G^u_q=F^u_{\Theta^C(q)}$ and  $\|D\Theta^C(q).v\|_g\geq \nu \|v\|_g$ for every $v\in G^u_q$.
\end{itemize}
\end{proposition}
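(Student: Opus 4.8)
The plan is to prove Proposition~\ref{p.hyperbolicity-Psi} by analyzing the transition map $\Theta^C$ orbit-piece by orbit-piece. Fix $q\in\wC\cap N^C$, say $q\in N^{p_i}$. Then $q$ lies on a type $\2$ heteroclinic orbit in $\wC$ whose $\alpha$-limit point is some $a=\alpha^{p_i}(q)\in C\cap V^{p_i}$, and whose $\omega$-limit point is $f(a)$; since $C$ is forward-invariant, $f(a)\in C$, so $f(a)$ lies in the interior of some $V^{p_j}$ (by the covering property established in section~\ref{s.construction-Poincare-section}). By Proposition~\ref{p.orbits-of-C} (or directly Proposition~\ref{p.behavior-II}, statement~(2)), this heteroclinic orbit intersects $M^{p_j}=M^{p_j}_1\cup M^{p_j}_2$; let $t_0=t(q)$ be the first such crossing time and $r=\Theta^C(q)=X_\cB^{t_0}(q)\in M^{p_j}$. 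The orbit segment $\{X_\cB^t(q)\mid 0\le t\le t_0\}$ is compact, disjoint from the singular set $\cK$ (it is an honest type $\2$ orbit plus its endpoint $f(a)$, which is a singularity — careful: actually the forward orbit of $q$ accumulates on $f(a)$ but only reaches $M^{p_j}$ at $r$ which is at distance $\gamma_j$ from $\cK$, hence at a bounded distance, so the relevant segment up to $M^{p_j}$ stays away from $\cK$). Along such a segment $X_\cB$ is nonzero and transverse to $N^{p_i}$ at $q$ and to $M^{p_j}$ at $r$; by the standard flow-box / implicit function theorem argument, the first-hitting time $t(\cdot)$ and the first-hitting point $\Theta^C(\cdot)$ are $C^1$ on a neighborhood of $q$ in $N^C$. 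Taking a finite union over the finitely many cases (or using compactness of $\wC\cap N^C$), we obtain the neighborhood $\cV$ and the $C^1$ regularity of $\Theta^C$ on $\cV$.

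For the derivative estimate, the key point is to identify the images of the subspaces $G^u_q$ and $F^u_r$ under the flow. Recall $G^u_q=\RR\cdot\frac{\partial}{\partial y^{p_i}}(q)$ and $F^u_r=\RR\cdot\frac{\partial}{\partial y^{p_j}}(r)$; by remark~\ref{r.linearisation-properties}(i) and~(vi), these are tangent to the Kasner circle side — more precisely, $G^u_q$ is the direction in $N^{p_i}$ tangent to the curve $\cB_{\2}\cap N^{p_i}$ of heteroclinic exit points, and $F^u_r$ is the direction in $M^{p_j}$ tangent to the curve $\cB_{\2}\cap M^{p_j}$ of heteroclinic entry points. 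The transition map $\Theta^C$ sends heteroclinic orbits to heteroclinic orbits, hence maps $\cB_{\2}\cap N^{p_i}$ into $\cB_{\2}\cap M^{p_j}$; differentiating, $D\Theta^C(q)\cdot G^u_q=F^u_r$, which is the first bullet. To get the expansion factor $\nu>1$, I will compose the three local isometries/identifications available: by Proposition~\ref{p.isometries} the map $\alpha^{p_i}\colon\cB_{\2}^+\cap N^{p_i}\to\cK\cap V^{p_i}$ is a $g^{p_i}$-isometry sending $q$ to $a$; similarly $\omega^{p_j}\colon\cB_{\2}^+\cap M^{p_j}\to\cK\cap V^{p_j}$ is a $g^{p_j}$-isometry sending $r$ to $f(a)$; and on the Kasner circle, the Kasner map $f$ sends $a$ to $f(a)$. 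By remark~\ref{r.linearisation-properties}(vi) (or the normalization in subsection~\ref{ss.choice-neighborhood}), the metrics $g^{p_i}$ and $g^{p_j}$ agree with $h$ on $\cK\cap V^{p_i}$ and $\cK\cap V^{p_j}$ respectively. The composition $\omega^{p_j}\circ\Theta^C\circ(\alpha^{p_i})^{-1}$, restricted to the heteroclinic-endpoint curves and read through these identifications, is exactly the Kasner map $f$ near $a$ — this is just the definition of $f$ as ``follow the type $\2$ orbit''. Hence $\|D\Theta^C(q)\cdot v\|_{g} = \|Df(a)\cdot w\|_h$ for the corresponding unit vectors, and by the non-uniform expansivity of the Kasner map recalled in the introduction — since $a$ ranges in the compact set $C\subset\cK\setminus\{T_1,T_2,T_3\}$, there is $\nu_C>1$ with $|||Df(a)|||_h\ge\nu_C$ for all $a\in C$ — we may take $\nu=\nu_C$ and conclude $\|D\Theta^C(q)\cdot v\|_g\ge\nu\|v\|_g$ for all $v\in G^u_q$.

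The main obstacle I expect is the bookkeeping needed to make the identification ``$\omega^{p_j}\circ\Theta^C\circ(\alpha^{p_i})^{-1}=f$ near $a$'' precise at the level of derivatives, and in particular verifying that $\Theta^C$ really does map the curve $\cB_{\2}^+\cap N^{p_i}$ (a subset of $N^C$, but of course $q$ itself and its whole heteroclinic orbit are honest type $\2$ objects) into $\cB_{\2}^+\cap M^{p_j}$ and not merely into $\cB_{\2}$ in some larger sense — one must check the orbit does not leave and re-enter $\cB^+$, which follows because $\cB^+$ is flow-invariant, and that the \emph{first} return to $V^C$ lands in the correct chart $V^{p_j}$, which follows because $f(a)$ is in the interior of $V^{p_j}$ and the heteroclinic orbit converges to $f(a)$, so all sufficiently late points are in $V^{p_j}$ while all points up to some time before that stay outside $V^C$ by compactness of the Hausdorff-close heteroclinic segment away from $C$. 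A secondary technical point is uniformity of the neighborhood $\cV$ and of $\nu$ over all $q\in\wC\cap N^C$: since $\wC\cap N^C=\bigcup_i(\cB_{\2}^+\cap N^{p_i})\cap\wC$ is compact and the finitely many charts are fixed, one covers $\wC\cap N^C$ by finitely many flow-boxes on each of which $t(\cdot)$ and $\Theta^C$ are $C^1$, takes $\cV$ to be their union, and takes $\nu=\min_i\nu_C$ over the (single) $\nu_C$ attached to $C$; continuity of $q\mapsto Df(\alpha^{p_i}(q))$ and compactness then give a uniform $\nu>1$.
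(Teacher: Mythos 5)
Your proposal is correct and follows essentially the same route as the paper: the flow-box argument near orbit segments bounded away from $\cK$ for the $C^1$ regularity, the flow-invariance of $\cB_{\2}$ to show $D\Theta^C(q)\cdot G^u_q=F^u_{\Theta^C(q)}$, and the identification $\omega\circ\Theta^C=f\circ\alpha$ on $\cB_{\2}\cap N^C$ together with Proposition~\ref{p.isometries} and the uniform expansion of $f$ on the compact set $C\subset\cK\setminus\{T_1,T_2,T_3\}$ to get $\nu>1$. The one concern you flag about the first return landing in the ``correct'' chart $V^{p_j}$ is in fact automatic from Proposition~\ref{p.behavior-II}: a type $\2$ orbit can only cross $M^{p_k}_1\cup M^{p_k}_2$ if its $\omega$-limit lies in $V^{p_k}$, and the $V^{p_k}$ are disjoint, so no detour through a wrong chart is possible.
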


\begin{proof}
Consider a point $q\in \wC\cap N^C$. By proposition~\ref{p.orbits-of-C}, the orbit of $q$ intersects $M^C$ at some point $r\in \wC\cap M^C$. Now, recall that~:
\begin{itemize}
 \item $N$, $M^C_1$, $M^C_2$ are $C^1$ hypersurfaces with boundary that are transversal to the orbits of $X_{\cB}$~;
 \item $V$ was chosen so that $C$ is contained in the interior of $V$. This implies that $\wC$ does not intersect neither the boundary of the hypersurface $N$, nor the boundary of hypersurface $M^C_1$ and $M^C_2$. It follows that $\wC$ does not intersect $M^C_1\cap M^C_2$. Hence, $q$ is in the interior of $N$, and $r$ is in the interior of $M^C_1$ or $M^C_2$. 
 \end{itemize}
This implies the existence of a neighborhood $\cV_q$ of $q$ in $N$ such that, for every $q'\in \cV$, the forward orbit of $q'$ intersects $M^C$ after some time $t(q')$ which depends in a $C^1$ way on $q'$. By definition of $\Theta$, for every $q'\in \cV_q$, we have $\Theta(q'):=X_{\cB}^{t(q')}(q')$. In particular, $\Theta$ is well-defined and $C^1$ on $\cV_q$. This proves the two first statements of the proposition.

Since $\cB_{\2}$ is invariant under the flow of $X_{\cB}$, the map $\Theta^C$ maps $\cB_{\2}\cap N^C$ on $\cB_{\2}\cap M^C$. Now, observe that, for every $q\in\cB_{\2}\cap N^C$, the direction $G^u_q$ is nothing but the tangent space of $\cB_{\2}\cap N^C$ at $q$, and the direction $F^u_{\Theta^C(q)}$ is nothing but the tangent space of $\cB_{\2}\cap M^C$ at $\Theta^C(q)$. This shows that $d\Theta^C(q)$ maps $G^u_g$ on $F^u_{\Theta^C(q)}$ for every $q\in \wC\cap N^C$.

We are left to prove the existence of a constant $\nu>1$ such that $\|D\Theta^C(q).v\|_g\geq \nu \|v\|_g$ for every $q\in \wC\cap N^C$ and every $v\in G^u_q$. For this purpose, we will use the maps 
$$\alpha : \cB_{\2}\cap N^C\longrightarrow \cK\cap V^C \quad \mbox{ and }\quad \omega : \cB_{\2}\cap M^C\longrightarrow \cK\cap V^C.$$
We recall that $\alpha$ maps a point $r\in\cB_{\2}\cap N^C$ to the $\alpha$-limit point of the orbit of $r$, and that $\omega$ maps a point $s\in\cB_{\2}\cap M^C$ to the $\omega$-limit point of the orbit of $s$ (see section~\ref{s.Dulac}). We also recall that $\alpha$ is a $C^1$ local isometry for the metrics induced by $g$ on $\cB_{\2}\cap N^C$ and $\cK\cap V^C$, and that $\omega$ is a $C^1$ local isometry for the metrics induced by $g$ on $\cB_{\2}\cap M^C$ and $\cK\cap V^C$ (proposition~\ref{p.isometries}). Finally, we observe that, for $r\in \cB_{\2}\cap N^C$, 
$$\omega(\Theta^C(r))=\omega(r)=f(\alpha(r)).$$ 
The first equality is due to the fact that $\Theta^C(r)$ and $r$ are on the same orbit~; the second one is an immediate consequence of the definition of the Kasner map $f$. This shows that the last statement of proposition~\ref{p.hyperbolicity-Psi} is equivalent to the following statement about the Kasner map~:  there exists a constant $\nu>1$ such that, for every $p\in C$ and every $v\in T_p \cK$,  one has $\|Df(p).v\|_g\geq \nu.|v|_g$. 

This last statement is an immediate consequence of the elementary properties of the Kasner map, and of our choice of the riemannian metric $g^C$. Indeed, the riemannian metric $g^C$ was chosen so that it induces the same metric on $\cK\cap V^C$ as the euclidean metric $h=(d\Sigma_1)^2+(d\Sigma_2)^2+(d\Sigma_3)^2+(dN_1)^2+(dN_2)^2+(dN_3)^2$ (see the end of subsection~\ref{ss.choice-neighborhood} and the end of section~\ref{s.construction-Poincare-section}). And, as we already mentionned in the introduction, since $C$ is a compact subset of the Kasner circle $\cK$ which does not contain any of the three special points $T_1,T_2,T_3$, there exists a constant $\nu^C>1$ such that, for every $q\in C$ and every $v\in T_p \cK$, we have $\|Df(p).v\|_h\geq \nu^C\|v\|$, where $\|\cdot\|_h$ denotes the metric induced on $\cK$ by the euclidean metric $h$.
\end{proof}

Let $\cU:=\Phi^{-1}(\cV)$. Clearly, $\cU$ is a neighborhood of $\wC\cap M^C$ in $M^C$. Combining propositions~\ref{p.hyperbolicity-Phi-total} and~\ref{p.hyperbolicity-Psi}, one immediately gets~:

\begin{proposition}
\label{p.hyperbolicity-Theta}
The Poincar\'e map $\Phi^C$ is well-defined and $C^1$ on $\cU$.  The compact set $\wC\cap M^C$ is a hyperbolic set for $\Phi$. More precisely, there exists a constant $\nu\in (0,1)$ such that, for every $q\in \wC\cap M^C$, 
\begin{itemize}
\item $d\Phi^C(q).v=0$ for every $v\in F^s_q$,
\item $d\Phi^C(q).F^u_q=F^u_{\Phi^C(q)}$, and $\|(d\Phi^C(q))^{-1}.v\|_g<\nu \|v\|_g$ for every $v\in F^u_{\Phi^C(q)}$.
\end{itemize}
\end{proposition}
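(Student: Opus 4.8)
The plan is to prove Proposition~\ref{p.hyperbolicity-Theta} by a direct chain-rule computation, feeding in the derivative information about $\Delta^C$ from Proposition~\ref{p.hyperbolicity-Phi-total} and about $\Theta^C$ from Proposition~\ref{p.hyperbolicity-Psi}. First I would record that $\Delta^C$ sends $\wC\cap M^C$ into $\wC\cap N^C$: a point $q\in\wC\cap M^C$ lies on a type~$\2$ orbit $\cO_{p,f(p)}$ with $p\in C$, and by definition $\Delta^C(q)$ is the first intersection with $N^C$ of the orbit $\cO_{\omega(q),f(\omega(q))}=\cO_{f(p),f^2(p)}$; since $C$ is forward-invariant this orbit lies in $\wC$, so $\Delta^C(q)\in\wC\cap N^C$. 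Hence $\Delta^C(\wC\cap M^C)$ is contained in the neighborhood $\cV$ of $\wC\cap N^C$ produced by Proposition~\ref{p.hyperbolicity-Psi}, so $\cU:=(\Delta^C)^{-1}(\cV)$ is a neighborhood of $\wC\cap M^C$ in $M^C$ on which $\Phi^C=\Theta^C\circ\Delta^C$ is a composition of two $C^1$ maps, hence $C^1$. The same bookkeeping shows $\Phi^C(\wC\cap M^C)\subseteq\wC\cap M^C$ (the image point again lies on $\cO_{f(p),f^2(p)}$), so $\wC\cap M^C$ is $\Phi^C$-invariant; it is compact, being a closed subset of the compact manifold-with-corners $M^C$.

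Next I would differentiate. Fix $q\in\wC\cap M^C$ and set $r:=\Delta^C(q)\in\wC\cap N^C$, so that $D\Phi^C(q)=D\Theta^C(r)\circ D\Delta^C(q)$. On the subspace $F^s_q$, Proposition~\ref{p.hyperbolicity-Phi-total} gives $D\Delta^C(q)|_{F^s_q}=0$, whence $D\Phi^C(q)|_{F^s_q}=0$; this is the first bullet of the proposition, and it makes the contraction inequality~\eqref{e.contraction} hold with any constant in $(0,1)$. On the line $F^u_q=\RR\cdot\frac{\partial}{\partial y}(q)$, Proposition~\ref{p.hyperbolicity-Phi-total} says $D\Delta^C(q)$ is a $g^C$-isometry onto $G^u_r$, and Proposition~\ref{p.hyperbolicity-Psi} says $D\Theta^C(r)$ carries $G^u_r$ onto $F^u_{\Theta^C(r)}=F^u_{\Phi^C(q)}$ with $\|D\Theta^C(r)w\|_{g^C}\ge\nu_0\|w\|_{g^C}$ for some uniform $\nu_0>1$. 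Composing, $D\Phi^C(q)$ is a linear isomorphism $F^u_q\to F^u_{\Phi^C(q)}$ with $\|D\Phi^C(q)v\|_{g^C}\ge\nu_0\|v\|_{g^C}$, equivalently $\|(D\Phi^C(q))^{-1}w\|_{g^C}\le\nu_0^{-1}\|w\|_{g^C}$ for every $w\in F^u_{\Phi^C(q)}$; putting $\nu:=\nu_0^{-1}\in(0,1)$ yields the second bullet. Throughout I would be careful that every norm is computed with the metric $g^C$, which restricts to $g^{p_i}$ on each box $V^{p_i}$, so that the isometry statement of Proposition~\ref{p.hyperbolicity-Phi-total} (phrased for the relevant $g^{p_i}$) and the expansion statement of Proposition~\ref{p.hyperbolicity-Psi} chain up correctly even when $q$ and $\Phi^C(q)$ sit in different boxes.

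Finally I would assemble the hyperbolicity statement: the splitting $T_qM^C=F^s_q\oplus F^u_q$ is, in the linearizing charts, spanned by fixed coordinate directions, hence depends continuously on $q\in\wC\cap M^C$; condition~\eqref{e.contraction} holds (trivially, since $D\Phi^C$ annihilates $F^s$) with, say, $\mu=\tfrac12$, and condition~\eqref{e.dilatation} holds with the constant $\nu_0>1$ above. Thus $\wC\cap M^C$ is a hyperbolic set of index $\dim F^s_q=2$ for $\Phi^C$, which is precisely the content of the proposition.

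I expect no genuinely hard step here: the substance has already been isolated in Propositions~\ref{p.hyperbolicity-Phi-total} and~\ref{p.hyperbolicity-Psi}. The two points that deserve a little care are (i) checking $\Delta^C(\wC\cap M^C)\subset\cV$ and $\Phi^C(\wC\cap M^C)\subseteq\wC\cap M^C$, which rely on the forward-invariance of $C$ together with the fact that the Dulac and transition maps keep orbits on the same type~$\2$ trajectory, and (ii) verifying that $D\Delta^C(q)$ maps $F^u_q$ onto $G^u_r$ and $D\Theta^C(r)$ maps $G^u_r$ onto $F^u_{\Phi^C(q)}$, so the unstable lines match up across the intermediate hypersurface $N^C$. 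Neither is deep, but (i) is the spot I would write out most explicitly.
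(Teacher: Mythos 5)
Your proposal is correct and follows exactly the route the paper takes: the paper states that Proposition~\ref{p.hyperbolicity-Theta} follows ``immediately'' by combining Propositions~\ref{p.hyperbolicity-Phi-total} and~\ref{p.hyperbolicity-Psi}, and your chain-rule computation, together with the invariance checks $\Delta^C(\wC\cap M^C)\subset\wC\cap N^C$ and $\Phi^C(\wC\cap M^C)\subset\wC\cap M^C$ via forward-invariance of $C$, is precisely what that ``immediately'' compresses. You also correctly disambiguate the slight notational overload (the paper's $\nu>1$ in Proposition~\ref{p.hyperbolicity-Psi} versus $\nu\in(0,1)$ here) and the definition of $\cU$ as $(\Delta^C)^{-1}(\cV)$.
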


Proposition~\ref{p.hyperbolicity-Theta} shows that the map $\Phi^C:M^C\to M^C$ and the set $\wC\cap M^C$ satisfy the hypotheses of the stable manifold theorem~\ref{t.stable-manifold} (for any contraction rate $\mu>0$). This shows the existence of local stable manifold, with respect to the map $\Phi^C$, for the points of $\wC\cap M^C$~:

\begin{theorem}
\label{t.stable-manifold-Theta}
For every $\epsilon$ small enough, for every $q\in \wC\cap M^C$, the set 
$$W^s_\epsilon(\Phi^C,q):=\{r\in M^C \mid \mbox{dist}_g\left(\left(\Phi^C\right)^n(r),\left(\Phi^C\right)^n(q)\right)\leq \epsilon \mbox{ for every }n\geq 0\}$$
is a $C^1$-embedded disc of dimension $2$ in $M^C$, tangent to $F^s_q$ at $q$, depending continuously on $q$ in the $C^1$ topology. Moreover, for every constant $\mu>0$, there exists another constant $K$ such that, for every $q\in\wC\cap M^C$, for every $r\in W^s_\epsilon(\Phi^C,q)$ and every $n\geq 0$
$$\mbox{dist}_g\left(\left(\Phi^C\right)^n(r),\left(\Phi^C\right)^n(q)\right)\leq K \mu^n.$$
\end{theorem}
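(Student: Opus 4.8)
The plan is to deduce Theorem~\ref{t.stable-manifold-Theta} directly from the abstract stable manifold theorem~\ref{t.stable-manifold}, applied to the Poincar\'e map $\Phi^C$ and the compact set $\wC\cap M^C$. Indeed, Proposition~\ref{p.hyperbolicity-Theta} already asserts that $\wC\cap M^C$ is a hyperbolic set for $\Phi^C$, with the continuous splitting $T_qM^C=F^s_q\oplus F^u_q$ introduced in Section~\ref{s.Dulac}; by construction $\dim F^s_q=2$, so the index is $2$, which will produce a $2$-dimensional disc as claimed. The only points requiring a little care before invoking Theorem~\ref{t.stable-manifold} are that $\Phi^C$ is a priori only defined on the neighborhood $\cU$ of $\wC\cap M^C$ (not on all of $M^C$), and that $M^C$ is a manifold with boundary and corners, whereas Theorem~\ref{t.stable-manifold} is stated for a globally defined $C^1$ self-map of a boundaryless manifold.

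First I would dispose of these two issues simultaneously. Since $C$ was chosen to lie in the interior of $V^C$, the set $\wC$ is disjoint from the boundaries of the hypersurfaces $M^{p_i}_1$, $M^{p_i}_2$ and $N^{p_i}$ (this was observed in the proof of Proposition~\ref{p.hyperbolicity-Psi}); in particular $\wC\cap M^C$ lies in the interior of $M^C$. Therefore one can embed $M^C$ into a boundaryless manifold $\hat M$ of the same dimension (for instance by slightly enlarging each of the hypersurfaces $M^{p_i}_1$ and $M^{p_i}_2$ transversally to $X_{\cB}$), in such a way that a neighborhood of $\wC\cap M^C$ in $\hat M$ is contained in $M^C$; and one can extend $\Phi^C$ to a $C^1$ self-map $\hat\Phi$ of $\hat M$ which coincides with $\Phi^C$ on some open neighborhood $\cU'\subset\cU$ of $\wC\cap M^C$ (using, say, a partition of unity to interpolate between $\Phi^C$ near $\wC\cap M^C$ and the identity away from it). By Proposition~\ref{p.hyperbolicity-Theta}, $\wC\cap M^C$ is still a hyperbolic set for $\hat\Phi$ with the splitting $F^s\oplus F^u$; and since $D\Phi^C(q)$ vanishes identically on $F^s_q$ for every $q\in\wC\cap M^C$, this set is hyperbolic with contraction rate $\mu$ for every $\mu\in(0,1)$.

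I would then apply Theorem~\ref{t.stable-manifold} to $\hat\Phi$ and $\wC\cap M^C$: for every $\epsilon$ small enough and every $q\in\wC\cap M^C$, the local stable set $W^s_\epsilon(\hat\Phi,q)$ is a $C^1$-embedded $2$-disc tangent to $F^s_q$ at $q$, depending continuously on $q$ in the $C^1$ topology, and for any contraction rate $\mu\in(0,1)$ there is a constant $K=K(\mu)$ with $\mbox{dist}_g(\hat\Phi^n(r),\hat\Phi^n(q))\leq K\mu^n$ for $r\in W^s_\epsilon(\hat\Phi,q)$. Finally, for $\epsilon$ small enough, $W^s_\epsilon(\hat\Phi,q)$ and all its forward iterates under $\hat\Phi$ remain inside $\cU'$ (because their distance to $q\in\wC\cap M^C$ stays $\le\epsilon$), where $\hat\Phi=\Phi^C$; hence $W^s_\epsilon(\hat\Phi,q)=W^s_\epsilon(\Phi^C,q)$ and the estimate transfers verbatim. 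This gives exactly the statement of Theorem~\ref{t.stable-manifold-Theta}.

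There is essentially no hard analysis left at this stage: all the substantive work --- the linearization near points of $C$, the computation of the Dulac maps, the domination lemma~\ref{l.domination}, and the verification that the transition map expands the direction tangent to $\cB_{\2}$ --- has already been carried out in Sections~\ref{s.Takens}--\ref{s.construction-Poincare-section} and summarized in Proposition~\ref{p.hyperbolicity-Theta}. The only genuine obstacle is the mild technical bookkeeping just described, namely convincing oneself that the local stable manifold is a genuinely local object, so that the partial definition of $\Phi^C$ and the corners of $M^C$ are harmless; once that is granted, the theorem is a one-line consequence of Theorem~\ref{t.stable-manifold}.
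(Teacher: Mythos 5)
Your proposal is correct and follows essentially the same route as the paper: Theorem~\ref{t.stable-manifold-Theta} is obtained there as a direct application of the abstract stable manifold Theorem~\ref{t.stable-manifold} to $\Phi^C$ and $\wC\cap M^C$, the hyperbolicity hypotheses having been verified in Proposition~\ref{p.hyperbolicity-Theta}. The only difference is that you make explicit the routine bookkeeping (extending $\Phi^C$ to a boundaryless ambient manifold and checking that the local stable set is unaffected) which the paper leaves implicit.
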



\section{Stable manifolds for the Wainwright-Hsu vector field: proof of theorem~\ref{t.main}}
\label{s.end-proof-main}

We are left to prove that theorem~\ref{t.stable-manifold-Theta} implies our main theorem~\ref{t.main}. 

Consider a point $q\in C$. Then heteroclinic orbit $\cO_{q,f(q)}$ intersects the ``Poincar\'e section" $M^C$ at one and only one point, that we denote by $\bar q$. Note that $\bar q\in\wC\cap M^C$. The set $W^s_\epsilon(\Phi^C,\bar q)$ defined in the statement of theorem~\ref{t.stable-manifold-Theta} is a $C^1$-embedded two-dimensional disc in the three-dimensional hypersurface with boundary $M^C$. This disc is tangent to $F^s_{\bar q}$ at $\bar q$. Since the two-dimensional submanifold $\cB_{\7}\cup \cB_{\2}$ is not tangent to $F^s_{\bar q}$ at $\bar q$, this implies that  $W^s_\epsilon(\Phi^C,\bar q)\cap\cB_{\9}$ contains a $C^1$-embedded two-dimensional  disc in $M^C$. Moreover, this disc depends continuously on $\bar q$. 

\begin{proposition}
\label{p.from-return-map-to-flow}
For every point $r$ in $W^s_\epsilon\left(\Phi^C,\bar q\right)\cap\cB_{\9}$:
\begin{enumerate}
\item \label{claim.1}there is an increasing sequence of times $(t_n)_{n\geq 0}$ such that
$\displaystyle \mbox{dist}_{g}(X_{\cB}^{t_n}(r),f^n(q))\mathop{\longrightarrow}_{n\to\infty} 0;$
\item \label{claim.2} the Hausdorff distance between the piece of orbit $\{X_{\cB}^t(r)\;;\; t_n\leq t\leq t_{n+1}\}$ and the heteroclinic orbit $\cO_{f^n(q),f^{n+1}(q)}$ tends to $0$ when $n$ goes to $+\infty$.
\end{enumerate}
\end{proposition}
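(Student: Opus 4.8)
The plan is to bootstrap from the discrete information provided by Theorem~\ref{t.stable-manifold-Theta} to the continuous-time statements about the Wainwright-Hsu flow. Fix $r \in W^s_\epsilon(\Phi^C,\bar q) \cap \cB_{\9}$. By definition of $W^s_\epsilon(\Phi^C,\bar q)$, the forward $\Phi^C$-orbit of $r$ is well-defined for all $n\geq 0$ and $\mathrm{dist}_g\big((\Phi^C)^n(r),(\Phi^C)^n(\bar q)\big) \leq K\mu^n \to 0$, where we may take $\mu<1$. The point $(\Phi^C)^n(\bar q)$ lies on $\cO_{f^n(q),f^{n+1}(q)} \cap M^C$, since $\bar q \in \widehat C \cap M^C$ and $\Phi^C$ shadows the heteroclinic chain through $\widehat C$ (this is built into the construction: $\Phi^C$ restricted to $\widehat C \cap M^C$ is conjugate, via the isometries $\alpha,\omega$ of Proposition~\ref{p.isometries}, to the Kasner map on $C$). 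So the discrete orbit of $r$ converges exponentially to the correct sequence of points on the heteroclinic chain.

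The next step is to interpolate. For each $n$, let $s_n \in [0,\infty)$ be the time at which $X_{\cB}^{s_n}(r)$ equals $(\Phi^C)^n(r)$, i.e. the return time to $M^C$; set $t_n := s_n$ as the sequence demanded by the statement (one checks $s_n$ is increasing and $s_n \to \infty$ because each passage through a box $V^{p_i}$ followed by a transition $\Theta^C$ takes a positive, bounded-below amount of time). To prove claim~(1), I would note that between consecutive returns, the orbit of $r$ decomposes into: (a) a passage through some box $V^{p_i}$ governed by the explicit linear flow~\eqref{e.linear-flow}, during which the orbit is $C^0$-close to the piece of $\cK$ near $p_i$ together with the corresponding heteroclinic arc, with closeness controlled by the (exponentially small) coordinates $x_1^p,x_2^p,z^p$; and (b) a transition via $\Theta^C$, which has uniformly bounded duration $t(q')$ and depends $C^1$-continuously on the entry point (Proposition~\ref{p.hyperbolicity-Psi}). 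Since the entry point $(\Phi^C)^n(r)$ converges to $(\Phi^C)^n(\bar q) \in \cO_{f^n(q),f^{n+1}(q)}$, and the whole passage+transition is a continuous function of the entry point on the compact relevant region, the full orbit segment $\{X_{\cB}^t(r) : t_n \leq t \leq t_{n+1}\}$ is $C^0$-close to $\cO_{f^n(q),f^{n+1}(q)}$, uniformly as $n\to\infty$; in particular at the appropriate sub-time the orbit passes within $o(1)$ of $f^n(q)$, which (after possibly reindexing $t_n$ to hit that sub-time) gives claim~(1), and the uniform closeness of the whole segment gives the Hausdorff-distance statement~(2).

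The one genuine subtlety — and the step I expect to be the main obstacle — is making the $C^0$-closeness of the orbit segment to the heteroclinic arc \emph{uniform in $n$}. Inside a box $V^{p_i}$, the time spent is $-\tfrac{1}{\lambda^u(y^p)}\log(z^p/\gamma)$, which tends to $+\infty$ as the entry point approaches $\cB_{\2}$; so the passage time is unbounded, and one cannot naively invoke continuous dependence on a compact time interval. The resolution is that, by~\eqref{e.linear-flow}, as $z^p \to 0$ the orbit in $V^{p_i}$ converges in the Hausdorff sense to the union of the incoming segment of $\cK$ and the heteroclinic arc $\cO_{\omega(q),f(\omega(q))}$ through $V^{p_i}$, \emph{precisely because} Lemma~\ref{l.domination} forces the contracting directions to dominate: $x_1^p$ and $x_2^p$ go to $0$ faster than $z^p$ grows back to $\gamma$, so the orbit hugs $\cB_{\2}$. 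Thus the Hausdorff distance from the segment to the heteroclinic arc is bounded by a modulus-of-continuity function of the entry-point data (distance to $\cB_{\2}^+$, value of $y^p$), and since only finitely many boxes $V^{p_1},\dots,V^{p_n}$ are involved, one gets a single modulus $\omega(\cdot)$ valid for all of them; feeding in $\mathrm{dist}_g\big((\Phi^C)^n(r),(\Phi^C)^n(\bar q)\big) \leq K\mu^n$ yields Hausdorff distance $\leq \omega(K\mu^n) \to 0$, which is exactly claim~(2), and claim~(1) follows as the special case of watching the orbit pass near the endpoint $f^n(q)$ of each arc.
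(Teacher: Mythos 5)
Your proposal is correct and follows essentially the same route as the paper: the exponential $\Phi^C$-convergence from Theorem~\ref{t.stable-manifold-Theta}, the explicit linear flow~\eqref{e.linear-flow} inside the boxes, and the uniform bound on the transition time are exactly the three ingredients the paper combines, the only cosmetic difference being that the paper picks $t_{n+1}=\tau_n+\alpha_n$ (return time $\alpha_n$ plus an extra time $\tau_n$ deep inside the next box) so that $X_{\cB}^{t_n}(r)$ is already close to $f^n(q)$, proving item~(1) first and thereby sidestepping your reindexing step. One minor caveat on your final paragraph: the modulus you need for the Hausdorff closeness along a box passage does not actually use Lemma~\ref{l.domination}, since $\max_t\min\left(x^p_1(t),z^p(t)\right)=O\left((z^p(0))^{\theta}\right)$ with $\theta=|\lambda^s_1|/(\lambda^u+|\lambda^s_1|)>0$ already holds for any $\lambda^s_1<0<\lambda^u$, so the domination inequality $|\lambda^s_1|>\lambda^u$ is not what makes the orbit hug $\cB_{\2}$ (it is what is needed for the super-linear contraction and $C^1$-smoothness of the Dulac map in Proposition~\ref{p.hyperbolicity-Phi}); your estimate is nonetheless correct, just slightly over-attributed.
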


\begin{proof}
We first prove item~\ref{claim.1}. According to theorem~\ref{t.stable-manifold-Theta}, we have
\begin{equation}
\mbox{dist}_g\left(\left(\Phi^C\right)^n(r),\left(\Phi^C\right)^n(\bar q)\right)\mathop{\longrightarrow}_{n\to\infty} 0.
\end{equation}
Together with the continuity of the flow of $X_{\cB}$, this shows the existence of a increasing sequence of real numbers $(\tau_n)_{n\geq 0}$ such that 
\begin{equation}
\mbox{dist}_g\left(X_{\cB}^{\tau_n}\left(\left(\Phi^C\right)^n(r)\right),\omega\left(\left(\Phi^C\right)^n(\bar q)\right)\right)\mathop{\longrightarrow}_{n\to\infty} 0.
\end{equation}
Since $r\in\cB_{\9}$, there exists an increasing sequence of times $(\alpha_n)_{n\geq 0}$ such that, for  every $n\geq 0$,
\begin{equation}
\left(\Phi^C\right)^n(r)=X_{\cB}^{\alpha_n}(r).
\end{equation}
Since $\bar q\in\cB_{\2}$, we have, for every $n\geq 0$,
\begin{equation}
\omega\left(\left(\Phi^C\right)^n(\bar q)\right)=f^n(\omega(\bar q))=f^{n+1}(q).
\end{equation}
For every $n\geq 0$, let $t_{n+1}:=\tau_n+\alpha_n$. Then $(t_n)_{n\geq 0}$ is an increasing sequence, and 
\begin{equation}
\label{e.convergence}
\mbox{dist}_{g}\left(X_{\cB}^{t_{n+1}}(r),f^{n+1}(q)\right)\mathop{\longrightarrow}_{n\to\infty} 0.
\end{equation}
This completes the proof of item~\ref{claim.1}.

\medskip

To prove item~\ref{claim.2}, we decompose the piece of orbit $\{X_{\cB}^t(r)\;;\; t_n\leq t\leq t_{n+1}\}$ into three sub-pieces:
\begin{itemize}
\item First, the piece of orbit going from $X_{\cB}^{t_n}(r_0)$ to $\Delta^C\left(\left(\Phi^C\right)^n(r)\right)$, contained in $V$. Formula~\eqref{e.linear-flow}, together with~\eqref{e.convergence} shows that, for $n$ large, this piece of orbit is close to the heteroclinic orbit $\cO_{f^n(q),f^{n+1}(q)}$. In particular, for $n$ large, the point $\Delta^C\left(\left(\Phi^C\right)^n(r)\right)$ is close to the heteroclinic orbit $\cO_{f^n(q),f^{n+1}(q)}$.
\item Then, a piece of orbit  going from $\Delta^C\left(\left(\Phi^C\right)^n(r)\right)$ to $\left(\Phi^C\right)^{n+1}(r)$, contained in $\cB\setminus V$. For $n$ large, this piece of orbit  is close to the heteroclinic orbit $\cO_{f^n(q),f^{n+1}(q)}$. Indeed, for $n$ large, the point $\Delta^C\left(\left(\Phi^C\right)^n(r_0)\right)$ is close to the heteroclinic orbit $\cO_{f^n(q),f^{n+1}(q)}$, and if we write $\left(\Phi^C\right)^{n+1}(r)=X_{\cB}^{t(\Delta^C((\Phi^C)^n(r)))}\left(\Delta^C\left(\left(\Phi^C\right)^n(r)\right)\right)$, then $t\left(\Delta^C\left(\left(\Phi^C\right)^n(r)\right)\right)$ depends continuously on $\Delta^C\left(\left(\Phi^C\right)^n(r)\right)$ (proposition~\ref{p.hyperbolicity-Psi}) and thus is uniformly bounded. 
\item Finally, a piece of orbit going from $\left(\Phi^C\right)^{n+1}(r)$ to $X_{\cB}^{t_{n+1}}(r)$, contained in $V$. Formula~\eqref{e.linear-flow} together with~\eqref{e.convergence} show that, for $n$ large, this piece of orbit is close to the heteroclinic orbit $\cO_{f^n(q),f^{n+1}(q)}$. 
\end{itemize}
This completes the proof of item~\ref{claim.2}
\end{proof}

\begin{corollary}
\label{c.from-return-map-to-flow-2}
$$W^s(q)=\bigcup_{t\geq 0}\bigcup_{n\geq 0} X^{-t}\left(W^s_\epsilon\left(\Phi^C,\left(\Phi^C\right)^n(\bar q)\right)\cap\cB_{\9}\right).$$
\end{corollary}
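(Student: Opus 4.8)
The plan is to prove the two inclusions of the asserted identity separately; both reduce to Proposition~\ref{p.from-return-map-to-flow} together with a few elementary observations about $W^s(q)$. First I would record the following preliminaries. Since $C$ is forward-invariant, $f^n(q)\in C$ for every $n$; write $\overline{f^n(q)}$ for the unique point of $\cO_{f^n(q),f^{n+1}(q)}\cap M^C$. Then: (i) $(\Phi^C)^n(\bar q)=\overline{f^n(q)}$, because $\Phi^C$ carries $\cO_{f^k(q),f^{k+1}(q)}\cap M^C$ onto $\cO_{f^{k+1}(q),f^{k+2}(q)}\cap M^C$ — this is immediate from the definitions of $\Delta^C$ and $\Theta^C$ on orbits lying in $\cB_{\2}$; (ii) $X_{\cB}^{-t}\big(W^s(q)\big)\subseteq W^s(q)$ for every $t\geq 0$, since one may translate the sequence $(t_n)$ of Definition~\ref{d.stable-manifold} by $t$; (iii) $W^s(f^N(q))\subseteq W^s(q)$ for every $N\geq 0$, since given a sequence $(s_m)$ witnessing membership in $W^s(f^N(q))$ one sets $t_j:=s_{j-N}$ for $j\geq N$ and prepends any increasing string $t_0<\dots<t_{N-1}<s_0$, the two convergence conditions of Definition~\ref{d.stable-manifold} being asymptotic in $n$ and hence unaffected by finitely many initial terms.

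\emph{The inclusion $\supseteq$.} Let $r$ belong to $X_{\cB}^{-t}\big(W^s_\epsilon(\Phi^C,(\Phi^C)^n(\bar q))\cap\cB_{\9}\big)$ and set $\rho:=X_{\cB}^t(r)$, so that $\rho\in W^s_\epsilon(\Phi^C,\overline{f^n(q)})\cap\cB_{\9}$ by (i). I would then apply Proposition~\ref{p.from-return-map-to-flow} with $f^n(q)$ in the role of $q$ (legitimate since $f^n(q)\in C$, and then $\bar q$ there is $\overline{f^n(q)}$): this yields $\rho\in W^s(f^n(q))$. By (ii), $r\in W^s(f^n(q))$, and by (iii), $r\in W^s(q)$. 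Since $t$ and $n$ were arbitrary, the right-hand side is contained in $W^s(q)$.

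\emph{The inclusion $\subseteq$.} Let $r\in W^s(q)$, with witnessing sequence $(t_n)$ and heteroclinic chain $(\cO_{f^n(q),f^{n+1}(q)})_n\subset\cB^+$. I would first check that $r\in\cB_{\9}$: the forward orbit of $r$ must Hausdorff-approximate this infinite aperiodic chain, so $(X_{\cB}^{t_n}(r))_n$ cannot converge; indeed $(f^n(q))_n$ has no limit (a limit would, by continuity of $f$ away from the Taub points, be an $f$-fixed point in the closed set $C$, contradicting aperiodicity). This excludes $r\in\cK$ and $r\in\cB_{\2}$ (whose forward orbits converge to a point of $\cK$), and likewise $r\in\cB_{\6},\cB_{\7}$ by the known convergent behavior of their orbits; and the fact that the orbit of $r$ shadows a chain contained in $\cB^+$ excludes $\cB_{\8}$ and forces $r\in\cB_{\9}$. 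Granting this: for $N$ large, $X_{\cB}^{t_N}(r)$ lies deep inside the box $V^{p_i}$ containing $f^N(q)$ and, being in $\cB_{\9}$, has all three coordinates $x^{p_i}_1,x^{p_i}_2,z^{p_i}$ positive, so by~\eqref{e.linear-flow} its forward orbit leaves $V^{p_i}$ through $N^{p_i}$; by the shadowing hypothesis it re-enters $V^C$ through $M^C$ at a point $\rho$ close to $\overline{f^N(q)}$. I set $t\geq 0$ to be this crossing time, so $\rho=X_{\cB}^t(r)\in M^C\cap\cB_{\9}$.

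\emph{Conclusion, and the hard part.} It then remains to prove $\rho\in W^s_\epsilon(\Phi^C,\overline{f^N(q)})=W^s_\epsilon(\Phi^C,(\Phi^C)^N(\bar q))$, for then $r=X_{\cB}^{-t}(\rho)$ lies in the right-hand side and we are done. Since the forward orbit of $r$ Hausdorff-shadows the chain with error $\eta_k\to 0$ at step $k$, since the crossings of $M^C$ and $N^C$ are uniformly transverse (finitely many boxes, $C$ compact), and since the sojourn time of an orbit outside $V^C$ between consecutive boxes is uniformly bounded (as established in the proof of Proposition~\ref{p.from-return-map-to-flow}), the forward orbit of $r$ crosses $M^C$ exactly once per step of the chain, the $m$-th crossing after $\rho$ being $(\Phi^C)^m(\rho)$, and this point lies within $C'\eta_{N+m}$ of $(\Phi^C)^m(\overline{f^N(q)})=\overline{f^{N+m}(q)}$ for a uniform constant $C'$; choosing $N$ so large that $\sup_{k\geq N}C'\eta_k\leq\epsilon$ completes the argument. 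The main obstacle is exactly this last paragraph: promoting the qualitative Hausdorff-shadowing of the heteroclinic chain to uniform, step-by-step control of the sequence of $M^C$-crossings, which rests on the uniform transversality, the uniform bound on excursion times, and the exclusion of spurious extra crossings of $M^C$ or $N^C$ within a single step. The reduction $W^s(q)\subseteq\cB_{\9}$ also deserves care but follows from the qualitative theory of the orbits of types $\1,\2,\6,\7,\8$.
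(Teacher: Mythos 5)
Your proof follows the same route as the paper's, which handles each inclusion in a single sentence: $\supseteq$ is cited to Proposition~\ref{p.from-return-map-to-flow}, and $\subseteq$ is declared ``an immediate consequence of the definition of $W^s(q)$.'' Your expansion is faithful and correct. For $\supseteq$, your preliminary observations (i)--(iii) — that $(\Phi^C)^n(\bar q)$ is the $M^C$-crossing of $\cO_{f^n(q),f^{n+1}(q)}$, that $W^s(q)$ is invariant under $X_{\cB}^{-t}$, and that $W^s(f^N(q))\subseteq W^s(q)$ — are exactly the bookkeeping needed to pass from the proposition (stated for a single $q\in C$) to the full double union over $t$ and $n$; the paper leaves this implicit. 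For $\subseteq$, you are right that the paper's ``immediate'' hides two genuine points: the implicit containment $W^s(q)\subseteq\cB_{\9}$ (forced because the right-hand side lies in the flow-invariant set $\cB_{\9}$), and the upgrade from qualitative Hausdorff shadowing to the quantitative $\epsilon$-control defining $W^s_\epsilon(\Phi^C,\cdot)$, together with the exclusion of spurious $M^C$-crossings. Your outline for both is sound; the ingredients you invoke (uniform transversality of $M^C,N^C$, uniformly bounded excursion time outside $V^C$, the linear model~\eqref{e.linear-flow} inside each box) are the right ones and are available from Section~\ref{s.Dulac} and Proposition~\ref{p.hyperbolicity-Psi}. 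One place where a sentence should be added is the exclusion of $\cB_{\8}$: a $\cB_{\8}$ orbit has one $N_i$ of a fixed negative sign and hence stays at a definite distance from the type-$\2$ cylinder $\{N_i>0,\ N_j=N_k=0\}$; an aperiodic chain, on the other hand, must visit all three cylinders (in the coding of Section~\ref{s.existence-aperiodic}, a sequence in $\Sigma_0$ using only two symbols is necessarily the period-two word), so the Hausdorff condition cannot be met.
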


\begin{proof}
The inclusion of the set on the right hand side in $W^s(q)$ follows from proposition~\ref{p.from-return-map-to-flow}. The inclusion of $W^s(q)$ in the set on the right hand side is an immediate consequence of the definition of~$W^s(q)$.
\end{proof}

We can now complete the proof of our main theorem.

\begin{proof}[Proof of theorem~\ref{t.main}] 
Fix $\eta>0$, and we set 
\begin{equation}
D^s_{\9}(q):=\bigcup_{-\eta\leq t\leq \eta} X_{\cB}^t\left(W^s_{\epsilon}\left(\Phi^C,\bar q\right) \cap \cB_{\9}\right).
\end{equation}
According to corollary~\ref{c.from-return-map-to-flow-2}, $D^s_{\9}(q)$ is contained in $W^s(q)$. Since $W^s_{\epsilon}(\Phi^C,\bar q))\cap\cB_{\9}$ is a $C^1$-embedded 2-dimensional disc in $M^C$, and since the orbits of $X_{\cB}$ are transversal to $M^C$, we get that $D^s_{\9}(q)$ is a $C^1$-embedded 3-dimensional disc in $\cB_{\9}$.  Since $\bar q$ depends continuously on $q$ (proposition~\ref{p.isometries}), and since $W^s_\epsilon(\Phi^C,\bar q)$ depends continuously on $\bar q$ (theorem~\ref{t.stable-manifold-Theta}), the disc $D^s_{\9}(q)$ depends continuously on $q$.
\end{proof}

\begin{remark}
The fact that $W^s(q)$ is an $C^1$ injectively immersed open disc which depends continuousluy on $q$ (remark~\ref{r.global-stable-manifold}) almost follows from the same arguments. More precisely, theorem~\ref{t.stable-manifold-Theta}, corollary~\ref{c.from-return-map-to-flow-2} and the transversality of $M^C$ to the orbits of $X$ show that $W^s(q)$ is an increasing union of $C^1$ embedded closed discs which depend continuously on $q$. The only thing which remains to shows is that this increasing union of closed discs is an open disc; this is actually a consequence of the fact that the orbit of $q$ under the Kasner map $f$ is not periodic. 
\end{remark}


\section[Existence of closed invariant aperiodic subsets of the Kasner circle]{Existence of closed forward-invariant aperiodic subsets of the Kasner circle: proof of proposition~\ref{p.existence-aperiodic}}
\label{s.existence-aperiodic}

The purpose of this section is to prove proposition~\ref{p.existence-aperiodic}. This proposition should be quite obvious for people with some culture in dynamical systems. Indeed, the Kasner map $f:\cK\to\cK$ is a degree $-2$ map of the circle $\cK$. This implies the existence of a continuous degree~1 map $\eta:\cK\to\RR/\ZZ$ such that $\eta\circ f=m_{-2}\circ \phi$ where $m:\RR/\ZZ\to\RR/\ZZ$ is defined by $m(\theta)=-2\theta$.  Moreover, the Kasner map $f$ is expansive~: the norm of the derivative of $f$ (calculated with respect to the metric induced on $\cK$ by the rimennian metric $h$) is strictly bigger than $1$, except at the three special points $T_1,T_2,T_3$ (where it is equal to $1$). This implies that the map $\eta:\cK\to\RR/\ZZ$ defined above is one-to-one, \emph{that is} $f:\cK\to \cK$ is topologically conjugated to the map $m:\theta\mapsto -2\theta$. Finally, it is well-known by experts that, for $|k|\geq 2$, the union of all compact subsets of $\RR/\ZZ$ which are aperiodic for the map $\theta\mapsto k\theta$ is dense in $\RR/\ZZ$. We now give a more detailed proof of the proposition for the readers who may not necessarily be familiar with low-dimensional dynamics.

\begin{proof}
Recall that $T_1,T_2,T_3$ are the three Taub points on the Kasner circle $\cK$. Let $I_1,I_2,I_3$ be the closures of the three connected components of $\cK\setminus\{T_1,T_2,T_3\}$, the notations being chosen so that $T_1$ is not one end of $I_1$, $T_2$ is not one end of $I_2$, and $T_3$ is not an end of $I_3$. 

We consider the set $\Sigma:=\{1,2,3\}^\NN$ endowed with the product topology, and the shift map $\sigma:\Sigma\to\Sigma$ defined by $\sigma(a_0,a_1,a_2,\dots)_i=(a_1,a_2,a_3,\dots)$ (in other words, if   $\ba=(a_i)_{i\in\NN}\in\Sigma$, 
then $(\sigma(\ba))_i=a_{i+1}$). Let $\Sigma_0$ be the subset of $\Sigma$ defined as follows~: 
$$\Sigma_0=\{\ba=(a_i)_{i\in\NN}\in\Sigma \mbox{ such that }a_{i+1}\neq a_i\mbox{ for every }i\}.$$
Note that $\Sigma_0$ is $\sigma$-invariant. We will construct a continuous ``almost one-to-one" map $h: \Sigma_0\to \cK$ such that $h\circ\sigma=f\circ h$. 

\medskip

\noindent \textit{Claim. For each sequence $\ba:=(a_i)_{i\in\NN}$ in $\Sigma_0$, there exists a unique point $p\in\cK$ such that $f^i(p)\in I_{a_i}$ for every $i\geq 0$.}

In order to prove the existence of $p$, one just needs to notice that the image under $f$ of each of the intervals $I_1$, $I_2$, $I_3$ is the union of the two other intervals. This implies that the intersection $\bigcap_{i=0}^N f^{-i}(I_{a_i})$ is non-empty for every $N$, and therefore, that the intersection $\bigcap_{i\in\NN} f^{-i}(I_{a_i})$ is non-empty. The existence of $p$ follows. In order to prove the uniqueness of $p$, observe that: for every $\epsilon>0$, there exists $\nu(\epsilon)>1$ such that $\||Df(p)\||_h\geq \nu(\epsilon)$ for every $x\in\cK$ such that $\mbox{dist}(x,T_i)>\epsilon$ for $i=1,2,3$. Hence, if $p\neq p'$ were two points such that $f^i(p)\in I_{a_i}$ and $f^i(p')\in I_{a_i}$ for every $i\geq 0$, then one would have $\mbox{dist}(f^{i}(p),f^{i}(p'))\rightarrow\infty$~; this is absurd since the lengths of $I_1$, $I_2$ and $I_3$ are finite. Hence there is at most one point $p$ in $\bigcap_{i\in\NN} f^{-i}(I_{a_i})$. This completes the proof of the claim. 

\bigskip

Now, we consider the map $h:\Sigma_0\to\cK$ which maps a sequence $\ba:=(a_i)_{i\in\NN}$ to the unique point $p\in\cK$ such that $f^i(p)\in I_{a_i}$ for every $i\geq 0$. This map $h$ obviously satisfies $h\circ\sigma=f\circ h$. It is continuous (this is an immediate consequence of the continuity of $f$) and onto (because the image under $f$ of one of the intervals $I_1,I_2,I_3$ is contained in the union of the two others). It is \emph{not} one-to-one. For example, the Taub point $T_3$ has two pre-images under $h$~:  the sequences $(1,2,1,2,1,2,\dots)$ and $(2,1,2,1,2,1,\dots)$. More generally, every point $x\in\cK$ such that $f^{i_0}(p)$ is a Taub point for some integer $i_0\geq 0$ (and thus $f^i(p)=f^{i_0}(p)$ for every $i\geq i_0$) has two pre-images under $h$, and these two pre-images are preperiodic for $\sigma$. This is the only lack of injectivity of $h$~: if $p\in\cK$ has not a single pre-image under $h$, then there exists $i_0\geq 0$ such that $f^{i_0}(p)$ is a Taub point (this follows from the fact that the intersection between two of the three intervals $I_1,I_2,I_3$ is reduced to a Taub point).

It follows that the image under $h$ of a closed $\sigma$-invariant aperiodic subset of $\Sigma_0$ is a closed  forward-invariant aperiodic subset of $\cK$. So we are left to prove that the union of all the closed $\sigma$-invariant and aperiodic subsets of $\Sigma_0$ is in dense in~$\Sigma_0$.

A element $\ba=(a_i)_{i\geq 0}$ of $\Sigma_0$ is said to be \emph{square-free} if it does not contain the same word repeated twice~: for every $i_0\geq 0$ and every $\ell>0$, the word $a_{i_0}\dots a_{i_0+\ell-1}$ is different from the word $a_{i_0+\ell}\dots a_{i_0+2\ell-1}$. It is well-known that there exist square-free elements in $\Sigma_0$ (such an element may be easily deduced from the well-known Prouhet-Thue-Morse sequence, see for example~\cite[corollary~1]{AlloucheShallit1999}). Now let $\ba=(a_i)_{i\geq 0}\in\Sigma_0$ be square-free, then the $\sigma$-orbit of $\ba$ does not accumulates on any periodic $\sigma$-orbit, hence the closure of the $\sigma$-orbit of $\ba$ is a closed  forward-invariant aperiodic subset of $\Sigma_0$. Moreover the same is true, if one replaces $\ba$ by a sequence $\ba'\in\Sigma_0$ which has the same tail as $\ba$ (\emph{i.e.} there exists $i_0$ such that $a'_i=a_i$ for $i\geq i_0$). The set of all sequences $\ba'$ which has the same tail has $\ba$ is obviously dense in $\Sigma_0$. Hence the union of all closed $\sigma$-invariant aperiodic subsets of $\Sigma_0$ is dense in $\Sigma_0$. As explained above, the proposition follows. 
\end{proof}


\end{document}